\title{Presburger arithmetic with threshold counting quantifiers is easy} 
\titlerunning{Presburger arithmetic with threshold counting quantifiers is easy} 
\author{Dmitry Chistikov}{Centre for Discrete Mathematics and its Applications (DIMAP) \&\\Department of Computer Science, University of Warwick, Coventry, UK}{d.chistikov@warwick.ac.uk}{https://orcid.org/0000-0001-9055-918X}{}
\author{Christoph Haase}{Department of Computer Science, University of Oxford, Oxford, UK}{christoph.haase@cs.ox.ac.uk}{https://orcid.org/0000-0002-5452-936X}{}
\author{Alessio Mansutti}{Department of Computer Science, University of Oxford, Oxford, UK}{alessio.mansutti@cs.ox.ac.uk}{https://orcid.org/0000-0002-1104-7299}{}
\authorrunning{D.~Chistikov, C.~Haase, A.~Mansutti} 
\keywords{Presburger arithmetic, counting quantifiers, quantifier elimination} 
\begin{document}

\maketitle

\begin{abstract}
    We give a quantifier elimination procedures for the extension of
Presburger arithmetic with a unary threshold counting quantifier
$\exists^{\ge c} y$ that determines whether the number of different
$y$ satisfying some formula is at least $c\in \Nat$, where $c$ is
given in binary. Using a standard quantifier elimination procedure for
Presburger arithmetic, the resulting theory is easily seen to be
decidable in \fourexptime. Our main contribution is to develop a novel
quantifier-elimination procedure for a more general counting
quantifier that decides this theory in \threeexptime, meaning that it
is no harder to decide than standard Presburger arithmetic. As a
side result, we obtain an improved quantifier elimination
procedure for Presburger arithmetic with counting quantifiers as
studied by Schweikardt [\emph{ACM Trans.\ Comput.\ Log.}, 6(3),
pp.~634-671, 2005], and a \threeexptime quantifier-elimination
procedure for Presburger arithmetic extended with a generalised modulo
counting quantifier.

\end{abstract}

\section{Introduction}

Counting the number of solutions to an equation,
or the number of elements in a set subject to constraints,
is a fundamental problem in mathematics and computer science.
In discrete geometry, a canonical \#P-complete problem is to count the
number of integral points in polyhedra.
The celebrated algorithm due to Barvinok~\cite{Bar94} solves
this problem in polynomial time in fixed dimension.
This and other powerful insights motivate the study of algorithmic aspects
of the more general problem of counting the number of models of
formulae in \emph{Presburger arithmetic,}
the first-order theory of the integers with
addition and order, and more generally to
considering \emph{counting extensions} of this logic.

It has long been known that the decision problem for Presburger arithmetic
itself
is recursively solvable~\cite{Pre29} and that there is a quantifier elimination
procedure for Presburger arithmetic running in \threeexptime~\cite{Opp78}. 
In this article, we
study quantifier elimination procedures for 
extensions of Presburger arithmetic with \emph{counting quantifiers}.
We are primarily interested in its extension 
with a unary threshold counting quantifier $\exists^{\ge c}y$, where
$c$ is given in \emph{binary}. Given an assignment of integers to the
first-order variables $y,z_1,\ldots,z_n$, a formula $\exists^{\ge
  c}y\, \Psi(y,z_1,\ldots,z_n)$ evaluates to true whenever there are
at least $c$ different values of~$y$ satisfying $\Psi(y,z_1,\ldots,z_n)$. Note
that the number of different~$y$ may depend on the
values of the variables $z_1,\ldots,z_n$. It is easily seen that a
formula $\exists^{\ge c}y \Psi(y,z_1,\ldots,z_n)$ can equivalently be
expressed as $\exists x_1\cdots\exists x_c\, \bigwedge_{1\le i\le c}
\Psi(x_i,z_1,\ldots,z_n) \wedge \bigwedge_{1\le i<j\le c} x_i \neq
x_j$. However, since $c$ is given in binary, this translation
incurs an exponential blow-up, and thus the extension
of Presburger arithmetic with an $\exists^{\ge c}y$ counting
quantifier can only be decided in
\fourexptime using the standard quantifier elimination procedure
for Presburger arithmetic.

The main contribution of this paper is to develop a novel
quantifier-elimination procedure that enables us to solve this
decision problem in~\threeexptime, i.e., at no additional cost
compared to standard Presburger arithmetic.

\vspace{-2ex}
\subparagraph*{The counting quantifiers
$\exists^{\ge x} y$
and
$\exists^{= x} y$%
.}
The unary threshold counting quantifier we consider is a special
instance of a more general unary counting quantifier $\exists^{\ge x}
y$, which itself generalises the counting quantifier $\exists^{= x} y$
for Presburger arithmetic studied by Apelt~\cite{A66} and
Schweikardt~\cite{Schweikardt05}. Given an assignment of integers to
the first-order variables $x,z_1,\ldots,z_n$, 
a formula
$\exists^{\ge x}y\, \Psi(x,y,z_1,\ldots,z_n)$ evaluates to true
whenever the number of different $y$
satisfying $\Psi(x,y,z_1,\ldots,z_n)$ is at least the value of $x$.
The semantics of the
$\exists^{=x}y$ counting quantifier is defined analogously, but
observe that $\exists^{=x} y$ does not hold when there is an infinite
number of different $y$. Both~\cite{A66} and~\cite{Schweikardt05} show
decidability of Presburger arithmetic extended with the counting
quantifier $\exists^{=x}y$ by establishing a quantifier-elimination
procedure.

For our \threeexptime algorithm
for Presburger arithmetic extended with the $\exists^{\ge c} y$
counting quantifier, we develop a novel quantifier-elimination procedure
for the most general $\exists^{\ge x}y$ counting quantifier. While
this procedure \emph{a priori} runs in non-elementary time, we show
that it can be performed in \threeexptime when specialised to
$\exists^{\ge c}y$ counting quantifiers.



We remark that, crucially,
such counting quantifiers are always \emph{unary}, so as to
keep the logic decidable. Indeed, consider a binary counting quantifier
$\exists^{=x}(y_1,y_2)$ counting the number of different $y_1$ and
$y_2$ satisfying a formula. Then the formula ${\Phi(x,z) =
\exists^{=x}(y_1,y_2)(0\le y_1,y_2 < z)}$ holds for $x=z^2$, which in
turn allows one to define multiplication, leading to undecidability of
the resulting first-order theory.
Note that for threshold counting, in contrast, non-unary quantifiers
do not lead to undecidability. Thus, our results lead to the problem
of eliminating such quantifiers in a resource-efficient manner, which
we leave open.

\vspace{-2ex}
\subparagraph*{The counting quantifier $\exists^{(r,q)}y$.}
It is wide open whether there is an algorithm with elementary
running time that decides Presburger arithmetic extended with $\exists^{=
  x}y$ or $\exists^{\ge x}y$ counting quantifiers. Moreover, at
present, no stronger lower bounds than those established for plain
Presburger arithmetic are known~\cite{Ber80}. To shed
more light on the complexity of Presburger arithmetic with an
$\exists^{=x}y$ counting quantifier,
Habermehl and Kuske~\cite{HabermehlK15} gave a
quantifier-elimination procedure for eliminating a unary \emph{modulo
  counting} quantifier $\exists^{(r,q)}y$: here $\exists^{(r,q)}y\,
\Psi(y,z_1,\ldots,z_n)$ holds whenever the number of different $y$
satisfying $\Psi(y,z_1,\ldots,z_n)$ is congruent to $r$ modulo $q$.
An analysis of the growth of the constants and coefficients occurring
in their quantifier-elimination procedure then enables them to derive
an automata-based \threeexptime algorithm for deciding Presburger
arithmetic extended
with the $\exists^{(r,q)}y$ counting quantifier.
As a side result, we show that
our quantifier elimination procedure gives a \threeexptime upper bound for
Presburger arithmetic with a generalised
modulo counting quantifier $\exists^{(x,q)} y\,
\Psi(x,y,z_1,\ldots,z_n)$ that evaluates to true when $x$ is congruent
modulo $q$ to the number of different $y$ satisfying
$\Psi$, thereby strictly generalising the result of Habermehl and Kuske.

\vspace{-2ex}
\subparagraph*{Key techniques.} 
An advantage of our quantifier elimination procedure for the
$\exists^{\ge x}y$ and $\exists^{= x}y$ counting quantifiers is that
it avoids the introduction of additional
$\exists$- and $\forall$-quantifiers when eliminating a counting
quantifier on which
Schweikardt's procedure~\cite{Schweikardt05}
relies.
Her quantifier-elimination
procedure replaces a counting quantifier $\exists^{=x} y$ with an
equivalent quantified formula of Presburger arithmetic and requires a
full transformation into disjunctive normal form.

One key technique we employ is that, for Presburger arithmetic, it is
possible to transform any Boolean combination of inequalities 
into a ``disjoint''
disjunctive normal form in polynomial time when the number of
variables is fixed, see e.g.~\cite{Sca84,Woods15}. This enables us to
make use of a highly desirable disjunctive normal form for counting
purposes without the drawback of non-elementary growth that repeated
translation into usual
disjunctive normal form normally entails.
Another crucial ingredient of our quantifier eliminating procedure is
evaluation of counting functions on bounded segments during the
elimination process. This enables us to circumvent the introduction of
additional standard first-order quantifiers that occurs in
Schweikardt's procedure.

Putting these two ingredients together,
we obtain a procedure that, in an analogue of nondeterministic guessing,
pre-evaluates atomic predicates in the input formula.
When a quantifier is eliminated, our ``almost evaluated'' formula
is reduced to subformulae that are introduced during this ``guessing''
and evaluation of the counts.

\vspace{-2ex}
\subparagraph*{Further related work.}
The counting
quantifiers considered in this paper are derived from the so-called
\emph{H\"artig quantifier} that enables reasoning about
equicardinality between the sets defined by two formulae of
first-order logic~\cite{HKPV91}.
For Presburger arithmetic, the aforementioned undecidability
result imposes tight restrictions on how first-order variables
counting integral points can be used. Woods~\cite{Woods15} studied
properties of \emph{Presburger counting functions}: given a formula
$\Phi(x_1,\ldots,x_m,p_1,\ldots,p_n)$ of Presburger arithmetic, the
Presburger counting function associated to $\Phi$ is the function
\[
g_\Phi(p_1,\ldots,p_n) = \#\left\{ (x_1,\ldots,x_m)\in \Nat^m:
\Phi(x_1,\ldots,x_m,p_1,\ldots,p_n) \right\}\,.
\]
In~\cite{Woods15}, Woods shows that $g_\Phi$ is a so-called
\emph{piecewise quasi-polynomial}. Computing such quasi-polynomials is
of high relevance, for instance in numerous compiler optimisation
approaches, see e.g.~\cite{VSBLB07} and the references therein. More
generally, Bogart et al.~\cite{BGNW19} have recently studied
quantifier elimination procedures and counting problems for parametric
Presburger arithmetic.

\section{Presburger arithmetic with counting quantifiers}
\label{section:preliminaries}

\subparagraph*{General notation.}
The symbols~$\Zed$, $\Nat$ and~$\PNat$ denote the set of integers, natural numbers including zero, and natural numbers without zero, respectively. 
We usually use~$a,b,c,\dots$ for integer numbers, which we assume being encoded in binary.
Given $n \in \Nat$, we write $[n] \egdef \{0,\dots,n-1\}$.
We write $\card{\aset}$ for the cardinality of a set~$\aset$. 
If $\aset$~is infinite, then $\card{\aset} = \infty$,
and we postulate $n \le \infty$ for all $n \in \Zed$.

\vspace{-2ex}
\subparagraph*{Structure.}
We consider the structure $\mathcal{Z} = \langle\Zed, (c)_{c \in \Zed}, +, <, (\equiv_q)_{q \in \PNat}\rangle$, where $(c)_{c \in \Zed}$ are
constant symbols that shall be interpreted as their homographic integer numbers, the binary function symbol~$+$ is interpreted as addition on~$\Zed$, the binary relation~$<$ is interpreted as ``less than'',
and $\equiv_q$ is interpreted as the modulo relation, i.e., $a \equiv_q b$ iff $q$ divides $a-b$.

\vspace{-2ex}
\subparagraph*{Basic syntax.}
Let $\asetvar = \{\avar,\avarbis,\avarter,\dots\}$ be a countable set of first-order variables.
\emph{Linear terms}, usually denoted by~$\aterm$, $\aterm_1$, $\aterm_2$, etc., are expressions of the form $a_1\avar_1 + \dots a_d\avar_d + c$ where $\avar_1,\dots,\avar_d \in \asetvar$, $a_1,\dots,a_d,c \in \Zed$. The integer $a_i$ is the \emph{coefficient} of the variable $\avar_i$. Variables not appearing in the linear term are tacitly assumed to have a $0$ \emph{coefficient}.
A term~$\aterm$ is said to be $\avar$\emph{-free} if the coefficient of the variable~$\avar$ in~$\aterm$ is $0$.
The integer $c$ is the \emph{constant} of the linear term.
Linear terms with constant $0$ are said to be \emph{homogeneous}.

Given a term $\aterm$, the lexeme $\aterm < 0$ is understood as a \emph{linear inequality}, and $\aterm \equiv 0\bmod q$ is a \emph{modulo constraint}.
Syntactically,
Presburger arithmetic with counting quantifiers (PAC) is the closure of linear inequalities and modulo constraints under the Boolean connectives~$\land$ and~$\lnot$ (i.e.~conjunction and negation, respectively), the \emph{first-order quantifier}~$\exists\avarbis$ and the \emph{(unary) counting quantifier} $\exists^{\geq \avar}\avarbis$, where $\avar,\avarbis \in \asetvar$. We assume the two variables~$\avar$ and~$\avarbis$ appearing in a counting quantifier to be syntactically different. Formulae of PAC are denoted by~$\Aformula$, $\aformula$, $\aformulabis$, $\aformulater$, etc.
We write $\vars{\aformula}$ and $\freevars{\aformula}$ for the set of variables and free variables of~$\aformula$, respectively. For the counting quantifier, we have $\freevars{\exists^{\geq \avar}\avarbis\,\aformula} = \{\avar\} \cup (\freevars{\aformula} \setminus \{\avarbis\})$.
We say that a formula~$\aformula$ is $\avarter$\emph{-free} if $\avarter \in \asetvar$ does not occur in~$\aformula$.
Given linear terms~$\aterm$ and $\aterm'$, we write $\aformula\substitute{\aterm}{\aterm'}$ for the formula obtained from $\aformula$ by syntactically replacing every occurrence of~$\aterm$ by~$\aterm'$.

\vspace{-2ex}
\subparagraph*{Semantics.} An \emph{assignment} is a function $\aeval \colon \asetvar \to \Zed$ assigning an integer value to every variable.
As usual, we extend $\aeval$ in the standard way to a function that maps every term to an element of~$\Zed$. For instance, $\aeval(\avar + 3\avar + 2) = \aeval(x) + 3\aeval(y) + 2$.
Given a variable $\avar$ and an integer $n$, we write~$\aeval\substitute{\avar}{n}$ for the assignment obtained form $\aeval$ by updating the value of $\avar$ to $n$, i.e.~$\aeval\substitute{\avar}{n}(\avar) = n$, and for all variables $\avarbis$ distinct from $\avar$, $\aeval\substitute{\avar}{n}(\avarbis) = \aeval(\avarbis)$.
Given a formula~$\aformula$ of PAC and an assignment~$\aeval$, the satisfaction relation~$\aeval \models \aformula$ is defined as usual for linear inequalities, modulo constraints, Boolean connectives and the existential quantifier ranging over~$\Zed$. For the counting quantifier, we define
\begin{center}
  $\aeval \models \exists^{\geq \avar} \avarbis\, \aformula$
  if and only if 
  $\card{\{n \in \Zed \mid \aeval\substitute{\avarbis}{n} \models \aformula\}} \geq \aeval(\avar)$. 
\end{center}
Informally, $\exists^{\geq \avar} \avarbis\, \aformula$ is satisfied by~$\aeval$ whenever there are at least $\aeval(\avar)$ distinct values for the variable $\avarbis$ that make the formula $\aformula$ true.
A formula $\aformula$ of PAC is \emph{satisfiable} whenever there is an assignment $\aeval$ such that $\aeval \models \aformula$. Two formulae $\aformula$ and $\aformulabis$ are equivalent, written $\aformula \fequiv \aformulabis$, whenever they are satisfied by the same set of assignments.

\vspace{-2ex}
\subparagraph*{Syntactic abbreviations.} We define~$\false\, \egdef 0 < 0$ and~$\true \egdef {\lnot \false}$. The Boolean connectives~$\lor$, $\rightarrow$ and $\fequiv$ and the universal first-order quantifier~$\forall$ are derived as usual, and so are the (in)equalities $<$, $\leq$, $=$, $\geq$, and $>$, between terms. For instance, $\aterm_1 < \aterm_2$ corresponds to $\aterm_1 - \aterm_2 < 0$, where we tacitly manipulate $\aterm_1 - \aterm_2$ with standard operation of linear arithmetic in order to obtain an equivalent term.
Given two terms~$\aterm_1$ and $\aterm_2$, and~$q \in \PNat$, we write $\aterm_1 \equiv_{q} \aterm_2$ for the modulo constraint $\aterm_1 - \aterm_2 \equiv 0 \bmod q$. For a variable~$\avar \in \asetvar$ and~$r \in [q]$, we call $\avar \equiv_q r$ a \emph{simple} modulo constraint. All modulo constraints introduced by our quantifier elimination procedure given in~\Cref{section:quantifier-elimination} are simple.

We now introduce the counting quantifiers~$\exists^{\geq c}\avarbis$, $\exists^{=\avar}\avarbis$ and $\exists^{(\avar,q)}\avarbis$, where $\avar,\avarbis \in \asetvar$, $c \in \Zed$ and $q \geq 1$, and both $c$ and $q$ are encoded in binary.
Let $\aeval$ be an assignment.
Informally, $\aeval \models \exists^{\geq c} \avarbis\,\aformula$ if and only if there are at least $c$ values for the variable~$\avarbis$ that make $\aformula$ true.
Similarly, $\aeval \models \exists^{= \avar} \avarbis\,\aformula$ 
if and only if there are exactly $\aeval(\avar)$ values for the variable $\avarbis$ that make~$\aformula$ true. 
Finally, $\aeval \models \exists^{(\avar,q)} \avarbis\,\aformula$ 
if and only if the number of values for the variable $\avarbis$ that make $\aformula$ true is congruent to~$\aeval(\avar)$ modulo $q$.
The formal definition of these three counting quantifiers is given below, where $\avarter$ is a variable not occurring in $\freevars{\aformula}$,
\begin{align*}
    \exists^{\geq c} \avarbis\,\aformula &\ \egdef \ \exists \avarter\,(\avarter = c \land \exists^{\geq \avarter} \avarbis\,\aformula),\\
    \exists^{= \avar} \avarbis\,\aformula &\ \egdef \ (\exists^{\geq \avar}\avarbis\,\aformula) \land \lnot \exists \avarter\,(\avarter = \avar+1 \land \exists^{\geq \avarter}\,\aformula),\\
    \exists^{(\avar,q)}\avarbis\, \aformula &\ \egdef \ \exists\avarter\, (\avarter \equiv_q \avar \land \exists^{=\avarter} \avarbis\, \aformula).
\end{align*}
The counting quantifier~$\exists^{=\avar}\avarbis$ is the counting quantifier considered in~\cite{Schweikardt05}, whereas the  
\emph{modulo counting quantifier}~$\exists^{(\avar,q)}\avarbis$ is a generalisation of the quantifier~$\exists^{(r,q)}\avarbis$ introduced in~\cite{HabermehlK15}, where $r$ is a fixed natural number instead of a variable. More precisely, $\exists^{(r,q)}\avarbis\,\aformula$ is equivalent to $\exists \avar\,(\avar \equiv_q r \land \exists^{(\avar,q)}\avarbis\,\aformula)$.
Finally, notice that the formula $\exists \avarbis\, \aformula$ is equivalent to $\exists^{\geq 1} \avarbis\, \aformula$. This fact enables us to eliminate a standard first-order existential quantifier by slightly tweaking the quantifier-elimination procedure for~$\exists^{\geq c}\avarbis\, \aformula$.

\vspace{-2ex}
\subparagraph*{Parameters of formulae.}
Following Oppen~\cite{Opp78} and Weispfenning~\cite{Weispfenning90}, we establish bounds on the absolute value of the 
variable assignments that suffice for deciding satisfiability of~$\aformula$.
To this end, we introduce a set of parameters of formulae of PAC:
\begin{itemize}
\item $\abs{\aformula}$ denotes the length of the formula $\aformula$, i.e., the number of symbols to write down $\varphi$, with numbers encoded in binary,
\item
$\linterms{\aformula}$ is the set of all linear terms $\aterm$ that appear in a linear inequality~$\aterm < 0$ of~$\aformula$ (recall that $\aterm_1 < \aterm_2$ is syntactic sugar for~$\aterm_1 - \aterm_2 < 0$),
\item
  $\homterms{\aformula}$ is the set of homogeneous linear terms obtained from all terms in $\linterms{\aformula}$ by setting their constants to $0$, and
\item
$\fmod{\aformula}$ is the set of all moduli~$q \in \Nat$ appearing in a modulo constraint $\aterm_1 \equiv_q \aterm_2$ of $\aformula$.
We always assume~$1 \in \fmod{\aformula}$, even if $\aformula$ has no modulo constraints.
\end{itemize}
For $\aset \subseteq \Zed$ finite set, we write $\norminf{\aset} = \max\{\abs{n} \mid n \in \aset\}$ for the \emph{absolute-value norm} of~$\aset$.
For a term~$\aterm$, $\norminf{t}$ it the maximum coefficient or constant appearing in $t$, in absolute value. For a set of terms $T$, $\norminf{T} \egdef \max\{\norminf{t} \mid t \in T\}$. For a formula~$\aformula$, we define $\norminf{\aformula}$ as~$\norminf{\linterms{\aformula} \cup \fmod{\aformula}}$.

\section{A quantifier elimination procedure for~unary counting quantifiers}
\label{section:quantifier-elimination}
In this section, we develop a quantifier elimination procedure (QE procedure) for the counting quantifier~$\exists^{\geq \avar}\avarbis$ that allows us to establish the following result.

\begin{theorem}
    \label{theorem:counting-quantifier-elimination}
    Let $\aformula$ be quantifier-free. 
    Then~$\exists^{\geq \avar}\avarbis\, \aformula$ is equivalent to a Boolean combination of linear inequalities and simple modulo constraints.
\end{theorem}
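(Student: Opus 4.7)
The overall strategy is to reduce counting $y$-solutions of $\varphi$ to counting integers in a finite family of intervals intersected with residue classes, where the endpoints and residue offsets become explicit linear terms after a bout of case analysis. Once the total count is written as a linear function of the free variables in each case, the threshold comparison with $x$ is just one linear inequality.

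\textbf{Step 1: Residue pre-evaluation and disjoint DNF.} Let $Q$ be a common multiple of all moduli appearing in $\varphi$ and, in addition, of all $y$-coefficients appearing in linear inequalities of $\varphi$. I would first case-split on the residues modulo $Q$ of every variable in $\varphi$, including $y$, using simple modulo constraints $z \equiv_Q r_z$. Once these residues are fixed, every modulo constraint of $\varphi$ has a determined truth value and can be deleted, leaving a Boolean combination of linear inequalities only. To such a combination I would apply the polynomial-time (in fixed dimension) transformation to \emph{disjoint} DNF~\cite{Woods15}, obtaining $\varphi \fequiv \bigvee_i \psi_i$ with the solution sets of the $\psi_i$ pairwise disjoint. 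Disjointness is the key property, since it means $\card{\{y : \varphi\}} = \sum_i \card{\{y : \psi_i\}}$ without inclusion--exclusion.

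\textbf{Step 2: Counting each disjunct.} Each $\psi_i$ is a conjunction of linear inequalities; separating out the $y$-dependent atoms yields ``lower bounds'' $\ell_k(\bar z) < a_k y$ and ``upper bounds'' $b_k y < u_k(\bar z)$. To eliminate the implicit $\max$ and $\min$ over these bounds, I case-split further on the pairwise order of all $\ell_k/a_k$ and all $u_k/b_k$, which is itself a Boolean combination of linear inequalities (obtained by clearing denominators, e.g., comparing $a_{k'}\ell_k$ with $a_k\ell_{k'}$). Within each cell of this arrangement, the active lower bound $L_i$ and upper bound $U_i$ are explicit linear terms, so the set of admissible $y$ is $(L_i,U_i)$ intersected with the already-fixed residue class $y \equiv_Q s$. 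Because $Q$ is divisible by every $a_k, b_k$ and the residues of all variables are known, the closed-form count of $y$ in this intersection unfolds into a single linear expression in $L_i$ and $U_i$, with no residual floor/ceiling or modulo operation; if a lower or upper bound is absent from $\psi_i$, the count is $\infty$.

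\textbf{Step 3: Assembling the threshold.} In each full case (residue choice plus arrangement cell) the total count $\sum_i \card{\{y : \psi_i\}}$ is either $\infty$ (as soon as some $\psi_i$ is unbounded in $y$) or a single linear expression $N(\bar z)$ in the free variables. The predicate $\exists^{\geq \avar} \avarbis\, \aformula$ reduces in that case to $\true$ or to $N(\bar z) \geq \avar$. Taking the disjunction over all cases yields the desired equivalent Boolean combination of linear inequalities and simple modulo constraints.

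\textbf{Main obstacle.} The subtle step is Step 2, where the count inside each cell must actually simplify to a \emph{linear}, not merely quasi-linear, function of the free variables. This is exactly what forces the choice of $Q$ to absorb all moduli and all $y$-coefficients simultaneously: once this is done, the floor functions implicit in $\card{\{y \in (L,U) : y \equiv s \bmod Q\}}$ unfold cleanly under the pre-selected residues. A secondary technical point is handling the boundary between finite and infinite counts uniformly across cells; the disjoint DNF step is precisely what makes this manageable, by ensuring each disjunct contributes additively and no inclusion--exclusion ever needs to be expressed inside the formula.
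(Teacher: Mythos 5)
Your overall plan matches the paper's: pre-evaluate residue classes, partition into a disjoint arrangement so the counts add up without inclusion--exclusion, express the count in each cell as a linear function of the free variables, and compare against the threshold. The main differences are cosmetic orderings of the same ideas (disjoint DNF/arrangement of bound terms versus the paper's orderings of $y$-free terms and segments; fixing $y$'s residue and summing versus the paper's density parameter $p_j$).

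However, there is a genuine gap in Step~1/2: you never normalise the coefficients of $y$, and ``absorbing'' them into $Q$ does not suffice. After fixing all variable residues modulo $Q$ (a multiple of $a_k$), you learn $\ell_k \bmod a_k$, hence you can write the integer lower bound as $L_0 = (\ell_k + c)/a_k$ for a known constant $c$. But computing $\card{\{y \geq L_0 : y \equiv_Q s,\ y \leq U_0\}}$ requires knowing $L_0 \bmod Q$, which requires $\ell_k \bmod (a_k Q)$ — strictly more information than the residue of $\ell_k$ modulo $Q$. Concretely, with $\gcd(a_k,Q)=a_k$, the congruence $a_k L_0 \equiv \ell_k + c \pmod{Q}$ leaves $L_0 \bmod Q$ undetermined up to $a_k$ possibilities. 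So the ``closed-form count'' you write is not actually a linear function of the free variables under only the $Q$-residue information, and the crux of your Main Obstacle paragraph is not yet resolved. The paper sidesteps this by a preliminary normalisation step: it multiplies the formula through by $k=\operatorname{lcm}$ of the $y$-coefficients, replaces $ky$ by a fresh $y$, and adds the constraint $y \equiv_k 0$; after that, every constraint on $y$ has coefficient $\pm 1$ and your argument goes through verbatim. You could alternatively rescue your version by fixing residues modulo a larger modulus such as $\operatorname{lcm}(a_k)\cdot Q$, but without some such fix the claim that the per-cell count ``unfolds cleanly'' is unjustified.
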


Our QE procedure perform a series of formula manipulations that we divide into five steps. At the end of the $i$-th step, the procedure produces a formula
$\Aformula_i$ equivalent to the original formula $\exists^{\geq \avar}\avarbis\,\aformula$. 
Ultimately, $\Aformula_5$ is a Boolean combination of inequalities and simple modulo constraints allowing us to establish~\Cref{theorem:counting-quantifier-elimination}.
In this section, we present the procedure and briefly discuss its correctness, leaving the computational analysis of parameters~$\linterms{\Aformula_5}$, $\homterms{\Aformula_5}$ and~$\fmod{\Aformula_5}$
to subsequent sections.

\subparagraph*{Step \Roman{eliproc}: Normalise the coefficients of the variable~$\avarbis$.}
Given the input formula~$\Aformula_0 = \exists^{\geq \avar}\avarbis\, \aformula$,
the first step of the procedure is a standard step for QE procedures for Presburger arithmetic. It produces an equivalent formula $\Aformula_1$ in which all non-zero coefficients of~$y$ appearing in a linear term are normalised to $1$ or $-1$. 
For simplicity, we first translate every modulo constraint in~$\aformula$ into simple modulo constraints, by relying on the lemma below.%
\begin{restatable}{lemma}{LemmaMakeModuloSimple}
    \label{lemma:make-modulo-simple}
    Every modulo constraint~$\aterm \equiv_q 0$ is equivalent to a disjunction~$\aformulabis$ of simple modulo constraints such that~$\vars{\aformulabis} \subseteq \vars{\aterm \equiv_q 0}$ and $\fmod{\aformulabis} = \{q\}$.
\end{restatable}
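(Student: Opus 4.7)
Write $\aterm = a_1 \avar_1 + \dots + a_d \avar_d + c$ where $\avar_1, \dots, \avar_d$ are the variables occurring in $\aterm$ (so $\vars{\aterm \equiv_q 0} = \{\avar_1,\dots,\avar_d\}$). The key observation is that the value of $\aterm \bmod q$ depends only on the residues $\avar_i \bmod q$: if $\aeval(\avar_i) \equiv r_i \bmod q$ for each $i$, then $\aeval(\aterm) \equiv a_1 r_1 + \dots + a_d r_d + c \bmod q$. Hence the assignment $\aeval$ satisfies $\aterm \equiv_q 0$ if and only if the residue tuple $(r_1,\dots,r_d) \in [q]^d$ it induces lies in the set
\[
R \ \egdef \ \bigl\{ (r_1,\dots,r_d) \in [q]^d \ \bigm| \ a_1 r_1 + \dots + a_d r_d + c \equiv 0 \bmod q \bigr\}.
\]

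I would then take $\aformulabis$ to be the disjunction
\[
\aformulabis \ \egdef \ \bigvee_{(r_1,\dots,r_d) \in R} \bigl(\avar_1 \equiv_q r_1 \ \land \ \dots \ \land \ \avar_d \equiv_q r_d\bigr),
\]
which is a Boolean combination (specifically, a disjunction of conjunctions) of simple modulo constraints of the form $\avar_i \equiv_q r_i$ with $r_i \in [q]$. Equivalence with $\aterm \equiv_q 0$ follows directly from the observation above: $\aeval \models \aterm \equiv_q 0$ iff $(\aeval(\avar_1) \bmod q, \dots, \aeval(\avar_d) \bmod q) \in R$ iff $\aeval \models \aformulabis$. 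By construction $\vars{\aformulabis} \subseteq \{\avar_1,\dots,\avar_d\} = \vars{\aterm \equiv_q 0}$ and $\fmod{\aformulabis} = \{q\}$.

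There is no real obstacle here; the argument is a direct unfolding of the fact that $\equiv_q$ is a congruence for linear combinations. The only subtlety worth flagging is the mild terminological point that $\aformulabis$ is technically a DNF of simple modulo constraints rather than a flat disjunction, but this is exactly what the statement intends since a single clause $x_i \equiv_q r_i$ alone cannot in general express $\aterm \equiv_q 0$ when $\aterm$ mentions several variables. Note also that the size of $\aformulabis$ may be as large as $q^d$ disjuncts; the lemma makes no complexity claim, and subsequent sections of the paper will be responsible for any quantitative analysis of the moduli and number of residue classes generated.
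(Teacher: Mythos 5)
Your proof is correct and matches the paper's own argument: the paper likewise enumerates residue classes modulo $q$ for the variables in $\aterm$, substitutes them into $\aterm \equiv_q 0$ to decide each disjunct, and keeps only the satisfiable ones; your set $R$ is just a direct filtering of that enumeration. Your side remark about the formula being a DNF rather than a flat disjunction also agrees with the paper's construction, which produces exactly such a disjunction of conjunctions.
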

Here is the first step of the procedure:
\begin{itemize}
    \setlength{\itemsep}{2pt}
    \item Using~\Cref{lemma:make-modulo-simple}, translate every modulo constraint in~$\aformula$ into simple modulo constraints.
    \item Let $k$ be the \emph{lcm} of the absolute values of all coefficients of~$y$ appearing in $\homterms{\aformula}$.
    \item Let $\aformula'$ be the formula obtained from~$\aformula$ by applying the following three rewrite rules to each linear inequality and simple modulo constraint in which $\avarbis$~appears, where $\aterm$ is a term, $q\geq1$ and $r \in [q]$:
      \begin{itemize}
      \item
        $a\avarbis + \aterm < 0 \ \longrightarrow \  k\avarbis + (k/a)\cdot \aterm < 0$, \quad if $a > 0$,
      \item
        $a\avarbis + \aterm < 0 \ \longrightarrow \ -k\avarbis - (k/a)\cdot \aterm < 0$, \quad if $a < 0$, and
      \item
        $\avarbis \equiv_q r \ \longrightarrow \ k\avarbis \equiv_{kq} kr$.
      \end{itemize}
    \item Let $\Aformula_1 \egdef \exists^{\geq \avar}\avarbis\, (\avarbis \equiv_k 0 \land \aformula'\substitute{k\avarbis}{\avarbis})$.
\end{itemize}

\begin{restatable}{claim}{claimPsiOne} 
    \label{claim:psi1}
    $\Aformula_0 \fequiv \Aformula_1$, and in~$\Aformula_1$, all non-zero coefficients of~$y$ are either~$1$ or~$-1$.
\end{restatable}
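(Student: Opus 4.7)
The plan is to verify the equivalence $\Aformula_0 \fequiv \Aformula_1$ in two conceptual steps that mirror the construction: first, that the rewrites on inequalities and simple modulo constraints produce a formula $\aformula'$ equivalent to $\aformula$ in which every non-zero coefficient of~$\avarbis$ is $\pm k$; and second, that adding the guard $\avarbis \equiv_k 0$ and syntactically substituting $k\avarbis \mapsto \avarbis$ does not change the cardinality measured by the counting quantifier $\exists^{\geq \avar}\avarbis$.

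For the first step I would argue rule-by-rule that each rewrite is a semantic equivalence. The inequality $a\avarbis + \aterm < 0$ with $a>0$ is obtained from $k\avarbis + (k/a)\cdot\aterm < 0$ by multiplication by the positive integer $k/a$, which preserves the strict inequality; the case $a<0$ is analogous, except that multiplication by the positive integer $k/|a| = -k/a$ yields $-k\avarbis - (k/a)\cdot\aterm < 0$. For simple modulo constraints, the equivalence $\avarbis \equiv_q r \fequiv k\avarbis \equiv_{kq} kr$ follows from $q \mid (\avarbis - r) \iff kq \mid k(\avarbis - r)$, which holds since $k\in\PNat$; moreover $kr \in [kq]$ since $r \in [q]$, so the new constraint is still simple. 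Having applied~\Cref{lemma:make-modulo-simple} to reduce all modulo constraints to simple ones beforehand ensures these three rules suffice, so $\aformula \fequiv \aformula'$.

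For the second step, fix any assignment~$\aeval$ and consider the sets
\[
S \egdef \{n\in\Zed : \aeval\substitute{\avarbis}{n} \models \aformula\}, \qquad
T \egdef \{n\in\Zed : \aeval\substitute{\avarbis}{n} \models \avarbis \equiv_k 0 \land \aformula'\substitute{k\avarbis}{\avarbis}\}.
\]
The claim reduces to showing $\card{S} = \card{T}$, since then $\aeval \models \Aformula_0$ iff $\card{S}\geq \aeval(\avar)$ iff $\card{T}\geq \aeval(\avar)$ iff $\aeval \models \Aformula_1$. The map $m \mapsto km$ is the required bijection between $S$ and $T$: it is injective because $k \geq 1$; every image lies in $T$ because $km \equiv_k 0$ and the effect of setting $\avarbis = km$ in $\aformula'\substitute{k\avarbis}{\avarbis}$ is exactly that of setting $\avarbis = m$ in $\aformula'$, which by the first step holds iff $m \in S$; and surjectivity holds because every $n \in T$ satisfies $k\mid n$, so $n = km$ for some $m$, and the same observation gives $m\in S$.

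Finally, by construction every non-zero occurrence of~$\avarbis$ in~$\aformula'$ comes inside a term $\pm k\avarbis$, so the substitution $\aformula'\substitute{k\avarbis}{\avarbis}$ reduces every such coefficient to $\pm 1$, establishing the second half of the claim. I expect the only subtle point to be the bijection step, whose content is the correct matching of the syntactic substitution $k\avarbis \mapsto \avarbis$ with the semantic change of variable from~$m$ to~$km$; the remaining verifications amount to the routine arithmetic manipulations catalogued above.
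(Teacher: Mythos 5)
Your proposal is correct and the mathematical content matches the paper's argument: the crux in both is that the map $m \mapsto km$ is a bijection between the solution sets, together with rule-by-rule verification that the inequality and modulo rewrites are semantic equivalences. The paper packages this bijection as a chain of syntactic equivalences — it temporarily introduces a fresh variable $z$ via $\exists z\,(z = ky \land \cdots)$, swaps the counting quantifier from $y$ to $z$ (justified, as in your proof, by the injectivity of $y \mapsto ky$), and then observes that $\exists y\,(z = ky \land \cdots)$ collapses to $z \equiv_k 0 \land \cdots$ — whereas you make the same bijection explicit directly at the level of the sets $S$ and $T$; this is a presentational difference, not a different route.
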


\vspace{-2ex}
\stepcounter{eliproc}
\subparagraph*{Step \Roman{eliproc}: Subdivide the formula according to term orderings and residue classes.}
We define an \emph{ordering of $n$~linear terms} to be a formula of the form
\begin{equation}
\label{eq:ordering}
        (\aterm_1 \inormod_1 \aterm_{2}) \land
        (\aterm_2 \inormod_2 \aterm_{3}) \land \dots \land
        (\aterm_{n-1} \inormod_{n-1} \aterm_n),
\end{equation}
where $\{t_1, \dots, t_n\}$ is the set being ordered and
$\{\inormod_1,\dots,\inormod_{n-1}\} \subseteq \{<, = \}$.

\begin{restatable}{lemma}{lemmaBoundNumberOfOrderings}
    \label{lemma:bound-number-of-orderings}
    There is an algorithm that,
    given a set $T$ of $n$ linear terms over~$d$~variables, 
    computes in time 
    $n^{\mathcal O (d)} \log \norminf{T}^{\mathcal O (1)}$ a set 
    $\{\ordering_1,\dots,\ordering_o\}$ of orderings for~$T$ such that 
    (I)~$\bigvee_{i \in [1,o]} \ordering_i$ is a tautology,
    (II) for every $i \neq j$ in $[1,o]$, $\ordering_{i} \land \ordering_{j}$ is unsatisfiable, 
    and (III) $o = \mathcal O(n^{2 d})$.
\end{restatable}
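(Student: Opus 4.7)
The plan is to decompose~$\mathbb{R}^d$ by the arrangement of pairwise-equality hyperplanes of the terms in~$T$ and read an ordering off each face. Concretely, let~$\mathcal{A}$ be the arrangement of the $m \le \binom{n}{2}$ hyperplanes $H_{ij} = \{v \in \mathbb{R}^d : \aterm_i(v) = \aterm_j(v)\}$, one per pair $1 \le i < j \le n$. Each relatively open face~$F$ of~$\mathcal{A}$ is determined by a sign vector $\sigma_F \colon \{(i,j) : i<j\} \to \{-,0,+\}$ recording the sign of $\aterm_i - \aterm_j$ on~$F$. From~$\sigma_F$ one reads off a weak order on $\{\aterm_1,\dots,\aterm_n\}$, and from this a canonical ordering of the form~\eqref{eq:ordering} is produced by listing the equivalence classes in strictly increasing order (consecutive classes separated by~$<$) and the members of each class by index (consecutive members separated by~$=$).

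Since the relatively open faces of~$\mathcal{A}$ partition~$\mathbb{R}^d$, every integer assignment lies in exactly one face and thus satisfies exactly one produced ordering, giving~(I); distinct faces have distinct sign vectors and thus give rise to mutually unsatisfiable orderings, giving~(II). For~(III) the plan is to invoke the classical bound on the number of faces of all dimensions of an arrangement of~$m$ hyperplanes in~$\mathbb{R}^d$, namely $\sum_{k=0}^{d}\binom{m}{k} = \mathcal{O}(m^d) = \mathcal{O}(n^{2d})$.

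For the running time I would appeal to a standard cell-enumeration procedure for hyperplane arrangements (for instance, incremental construction or reverse search), which traverses the faces of~$\mathcal{A}$ at polynomial cost per face in~$m$, $d$, and the bit-size of the defining coefficients. Since the coefficients of each~$H_{ij}$ are differences of coefficients of~$\aterm_i$ and~$\aterm_j$ and thus have bit-size $\mathcal{O}(\log \norminf{T})$, each sign-determination or interior-point computation (a low-dimensional linear program) takes time $\log \norminf{T}^{\mathcal{O}(1)}$, yielding the total budget $n^{\mathcal{O}(d)} \log \norminf{T}^{\mathcal{O}(1)}$. The main obstacle is to avoid the naive enumeration of all~$3^{\binom{n}{2}}$ sign patterns followed by feasibility checks, which would be exponential in~$n$; the arrangement-traversal approach sidesteps this by moving between combinatorially adjacent faces (flipping one sign at a time) and visiting only the realisable sign vectors, which are exactly the faces of~$\mathcal{A}$.
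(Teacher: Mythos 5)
Your proposal is correct and rests on exactly the same geometric picture as the paper: the arrangement of difference-hyperplanes $\{t_i - t_j = 0\}$, its faces, and the classical $\mathcal{O}(m^d)$ face-count bound. The difference is in how much is proved from scratch versus cited. The paper re-derives the $\mathcal{O}(m^d)$ bound through a self-contained double induction (its Claims on sign patterns and on orderings) and then gives an explicit \emph{term-by-term} incremental algorithm: maintain a family $F_j$ of realisable orderings of $t_1,\dots,t_j$, try all $\mathcal{O}(j)$ insertion positions for $t_{j+1}$ in each member of $F_j$, and prune via a single LP feasibility check per candidate. You instead cite the standard face-count formula $\sum_{k\le d}\binom{m}{k}$ and invoke a generic arrangement face-enumeration routine. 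Your version buys brevity and a cleaner conceptual statement; the paper's version buys self-containedness and a simpler implementation that only ever manipulates the ordering formulas themselves rather than the underlying arrangement data structure. One small caveat if you keep your route: make sure the enumeration routine you cite yields \emph{all} (relatively open) faces, not just full-dimensional cells, since equalities $t_i = t_j$ correspond to lower-dimensional faces; reverse search as usually stated enumerates cells only, whereas the incremental construction naturally produces all faces. With that adjustment, your argument matches the paper's in substance and in the stated time bound $n^{\mathcal{O}(d)}\log\norminf{T}^{\mathcal{O}(1)}$, since the polynomial-in-$m,d$ cost per face is absorbed into $n^{\mathcal{O}(d)}$.
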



Our QE procedure manipulates $\Aformula_1$ as follows:
\begin{itemize}
    \setlength{\itemsep}{2pt}
    \item Let $T$ be the set of all $y$-free terms~$t$ such that $t$, $y - t$ or $-y + t$ belong to~$\linterms{\Aformula_1}$.
    \item Using~\Cref{lemma:bound-number-of-orderings}, build a disjunction of~orderings $\aformulabis_{\text{ord}} \egdef \bigvee_{i \in [1,o]} \ordering_i$ for the terms~${T \cup \{0\}}$.
    \item Let $Z = \vars{\aformula}$ and $m = \lcm(\fmod{\Aformula_1})$.
    \item
    For each $i \in [1,o]$ and every $r \colon Z \to [m]$, 
    let~$\Gamma_{i,r} \egdef \ordering_i \land (\bigwedge_{z \in Z} z \equiv_m r(z))$.
    \item Let $\Aformula_2 \egdef \bigvee_{i \in [1,o]}\bigvee_{r\colon Z\to [m]}\left(\Gamma_{i,r} \land \Aformula_1 \right)$.
\end{itemize}

\begin{restatable}{claim}{claimPsiTwo}
    \label{claim:psi2}
    $\Aformula_1 \fequiv \Aformula_2$.
\end{restatable}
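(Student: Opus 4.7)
The plan is to establish $\Aformula_1 \fequiv \Aformula_2$ by arguing the two implications separately, the crucial observation being that $\bigvee_{i \in [1,o]}\bigvee_{r\colon Z\to [m]}\Gamma_{i,r}$ is a tautology over integer assignments.

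One direction is immediate: every disjunct of~$\Aformula_2$ has $\Aformula_1$ as a conjunct, so any assignment satisfying~$\Aformula_2$ already satisfies~$\Aformula_1$. The work therefore lies in the converse, which I would prove semantically: fix an arbitrary assignment~$\aeval$ with $\aeval \models \Aformula_1$ and exhibit an $i^\star$ and $r^\star$ such that $\aeval \models \Gamma_{i^\star,r^\star}$.

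For the ordering index, property~(I) of~\Cref{lemma:bound-number-of-orderings} tells us that $\bigvee_{i \in [1,o]} \ordering_i$ is valid, so I would pick $i^\star \in [1,o]$ with $\aeval \models \ordering_{i^\star}$. For the residue map, I would define $r^\star \colon Z \to [m]$ by letting $r^\star(z)$ be the unique element of~$[m]$ congruent to $\aeval(z)$ modulo~$m$; this directly gives $\aeval \models \bigwedge_{z \in Z} z \equiv_m r^\star(z)$. Combining, $\aeval \models \Gamma_{i^\star, r^\star} \land \Aformula_1$, whence $\aeval \models \Aformula_2$.

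I foresee no serious obstacle, as the argument reduces to propositional distributivity over a tautology. The only subtle point is that $\Gamma_{i,r}$ may impose a residue-class constraint on the variable~$\avarbis$, which is bound inside~$\Aformula_1$ by the counting quantifier; however, since $\Gamma_{i,r}$ appears \emph{outside} that quantifier, it merely pins down $\aeval(\avarbis)$ and does not interact with the count of witness values taken by $\avarbis$ in the body. The terms appearing in each ordering $\ordering_i$ are $\avarbis$-free by construction of~$T$, so the ordering part of $\Gamma_{i,r}$ does not raise the analogous concern, and the decomposition is genuinely a case split on external data.
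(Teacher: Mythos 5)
Your proof is correct and is essentially the paper's argument, just phrased semantically (via an arbitrary assignment $\aeval$) rather than syntactically (the paper conjoins the two tautologies $\aformulabis_{\text{ord}}$ and $\bigvee_{r\colon Z\to [m]} \bigwedge_{z \in Z} z \equiv_m r(z)$ to $\Aformula_1$ and distributes $\land$ over $\lor$). The observation about $\avarbis$ being bound inside $\Aformula_1$ but free in $\Gamma_{i,r}$ is accurate and harmless, for exactly the reason you give; the paper's proof does not pause to note it.
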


In Steps~III to V of the procedure,
we focus on each disjunct separately,
iterating over all pairs of~$i \in [1,o]$ and $r\colon Z \to [m]$.

\vspace{-2ex}
\stepcounter{eliproc}
\subparagraph*{Step \Roman{eliproc}: Split the range of~$y$ into segments.}
Recall that~$\Aformula_1 = \exists^{\geq \avar}\avarbis\,\aformulabis$, where  $\aformulabis$ is some Boolean combination of inequalities and modulo constraints with variables from~$\vars{\aformula}$, in which the non-zero coefficients of~$y$ are either~$1$ or~$-1$ (by~\Cref{claim:psi1}).
Let $\aterm_1',\dots,\aterm_\ell'$ be all of the terms $T \cup \{0\}$
that the formula~$\ordering_i$ asserts pairwise non-equal, taken in the ascending order.
In other words, we obtain $\aterm_1',\ldots,\aterm_\ell'$
by removing from the sequence $\aterm_1, \dots, \aterm_n$ in~\Cref{eq:ordering}
all terms~$t_{j+1}$ for which $\inormod_{j}$ is $=$.
Let $\segments(\avarbis,\ordering_i)$ be the set of formulae
\begin{center}
    $\bigl\{\,\avarbis < \aterm_1'$, \ $y = \aterm_1'$, \ $(\aterm_1' < \avarbis \land \avarbis < \aterm_2')$, \ $y = \aterm_2'$, \ $\dots$, \ $(\aterm_{\ell-1}' < y \land y < \aterm_\ell')$, \ ${y = \aterm_\ell'}$, \ $\aterm_\ell' < y\,\bigr\}$.
\end{center}
We have $\card({\segments(\avarbis,\ordering_i)}) = 2\ell + 1$.
Given $\aformulafour \in \segments(\avarbis,\ordering_i)$, the formula $\ordering_i \land \aformulafour$ imparts a linear ordering on the terms $T \cup \{0,\avarbis\}$.
This enables us to ``almost evaluate'' the formula $\aformulabis$:

\begin{restatable}{claim}{claimPsiThreeSimpleMod}
    \label{claim:psi3-simpl-mod}
    For every $\aformulafour \in \segments(\avarbis,\ordering_i)$, there is a Boolean combination~$\aformulabis_{\aformulafour}^{i,r}$ of simple modulo constraints s.t.~$\vars{\aformulabis_{\aformulafour}^{i,r}} = \{\avarbis\}$, $\fmod{\aformulabis_{\aformulafour}^{i,r}} \subseteq \fmod{\aformulabis}$ and $\Gamma_{i,r} \land \aformulafour \land \aformulabis$ 
        \ $\fequiv$ \ 
        $\Gamma_{i,r} \land \aformulafour \land \aformulabis_{\aformulafour}^{i,r}$.
\end{restatable}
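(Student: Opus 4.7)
The plan is to construct $\aformulabis_{\aformulafour}^{i,r}$ atom by atom from $\aformulabis$: I would replace each atomic subformula by a logically equivalent expression under the assumption $\Gamma_{i,r} \land \aformulafour$, maintaining the invariant that every replacement is either a Boolean constant or a simple modulo constraint whose only variable is $\avarbis$. Since atom replacement respects the Boolean structure, the resulting formula is automatically a Boolean combination of the required form.

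For the linear inequalities in $\aformulabis$, I would use that by \Cref{claim:psi1} every coefficient of $\avarbis$ in a linear term is $0$, $+1$ or $-1$. An inequality $s < 0$ with $s$ that is $\avarbis$-free satisfies $s \in \linterms{\Aformula_1}$, hence $s \in T$ by the definition of $T$ in Step~II, and $\ordering_i$ places $s$ against $0 \in T \cup \{0\}$, deciding $s < 0$ as a Boolean constant. An inequality involving $\avarbis$ is equivalent to $\avarbis < u$ or $u < \avarbis$ for some $\avarbis$-free term~$u$; inspection of the two sign cases shows that $u \in T$ (since $\avarbis - u$ or $-\avarbis + u$ lies in $\linterms{\Aformula_1}$). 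The segment formula $\aformulafour$ together with $\ordering_i$ imposes a total order on $T \cup \{0, \avarbis\}$ by the construction of $\segments(\avarbis, \ordering_i)$, which decides both $\avarbis < u$ and $u < \avarbis$.

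For the simple modulo constraints, recall that after Step~I every modulo atom of $\aformulabis$ has the form $v \equiv_q r_0$ for a single variable~$v$ and modulus $q \in \fmod{\aformulabis} \subseteq \fmod{\Aformula_1}$. If $v \neq \avarbis$, then $v \in Z = \vars{\aformula}$, and the conjunct $v \equiv_m r(v)$ of $\Gamma_{i,r}$ pins down $v \bmod m$; since $q$ divides $m = \lcm(\fmod{\Aformula_1})$, this also pins down $v \bmod q$, and the atom reduces to $\true$ or $\false$ depending on whether $r(v) \equiv r_0 \pmod q$. If $v = \avarbis$, the atom is already a simple modulo constraint on $\avarbis$ with modulus in $\fmod{\aformulabis}$, and I would keep it verbatim.

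The main obstacle is the bookkeeping needed to verify the parameter containments in the claim; the semantic equivalence itself is immediate from replacing atoms by equivalent expressions. Indeed, $\fmod{\aformulabis_{\aformulafour}^{i,r}} \subseteq \fmod{\aformulabis}$ because no new moduli are introduced, and $\vars{\aformulabis_{\aformulafour}^{i,r}} \subseteq \{\avarbis\}$ because all non-$\avarbis$ variables occur only in atoms now replaced by Boolean constants. If the equality of the variable set is required literally, it can be enforced by conjoining the tautology $\avarbis \equiv_1 0$, whose modulus~$1$ lies in $\fmod{\aformulabis}$ by the stated convention that $1 \in \fmod{\aformula}$ for every formula~$\aformula$.
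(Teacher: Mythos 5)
Your proof is correct and follows essentially the same reasoning as the paper's: both evaluate the inequalities of $\aformulabis$ to constants using the fact that $\ordering_i$ fixes the sign of each $y$-free term in $T \cup \{0\}$ and $\aformulafour$ pins down $y$'s slot relative to that order, and both evaluate $y$-free modulo constraints via the residue assignment in $\Gamma_{i,r}$ while keeping the remaining simple $y$-constraints verbatim. Your version is merely more explicit about why each $y$-free or $\pm 1$-coefficient term lands in $T$, and your $\avarbis \equiv_1 0$ trick to force $\vars{\aformulabis_{\aformulafour}^{i,r}} = \{\avarbis\}$ is a harmless technicality that the paper glosses over.
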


The procedure continues as follows:
\begin{itemize}
    \item Let $\segments(y,\ordering_i) = \{\aformulafour_0,\dots,\aformulafour_{2\ell}\}$ and, for every $j \in [0,2\ell]$, take $\aformulabis^{i,r}_{\aformulafour_j}$ from~\Cref{claim:psi3-simpl-mod}.
    \item Let $\Aformula_3^{i,r} \egdef \exists \avar_0\dots\exists\avar_{2\ell} \left(
            \avar \leq \avar_0 + \dots + \avar_{2\ell} \land \bigwedge_{j \in [0,2\ell]} \exists^{\geq \avar_j} \avarbis (\aformulafour_j \land \aformulabis^{i,r}_{\aformulafour_j})\right)$.
    \item Let $\Aformula_3 \egdef \bigvee_{i \in [1,o]}\bigvee_{r\colon Z\to [m]} (\Gamma_{i,r} \land \Aformula_3^{i,r})$.
\end{itemize}

\begin{restatable}{claim}{claimPsiThree}
    \label{claim:psi3}
    $\Aformula_2 \fequiv \Aformula_3$.
\end{restatable}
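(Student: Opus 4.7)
The plan is to reduce the equivalence to a disjunct-by-disjunct check: for each pair $(i,r)$ indexing both $\Aformula_2$ and $\Aformula_3$, I will show $\Gamma_{i,r} \land \Aformula_1 \fequiv \Gamma_{i,r} \land \Aformula_3^{i,r}$. Fixing an assignment~$\aeval$ and an index $(i,r)$, one may assume $\aeval \models \Gamma_{i,r}$, since otherwise both conjuncts are false.

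Under $\Gamma_{i,r}$, the ordering $\ordering_i$ asserts that $\aterm_1',\dots,\aterm_\ell'$ are pairwise distinct and arranged in ascending order, so the formulae in $\segments(\avarbis,\ordering_i) = \{\aformulafour_0,\dots,\aformulafour_{2\ell}\}$ form a partition of~$\Zed$: every integer value of~$\avarbis$ satisfies exactly one~$\aformulafour_j$. Combining this partition property with \Cref{claim:psi3-simpl-mod}, which replaces $\aformulabis$ inside segment~$j$ by the $\avarbis$-only formula $\aformulabis^{i,r}_{\aformulafour_j}$ preserving the satisfying values of~$\avarbis$, I obtain the additivity identity
\[
\card{\{n \in \Zed \mid \aeval\substitute{\avarbis}{n} \models \aformulabis\}} \ = \ \sum_{j=0}^{2\ell} c_j, \qquad c_j \egdef \card{\{n \in \Zed \mid \aeval\substitute{\avarbis}{n} \models \aformulafour_j \land \aformulabis^{i,r}_{\aformulafour_j}\}}.
\]

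It then remains to show that $\sum_{j} c_j \geq \aeval(\avar)$ if and only if there exist integers $n_0,\dots,n_{2\ell}$ with $\sum_j n_j \geq \aeval(\avar)$ and $c_j \geq n_j$ for every~$j$: these are precisely the witnesses for the outer existential quantifiers of $\Aformula_3^{i,r}$. The backward direction follows by summing the inequalities $c_j \geq n_j$. For the forward direction, if all $c_j$ are finite one takes $n_j \egdef c_j$; if some $c_{j_0}$ is infinite one instead takes $n_{j_0} \egdef \aeval(\avar)$ and $n_j \egdef 0$ for $j \neq j_0$.

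The only real technicality to track will be that the two unbounded segments $\avarbis < \aterm_1'$ and $\aterm_\ell' < \avarbis$ may produce $c_j = \infty$, whereas the existential witnesses~$n_j$ in~$\Aformula_3^{i,r}$ must be drawn from~$\Zed$. The convention that $n \leq \infty$ for all $n \in \Zed$ is what makes the small case split above go through without further bookkeeping, and hence, beyond unpacking the semantics of the counting quantifier and invoking \Cref{claim:psi3-simpl-mod}, no substantial obstacle is expected.
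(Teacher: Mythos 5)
Your proof is correct and takes essentially the same approach as the paper: reduce disjunct-by-disjunct to $\Gamma_{i,r} \land \Aformula_1 \fequiv \Gamma_{i,r} \land \Aformula_3^{i,r}$, use \Cref{claim:psi3-simpl-mod} to swap $\aformulabis^{i,r}_{\aformulafour_j}$ for $\aformulabis$, observe that the formulae in $\segments(\avarbis,\ordering_i)$ partition~$\Zed$, and invoke additivity of the per-segment counts. Your explicit case split when some segment contributes infinitely many solutions (taking $n_{j_0} \egdef \aeval(\avar)$ rather than the infinite count) is in fact more careful than the paper's own presentation, which writes $\aeval[v_0/\avar_0,\dots,v_{2\ell}/\avar_{2\ell}]$ without addressing the possibility that $v_j = \infty$ for an unbounded segment.
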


\vspace{-2ex}
\stepcounter{eliproc}
\subparagraph*{Step \Roman{eliproc}: Compute the number of solutions for each segment.}
We next aim to eliminate the counting quantifiers introduced in Step~III
in the sub-formulae $\exists^{\geq \avar_j} \avarbis (\aformulafour_j \land \aformulabis^{i,r}_{\aformulafour_j})$.
We go over each $\aformulafour \in \segments(y,\ordering_i)$, and consider three cases
depending on whether it specifies (syntactically) an infinite interval, a finite segment,
or a single value for~$y$.

Notice that $r$ is in fact an assignment to variables,
so $r(t) \in \Zed$ is well-defined for every term~$t$ with free variables~$Z$.
Compute the following numbers for $j \in [1,\ell]$:
\begin{itemize}
\item
    $c_j$ is $1$ if
    the assignment $y \mapsto r(t_j')$ satisfies 
    $\aformulabis_{\aformulafour}^{i,r}$ where $\aformulafour = ( y = \aterm_j' )$ and $0$ otherwise.
\end{itemize}
For $j \in [2, \ell]$:
\begin{itemize}
    \setlength{\itemsep}{2pt}
    \item
    $p_j \in [0, m]$ is the number of~$y \in [m]$ satisfying  $\aformulabis_{\aformulafour}^{i,r}$,
    \item
    $\underline{u}_j = (r(t_{j-1}') \bmod m)$ and $\overline{u}_j$ is the smallest integer congruent to $r(t_j')$ mod~$m$ and~$>\underline{u}_j$,
    \item
    $r_j' \in [0, m]$ is the number of $y \in [\underline{u}_j+1,\overline{u}_j-1]$ satisfying~$\aformulabis_{\aformulafour}^{i,r}$,
    \item
    $r_j \in [-m^2, m^2]$ and $r_j$ is $- p_j \cdot (\overline{u}_j - \underline{u}_j) + m \cdot r_j'$.
\end{itemize}

%

\begin{restatable}{lemma}{lemmaSharpSat}
\label{lemma:sharp-sat}
Given a formula~$\aformulabis_{\aformulafour}^{i,r}$
and $m, \underline{u}_j, \overline{u}_j$,
the numbers~$p_j$ and $r_j'$ can be computed in~\textup{\sharpP},
or by a deterministic algorithm with running time
$\mathcal O(m \cdot \abs{\aformulabis_{\aformulafour}^{i,r}})$.
\end{restatable}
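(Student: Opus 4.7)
The plan is to exploit two key features of $\aformulabis_{\aformulafour}^{i,r}$ established in Claim~\ref{claim:psi3-simpl-mod}: it contains only the single variable $\avarbis$, and it is a Boolean combination of simple modulo constraints $\avarbis \equiv_q r$ whose moduli all divide $m$.

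For the deterministic bound I would enumerate candidate values of $\avarbis$ and directly evaluate the formula. To compute $p_j$, I iterate over $\avarbis \in \{0,1,\dots,m-1\}$ and, for each value, evaluate $\avarbis \bmod q$ for every simple modulo constraint occurring in $\aformulabis_{\aformulafour}^{i,r}$, then propagate the Boolean connectives bottom-up. A single evaluation costs $\mathcal{O}(\abs{\aformulabis_{\aformulafour}^{i,r}})$, yielding the overall running time $\mathcal{O}(m \cdot \abs{\aformulabis_{\aformulafour}^{i,r}})$. The computation of $r_j'$ is identical except that the enumeration ranges over $[\underline{u}_j+1, \overline{u}_j-1]$; since $\overline{u}_j$ is defined as the smallest integer strictly greater than $\underline{u}_j$ that is congruent to $r(\aterm_j')$ modulo $m$, this interval contains at most $m-1$ integers, and the same bound applies.

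For the \textup{\sharpP} upper bound I would verify that the associated decision problem — given the formula, $m$, and the interval, does there exist a satisfying value of $\avarbis$? — lies in $\textup{NP}$. A certificate is simply a value of $\avarbis$ in the specified range; since $\avarbis \le m$ and $m$ is encoded in binary, the certificate has polynomial bit length, and verification reduces to evaluating a quantifier-free single-variable formula at $\avarbis$, which is polynomial time. Counting the witnesses of this $\textup{NP}$ relation is by definition a \textup{\sharpP} computation, and the same argument applies mutatis mutandis when the interval is $[m]$ or $[\underline{u}_j+1, \overline{u}_j-1]$.

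There is no real obstacle here; the only subtlety worth noting is that even though $m = \lcm(\fmod{\Aformula_1})$ may be exponential in the size of the original input, integers in $[m]$ still admit a polynomial-size binary representation. This is precisely what makes the \textup{\sharpP} bound meaningful (the certificate stays short), whereas the deterministic $\mathcal{O}(m \cdot \abs{\aformulabis_{\aformulafour}^{i,r}})$ bound remains useful only when $m$ itself is small, as will be leveraged later when the specialisation of the procedure to the threshold quantifier is analysed.
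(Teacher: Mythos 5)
Your proof is correct and follows essentially the same route as the paper: enumerate the at most $m$ candidate values of $\avarbis$ in the relevant interval, evaluating the quantifier-free single-variable formula at each, which gives the deterministic $\mathcal O(m\cdot\abs{\aformulabis_{\aformulafour}^{i,r}})$ bound, while the \textup{\sharpP} membership follows because counting satisfying assignments of a quantifier-free formula over a polynomially-bounded-bit-length domain is a \textup{\sharpP} computation. Your added remark about certificate size and binary encoding of $m$ is a useful elaboration of what the paper leaves implicit, but it is the same argument.
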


The numbers $c_j$, $p_j$, $r_j$ determine, for each formula
$\aformulafour \in \segments(y,\ordering_i)$, how many assignments to the variable~$y$
satisfy the formula~$\aformulabis_{\aformulafour}^{i,r}$
in the conjunction $\Gamma_{i,r} \land \aformulafour \land \aformulabis_{\aformulafour}^{i,r}$.
Intuitively, this is
$c_j$ for $\aformulafour$ of the form $y = \aterm_j'$, and
$(p_j(t_{j}' - t_{j-1}') + r_j) / m$
for $\aformulafour$ of the form $\aterm_{j-1}' < \avarbis \land \avarbis < \aterm_{j}'$.
We say ``intuitively'' here, because in the latter case the expression above depends on
other variables so is not, strictly speaking, a number.
The following claims formalise this.


\begin{restatable}{claim}{claimPsiFourInfSol}
\label{claim:psi4-inf-sol}
    Let $\aformulafour \in \{\avarbis < t_1',\ t_\ell' < \avarbis\}$.
    If $\exists \avarbis\, (\aformulafour \land \aformulabis_{\aformulafour}^{i,r})$ is satisfiable, then $\Gamma_{i,r} \land \Aformula^{i,r}_3$ $\fequiv$ $\Gamma_{i,r}$.
\end{restatable}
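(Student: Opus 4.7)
The plan is to exploit two facts: that $\aformulabis_{\aformulafour}^{i,r}$ has $\avarbis$ as its only free variable and is a Boolean combination of simple modulo constraints (so its $\avarbis$-solution set is a union of residue classes modulo $m = \lcm(\fmod{\Aformula_1})$, hence either empty or bi-infinite), and that the convention $n \le \infty$ makes every threshold counting quantifier trivially satisfied whenever the body has infinitely many $\avarbis$-witnesses.

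First I would argue that under the hypothesis the set of $\avarbis$-witnesses of $\aformulafour \land \aformulabis_{\aformulafour}^{i,r}$ is infinite, uniformly in the auxiliary assignment. Since $\fmod{\aformulabis_{\aformulafour}^{i,r}} \subseteq \fmod{\aformulabis}$, the formula $\aformulabis_{\aformulafour}^{i,r}$ is periodic with period dividing $m$; under the satisfiability hypothesis, the set $S \egdef \{n \in \Zed : \aformulabis_{\aformulafour}^{i,r}\substitute{\avarbis}{n} \text{ holds}\}$ is nonempty, hence a nonempty union of residue classes modulo $m$, and therefore unbounded both below and above. For $\aformulafour = (\avarbis < t_1')$, the subset $\{n \in S : n < \aeval(t_1')\}$ remains infinite for every assignment $\aeval$ (since $S$ is unbounded below), and similarly for $\aformulafour = (t_\ell' < \avarbis)$.

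For the equivalence itself, the direction $\Gamma_{i,r} \land \Aformula_3^{i,r} \models \Gamma_{i,r}$ is trivial. For the converse, given any $\aeval \models \Gamma_{i,r}$, I would witness the existentially quantified $\avar_0,\dots,\avar_{2\ell}$ in~$\Aformula_3^{i,r}$ as follows: letting $j^* \in \{0, 2\ell\}$ be the index of the infinite segment provided by the hypothesis, set $\avar_{j^*} := \aeval(\avar)$ and $\avar_j := 0$ for $j \neq j^*$. The sum constraint $\avar \le \avar_0 + \dots + \avar_{2\ell}$ reduces to $\aeval(\avar) \le \aeval(\avar)$; each conjunct $\exists^{\ge 0}\avarbis\,(\dots)$ holds vacuously; and the conjunct $\exists^{\ge \avar_{j^*}}\avarbis\,(\aformulafour_{j^*} \land \aformulabis_{\aformulafour_{j^*}}^{i,r})$ holds because, by the previous paragraph, the witness count is infinite and $n \le \infty$ for every $n \in \Zed$.

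The only subtlety I anticipate is that the hypothesis ``$\exists\avarbis\,(\aformulafour \land \aformulabis_{\aformulafour}^{i,r})$ is satisfiable'' is a property of the formula itself, yet the conclusion must hold for \emph{every} assignment satisfying $\Gamma_{i,r}$. This transfer hinges on $\aformulabis_{\aformulafour}^{i,r}$ being free of variables other than $\avarbis$ (so its truth at any integer witness is absolute) and on $\aformulafour$ being syntactically unbounded in the relevant direction; both properties are guaranteed by the construction of Step~III and by~\Cref{claim:psi3-simpl-mod}. Once this observation is in place, the rest is routine.
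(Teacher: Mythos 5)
Your proposal is correct and follows essentially the same route as the paper's proof: you identify the periodicity of $\aformulabis_{\aformulafour}^{i,r}$ (a Boolean combination of simple modulo constraints in $\avarbis$ alone, with moduli dividing $m$) to conclude that its nonempty solution set is unbounded in the direction cut off by $\aformulafour$, then witness the existentials in $\Aformula_3^{i,r}$ by loading $\aeval(\avar)$ onto the infinite segment's counter and zero elsewhere, invoking $n \le \infty$ to discharge the counting quantifier. The only cosmetic difference is that the paper fixes $\aformulafour = \aformulafour_0$ without loss of generality, whereas you parametrise over $j^* \in \{0, 2\ell\}$; the substance is identical.
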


\begin{restatable}{claim}{claimPsiFourRemoveCqEq}
\label{claim:psi4-remove-cq-eq}
    Let $\aformulafour$ be the formula $y = \aterm_j'$
    for some $j \in [1,\ell]$ and
    let $\avarter$ be a fresh variable.
    Then
    $\Gamma_{i,r} \land \exists^{\geq \avarter}\avarbis\,(\aformulafour \land \aformulabis_{\aformulafour}^{i,r})$ \ $\fequiv$ \ $\Gamma_{i,r} \land \avarter \leq c_j$.
\end{restatable}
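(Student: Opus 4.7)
The plan is to exploit that $\aformulafour$, being the equation $\avarbis = \aterm_j'$, uniquely determines the value of the bound variable $\avarbis$ once the other free variables are assigned, so that the counting quantifier $\exists^{\geq \avarter} \avarbis$ collapses to a single Boolean test---does the unique candidate witness satisfy $\aformulabis_{\aformulafour}^{i,r}$?---and the claimed equivalence reduces to comparing $\avarter$ with the Boolean-valued count $c_j$.

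First I would invoke Claim~\ref{claim:psi3-simpl-mod}: $\aformulabis_{\aformulafour}^{i,r}$ has $\avarbis$ as its only free variable and every modulus occurring in it lies in $\fmod{\aformulabis}$, hence divides $m = \lcm(\fmod{\Aformula_1})$. Therefore, whether $\aformulabis_{\aformulafour}^{i,r}$ is satisfied by $\avarbis \mapsto n$ depends only on $n \bmod m$. Next, I would fix an arbitrary assignment $\aeval$ with $\aeval \models \Gamma_{i,r}$: the conjunct $\bigwedge_{z \in Z} z \equiv_m r(z)$ of $\Gamma_{i,r}$ forces $\aeval(z) \equiv r(z) \pmod{m}$ for every $z \in Z$, and since $\aterm_j'$ is a linear term over $Z$ this yields $\aeval(\aterm_j') \equiv r(\aterm_j') \pmod{m}$. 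Combining the two observations, $\aeval\substitute{\avarbis}{\aeval(\aterm_j')} \models \aformulabis_{\aformulafour}^{i,r}$ if and only if $\avarbis \mapsto r(\aterm_j')$ satisfies $\aformulabis_{\aformulafour}^{i,r}$, which by the definition of~$c_j$ is precisely the condition $c_j = 1$.

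Finally, since $\aformulafour$ pins $\avarbis$ to $\aterm_j'$, the set $\{n \in \Zed \mid \aeval\substitute{\avarbis}{n} \models \aformulafour \land \aformulabis_{\aformulafour}^{i,r}\}$ has cardinality exactly $c_j \in \{0,1\}$, and hence $\aeval \models \exists^{\geq \avarter}\avarbis\,(\aformulafour \land \aformulabis_{\aformulafour}^{i,r})$ iff $\aeval(\avarter) \leq c_j$, which is the claimed equivalence. The only mildly delicate point, handled automatically by the inequality $\avarter \leq c_j$, is the case $c_j = 0$ with $\aeval(\avarter) \leq 0$, where the counting quantifier vacuously holds because the (zero) cardinality still exceeds $\aeval(\avarter)$.
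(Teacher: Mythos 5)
Your proof is correct and takes essentially the same approach as the paper's: the equality $\avarbis = \aterm_j'$ pins down a unique candidate value, so the count is $0$ or $1$ and the quantifier reduces to the comparison $\avarter \leq c_j$. You actually make explicit a step the paper leaves implicit --- namely, why the runtime test on $\aeval(\aterm_j')$ agrees with the precomputed number $c_j$, which is defined via $r(\aterm_j')$: you derive it from the $m$-periodicity of $\aformulabis_{\aformulafour}^{i,r}$ (since $\fmod{\aformulabis_{\aformulafour}^{i,r}} \subseteq \fmod{\aformulabis}$ and $m = \lcm(\fmod{\Aformula_1})$) together with the congruences $\aeval(z) \equiv r(z) \pmod m$ enforced by $\Gamma_{i,r}$, whereas the paper's own argument just observes that the single candidate either satisfies $\aformulabis_{\aformulafour}^{i,r}$ or does not and lets $\avarter$ range accordingly.
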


\begin{restatable}{claim}{claimPsiFourRemoveCqIneq}
\label{claim:psi4-remove-cq-ineq}
    Let $\aformulafour$ be the formula $\aterm_{j-1}' < \avarbis \land \avarbis < \aterm_{j}'$
    for some $j \in [2,\ell]$ and
    let $\avarter$ be a fresh variable.
    Then
    $\Gamma_{i,r} \land \exists^{\geq \avarter}\avarbis\,(\aformulafour \land \aformulabis^{i,r}_{\aformulafour})$ \ $\fequiv$ \ $\Gamma_{i,r} \land m\avarter \leq p_j(t_{j}' - t_{j-1}') + r_j$.
\end{restatable}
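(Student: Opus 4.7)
My plan is to compute exactly the number $N$ of integers $y$ lying in the open interval $(t'_{j-1}, t'_j)$ that satisfy $\aformulabis^{i,r}_{\aformulafour}$; the claim will then follow because $\exists^{\geq \avarter}\avarbis\,(\aformulafour \land \aformulabis^{i,r}_{\aformulafour})$ holds iff $\avarter \leq N$, and because $m \geq 1$ makes $\avarter \leq N$ equivalent to $m\avarter \leq mN$. The central input is the $m$-periodicity of the solution set of $\aformulabis^{i,r}_{\aformulafour}$: by \Cref{claim:psi3-simpl-mod}, this formula has $y$ as its only free variable, and its moduli lie in $\fmod{\aformulabis} \subseteq \fmod{\Aformula_1}$, so they all divide $m = \lcm(\fmod{\Aformula_1})$.

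Under $\Gamma_{i,r}$ every variable in $Z$ is fixed modulo $m$, so any $y$-free linear term $t$ with variables in $Z$ satisfies $t \equiv_m r(t)$; in particular $t'_{j-1} \equiv_m \underline{u}_j$ and $t'_j \equiv_m \overline{u}_j$, and $\ordering_i$ forces $t'_{j-1} < t'_j$. Since $\overline{u}_j - \underline{u}_j \in [1,m]$ is congruent to $t'_j - t'_{j-1}$ modulo $m$, the strict inequality forces $t'_j - t'_{j-1} \geq \overline{u}_j - \underline{u}_j$ and $K \egdef (t'_j - t'_{j-1} - (\overline{u}_j - \underline{u}_j))/m \in \Nat$. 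I would then partition the integers in $(t'_{j-1}, t'_j)$ into the initial part $(t'_{j-1}, t'_{j-1} + (\overline{u}_j - \underline{u}_j))$ and the remainder, which splits into $K$ consecutive length-$m$ blocks. Each block contributes $p_j$ solutions by periodicity, and shifting the initial part by the multiple-of-$m$ quantity $t'_{j-1} - \underline{u}_j$ identifies it with the integers in $(\underline{u}_j, \overline{u}_j)$, contributing $r'_j$ solutions. Thus $N = r'_j + K p_j$, and multiplying by $m$ and invoking the definition $r_j = -p_j(\overline{u}_j - \underline{u}_j) + m r'_j$ gives
\[
mN \; = \; p_j(t'_j - t'_{j-1}) + r_j,
\]
which yields the desired equivalence once $\Gamma_{i,r}$ is conjoined on both sides.

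I expect the main technical point to be establishing that $K$ is a non-negative integer, since this rests on carefully combining the residue-class information in $\Gamma_{i,r}$ with the strict ordering coming from $\ordering_i$. Once $K \in \Nat$ is secured, the partition into a short residue-matched prefix plus $K$ full periods reduces the count to a direct application of the periodicity of $\aformulabis^{i,r}_{\aformulafour}$ and the definitions of $p_j$, $r'_j$, and $r_j$.
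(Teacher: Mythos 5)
Your argument is correct and follows essentially the same route as the paper's proof: both exploit the $m$-periodicity of the set of $y$ satisfying $\aformulabis^{i,r}_{\aformulafour}$, use the residue constraints in $\Gamma_{i,r}$ to pin $t'_{j-1}$ and $t'_{j}$ modulo $m$, split the open interval into full periods (each contributing $p_j$) plus one short residue-matched block (contributing $r'_j$), and then multiply through by $m$ and invoke the definition of $r_j$. The only cosmetic difference is that you place the short block at the start of the interval while the paper places it at the end; the accounting and the use of the congruence $t'_{j-1} \equiv_m \underline{u}_j$, $t'_j \equiv_m \overline{u}_j$ are the same.
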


The procedure replaces each disjunct 
of $\Aformula_3$ with a new formula as follows:
\begin{itemize}
    \item
    Let $\Aformula_4^{i,r} \egdef \true$ 
    if $\exists \avarbis\, (\aformulafour \land \aformulabis_{\aformulafour}^{i,r})$ is satisfiable for some $\aformulafour \in \{\avarbis < t_1',\ t_\ell' < \avarbis\}$;
    otherwise\\let $\Aformula_4^{i,r} \egdef
    \exists \avar_2\dots\exists\avar_{\ell} \left(
        \avar \leq \sum_{j=2}^{\ell}\avar_j + \sum_{j=1}^{\ell}c_j \land 
        \bigwedge_{j \in [2,\ell]} m\avar_{j} \leq p_j(t_{j}' - t_{j-1}') + r_j
        \right)$.
    \item Let $\Aformula_4 \egdef \bigvee_{i \in [1,o]}\bigvee_{r\colon Z\to [m]} (\Gamma_{i,r} \land \Aformula_4^{i,r})$.
\end{itemize}

\begin{restatable}{claim}{claimPsiFour}
\label{claim:psi4}
    $\Aformula_3$ $\fequiv$ $\Aformula_4$.
\end{restatable}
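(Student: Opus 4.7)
\subparagraph*{Proof plan.}
Both $\Aformula_3$ and $\Aformula_4$ share the outer disjunctive structure $\bigvee_{i,r}(\Gamma_{i,r} \land \cdot)$, so it suffices to prove, for every fixed pair $(i,r)$, that $\Gamma_{i,r} \land \Aformula_3^{i,r} \fequiv \Gamma_{i,r} \land \Aformula_4^{i,r}$. I would first split on whether the hypothesis of \Cref{claim:psi4-inf-sol} holds: if some infinite segment $\aformulafour \in \{y < t_1',\ t_\ell' < y\}$ admits a $y$ satisfying $\aformulafour \land \aformulabis^{i,r}_{\aformulafour}$, then \Cref{claim:psi4-inf-sol} already gives $\Gamma_{i,r} \land \Aformula_3^{i,r} \fequiv \Gamma_{i,r}$, which matches the definition $\Aformula_4^{i,r} = \true$ in that case.

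In the complementary case, both infinite segments are unsatisfiable, so $\exists^{\geq \avar_j}\avarbis\,(\aformulafour_j \land \aformulabis^{i,r}_{\aformulafour_j})$ is equivalent to $\avar_j \leq 0$ for $j \in \{0, 2\ell\}$. For the remaining segments I apply \Cref{claim:psi4-remove-cq-eq} to the singletons $\aformulafour_{2j-1} = (y = t_j')$ and \Cref{claim:psi4-remove-cq-ineq} to the open interior intervals $\aformulafour_{2j-2} = (t_{j-1}' < y \land y < t_j')$, rewriting each inner counting quantifier of $\Aformula_3^{i,r}$ (under the guard $\Gamma_{i,r}$) as a linear inequality on the corresponding $\avar_j$. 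This transforms $\Gamma_{i,r} \land \Aformula_3^{i,r}$ into $\Gamma_{i,r}$ conjoined with an existential statement over $\avar_0,\dots,\avar_{2\ell}$ whose body is the single coupling constraint $\avar \leq \sum_j \avar_j$ together with an individual upper bound on each~$\avar_j$.

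The final step is to project out the variables $\avar_0$, $\avar_{2\ell}$, and each $\avar_{2j-1}$ for $j \in [1,\ell]$, whose upper bounds ($0$, $0$, and $c_j$ respectively) are constants independent of the other $\avar_j$'s. Since the only constraint coupling the $\avar_j$'s is the monotone inequality $\avar \leq \sum_j \avar_j$, the projection is immediate: a subsystem of the form $\{\avar \leq \avar_j + s,\ \avar_j \leq u\}$ with $u$ constant is equisatisfiable with $\avar \leq u + s$, and we may apply this observation to each of the variables to be eliminated in turn. Doing so absorbs their contribution into the constant $\sum_{j=1}^\ell c_j$ and leaves precisely the $\ell - 1$ existentially quantified variables $\avar_{2(j-1)}$ for $j \in [2,\ell]$, which correspond to the variables $\avar_j$ of $\Aformula_4^{i,r}$ after relabelling. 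I do not expect a real obstacle here: the heavy lifting is done by \Cref{claim:psi4-inf-sol,claim:psi4-remove-cq-eq,claim:psi4-remove-cq-ineq}, and what remains is the elementary integer-linear bookkeeping just described.
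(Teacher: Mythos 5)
Your proposal is correct and takes essentially the same route as the paper: the paper's own proof is a one-liner citing \Cref{claim:psi4-inf-sol,claim:psi4-remove-cq-eq,claim:psi4-remove-cq-ineq} ``together with simple formulae manipulations,'' and you have simply spelled those manipulations out (projecting out the constant-bounded auxiliary variables and relabelling the survivors). The bookkeeping is accurate — the infinite segments contribute bound $0$, the singleton segments contribute bound $c_j$ via \Cref{claim:psi4-remove-cq-eq}, and the open intervals via \Cref{claim:psi4-remove-cq-ineq}, matching $\Aformula_4^{i,r}$ after eliminating the constant-bounded existentials.
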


\vspace{-2ex}
\stepcounter{eliproc}
\subparagraph*{Step \Roman{eliproc}: Sum up the numbers of solutions.}
It remains to get rid of the variables~$x_i$ introduced earlier.
For each
disjunct~$\Gamma_{i,r} \land \Aformula_4^{i,r}$ of $\Aformula_4$,
we use the notation from Step~IV.
\begin{itemize}
    \setlength{\itemsep}{2pt}
\item
Let~$\Aformula_5^{i,r} \egdef \true$
if~$\Aformula_4^{i,r} = \true$;
otherwise\\[2pt]
    let
    $\Aformula_5^{i,r} \egdef
    m\avar \leq \sum_{j=2}^{\ell}( p_j(t_{j}' - t_{j-1}') + r_j ) + m \cdot \sum_{j=1}^{\ell} c_j$.
\item Let $\Aformula_5 \egdef \bigvee_{i \in [1,o]}\bigvee_{r\colon Z\to [m]} (\Gamma_{i,r} \land \Aformula_5^{i,r})$.
\end{itemize}
The procedure terminates with~$\Aformula_5$ as output.
The following claim implies~\Cref{theorem:counting-quantifier-elimination}.

\begin{restatable}{claim}{claimPsiFive}
\label{claim:psi5}
$\Aformula_4$ $\fequiv$ $\Aformula_5$. The formula $\Aformula_5$ is quantifier-free.
\end{restatable}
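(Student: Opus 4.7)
The plan is to argue disjunct-by-disjunct that, for every $i \in [1,o]$ and $r \colon Z \to [m]$, the formula $\Gamma_{i,r} \land \Aformula_4^{i,r}$ is logically equivalent to $\Gamma_{i,r} \land \Aformula_5^{i,r}$, and then to verify that $\Aformula_5$ is quantifier-free by direct inspection. The disjuncts where $\Aformula_4^{i,r} = \true$ are trivial, because then $\Aformula_5^{i,r} = \true$ by construction.

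For the remaining disjuncts, the heart of the argument is a \emph{divisibility lemma}: under $\Gamma_{i,r}$, the integer $p_j(t_j' - t_{j-1}') + r_j$ is divisible by~$m$ for every $j \in [2,\ell]$. To see this, recall that $\Gamma_{i,r}$ forces $z \equiv_m r(z)$ for every $z \in Z$, so for the $y$-free terms $t_j', t_{j-1}' \in T$ we have $t_j' - t_{j-1}' \equiv r(t_j') - r(t_{j-1}') \pmod m$. By the choice of $\underline{u}_j, \overline{u}_j$, the right-hand side is congruent to $\overline{u}_j - \underline{u}_j$ modulo $m$, whence $p_j(t_j' - t_{j-1}') \equiv p_j(\overline{u}_j - \underline{u}_j) \pmod m$. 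Since $r_j = -p_j(\overline{u}_j - \underline{u}_j) + m \cdot r_j'$, the two contributions cancel modulo $m$, proving the lemma.

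The equivalence is then immediate. For the forward direction, any witnesses $x_2,\dots,x_\ell$ of $\Aformula_4^{i,r}$ satisfy $m x_j \le p_j(t_j' - t_{j-1}') + r_j$, so summing and combining with $x \le \sum_{j=2}^\ell x_j + \sum_{j=1}^\ell c_j$ yields $m x \le \sum_{j=2}^\ell (p_j(t_j' - t_{j-1}') + r_j) + m \sum_{j=1}^\ell c_j$, which is $\Aformula_5^{i,r}$. For the backward direction, the divisibility lemma lets us set $x_j \egdef (p_j(t_j' - t_{j-1}') + r_j)/m \in \Zed$; each inequality $m x_j \le p_j(t_j' - t_{j-1}') + r_j$ then holds with equality, and dividing $\Aformula_5^{i,r}$ by~$m$ yields the remaining constraint $x \le \sum_{j=2}^\ell x_j + \sum_{j=1}^\ell c_j$.

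To finish, I would note that $\Aformula_5$ is quantifier-free: each $\Gamma_{i,r}$ is a conjunction of linear (in)equalities drawn from $\ordering_i$ together with simple modulo constraints, and each $\Aformula_5^{i,r}$ is either $\true$ or a single linear inequality in $x$ and the $y$-free terms of $T$, with integer coefficients and constants built from the $p_j$, $r_j$, $c_j$ and $m$. A finite disjunction of finite conjunctions of such atoms carries no quantifiers. The main obstacle is the divisibility lemma, but once one recognises that $r_j$ was engineered precisely to cancel the modular slack of $p_j(t_j' - t_{j-1}')$ under the residue class constraints of $\Gamma_{i,r}$, the remaining arithmetic is a short elementary manipulation.
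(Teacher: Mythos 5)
Your proposal is correct and follows essentially the same argument as the paper: a disjunct-by-disjunct reduction, with the forward direction by summing the witness inequalities and the backward direction hinging on the observation that, under $\Gamma_{i,r}$, each $p_j(t_j' - t_{j-1}') + r_j$ is divisible by $m$ (which the paper also proves by unfolding the definitions of $r_j$, $\underline{u}_j$, $\overline{u}_j$ and using the residue constraints $z \equiv_m r(z)$). Your extraction of this divisibility fact as a standalone lemma is a slightly cleaner packaging of the same computation, but there is no substantive difference.
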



\section{Discussion and summary of results, and roadmap}
\label{s:summary}

The QE
procedure
for a single counting quantifier~$\exists^{\geq \avar}\avarbis$
from Section~\ref{section:quantifier-elimination}
forms the basis of our results.
In this section we discuss its use and lay out its applications.

\vspace{-2ex}
\subparagraph*{Analysis of the procedure.}
The next lemma tells us how fast formulae and their
parameters grow in our QE procedure.

\begin{lemma}
    \label{lemma:bound-quantifier-elimination}
    Let the formula $\Aformula_5$ be obtained by applying
    the quantifier elimination procedure
    from Section~\ref{section:quantifier-elimination}
    to a formula~$\exists^{\geq \avarbis}\avar\,\aformula$, where $\aformula$ is quantifier-free and $\card{\vars{\aformula}} = d$.
    Then:
    \begin{gather*}
        \fmod{\Aformula_5} = \{m\} \quad\text{with\ } m = k \cdot \lcm{(\fmod{\aformula})} \text{\ and\ }k \le \norminf{\homterms{\aformula}}^{\card{\homterms{\aformula}}}, \\
        \begin{aligned}
        \card{\linterms{\Aformula_5}} &\le N^{O(d)}, &
        \norminf{\linterms{\Aformula_5}} &\le \BigO{ N } \cdot \norminf{\linterms{\aformula}}, \\
        \card{\homterms{\Aformula_5}} &\le N^{O(d)}, &
        \norminf{\homterms{\Aformula_5}} &\le \BigO{ N } \cdot \norminf{\homterms{\aformula}},
        \quad\text{where\ }
        N = m^2 \cdot \card{\linterms{\aformula}}.
        \end{aligned}
    \end{gather*}
\end{lemma}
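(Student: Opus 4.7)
The proof is an accounting argument that traces the three parameters ($\fmod{\cdot}$; and the cardinality and norm of $\linterms{\cdot}$ and $\homterms{\cdot}$) through the five stages of the procedure from \Cref{section:quantifier-elimination}. At each stage I appeal to the corresponding claim for semantic correctness and simply verify how the sizes evolve.

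\textbf{Steps I and II.} In Step~I, $k$~is the lcm of the (non-zero) coefficients of $\avarbis$ in $\homterms{\aformula}$, so $k \le \norminf{\homterms{\aformula}}^{\card{\homterms{\aformula}}}$ follows from the trivial bound $\lcm(a_1,\dots,a_s)\le a_1\cdots a_s$. The rewrite rules multiply every coefficient of a $y$-containing term by at most $k/|a|\le k$ and replace each modulus~$q$ by $kq$; therefore $\card{\linterms{\Aformula_1}}\le\card{\linterms{\aformula}}$, $\norminf{\linterms{\Aformula_1}}\le k\cdot\norminf{\linterms{\aformula}}$, the same for $\homterms$, and $\lcm(\fmod{\Aformula_1}) = k\cdot\lcm(\fmod{\aformula}) = m$. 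In Step~II, the disjunction $\bigvee_{i,r}\Gamma_{i,r}$ consists of $o\cdot m^d$ disjuncts with $o = \mathcal O(n^{2d})$ and $n\le\card{\linterms{\aformula}}+1$ by \Cref{lemma:bound-number-of-orderings}; each $\Gamma_{i,r}$ introduces $\mathcal O(n^2)$ linear terms of the form $t_i-t_j$ (hence homogeneous of norm~$\le 2\norminf{\homterms{\Aformula_1}}$) and $d$ simple modulo constraints mod $m$.

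\textbf{Steps III--V.} Each ordering is split into at most $2\ell+1\le 2n+1$ segments; by \Cref{claim:psi3-simpl-mod} the per-segment sub-formulae $\aformulabis^{i,r}_{\aformulafour_j}$ only mention $\avarbis$ and use moduli already in $\fmod{\Aformula_1}$, so they contain no new $y$-free linear terms. Step~IV evaluates these sub-formulae, replacing them with integers $c_j\in\{0,1\}$, $p_j\in[0,m]$ and $r_j\in[-m^2,m^2]$, and introduces inequalities of the form $m\avar_j\le p_j(t_j'-t_{j-1}')+r_j$; since each $t_j'-t_{j-1}'$ lies in $\linterms{\Aformula_1}$, the coefficients of these new terms have norm at most $p_j\cdot\norminf{\linterms{\Aformula_1}}\le m\cdot k\cdot\norminf{\linterms{\aformula}}$ and their constants are bounded by $m^2+mk\cdot\norminf{\linterms{\aformula}}$. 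In Step~V, the $\ell-1$ per-segment inequalities are summed into a single inequality per disjunct, multiplying the above bound by a further factor of $\ell\le n$.

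\textbf{Consolidation and main obstacle.} Every modulo constraint appearing in $\Aformula_5$ either comes from some $\Gamma_{i,r}$ (modulus~$m$) or is erased in Step~IV, so $\fmod{\Aformula_5}=\{m\}$. For the norms, using $k\le m$ and $\ell\le\card{\linterms{\aformula}}+1$, the accumulated factor is $\ell\cdot m\cdot k\le m^2\cdot\card{\linterms{\aformula}}=N$, giving $\norminf{\linterms{\Aformula_5}}\le\mathcal O(N)\cdot\norminf{\linterms{\aformula}}$; the homogeneous bound is identical because the constants $c_j$, $r_j$ and $m\sum c_j$ contribute only to the constant part and drop out when passing to $\homterms{\cdot}$. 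The cardinality bound combines the $o\cdot m^d$ outer disjuncts, $\mathcal O(n^2)$ ordering terms and $\mathcal O(\ell)$ per-disjunct inequalities into $n^{\mathcal O(d)}\cdot m^d\le N^{\mathcal O(d)}$ since both $n$ and $m$ are at most $N$. The main obstacle is the last paragraph: although every individual rewrite inflates norms only modestly, Step~V sums $\ell$ inequalities whose coefficients have already been scaled by $k$, $m$ and $p_j$, and one must check that these factors line up to be absorbed by the single bound $\mathcal O(N)\cdot\norminf{\linterms{\aformula}}$ without any hidden dependence on $d$ appearing in the norm (only in the cardinality).
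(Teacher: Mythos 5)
Your proof is correct and follows essentially the same approach as the paper's: it traces how $\fmod{\cdot}$, $\card{\linterms{\cdot}}$, $\card{\homterms{\cdot}}$ and their norms evolve through Steps~I--V, establishes $\ell m k = \BigO{N}$ as the dominating factor in the norm, and combines the $o \cdot m^d$ disjuncts with the $\BigO{n^2}$ ordering terms for the cardinality bound. The minor inaccuracies (e.g.\ a difference $t_j'-t_{j-1}'$ is not itself in $\linterms{\Aformula_1}$, only bounded in norm by twice that of its constituents; $\ell m k \le N$ should really be $\ell m k = \BigO{N}$) are all absorbed by the $\BigO{\cdot}$ in the statement and do not affect the argument.
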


A trivial consequence of~\Cref{lemma:bound-quantifier-elimination}
is that
the QE procedure
from Section~\ref{section:quantifier-elimination}
gives an algorithm for deciding a formula $\varphi$ of
Presburger arithmetic with counting quantifiers~$\exists^{\geq z}\avarbis$
in time $2^{\iddots{\raisebox{4pt}{\,\scriptsize$2$}}}$, where the height of the tower is at most $\BigO{\abs{\aformula}}$.

\begin{remark}
\label{remark:equality-quantifier-elimination}
With minor changes to the procedure, one can
eliminate quantifiers~$\exists^{=\avar}\avarbis$ in addition to~$\exists^{\geq \avar}\avarbis$,
with the same complexity bounds as in~\Cref{lemma:bound-quantifier-elimination}.
Because of space constraints, this is only briefly described in~\Cref{subsection:elimination-modulo-quantifiers} and
further details are relegated to~\Cref{subsection:exists-equal-elim}.
\end{remark}

Let us pinpoint where the non-elementary blow-up appears
if the procedure is applied multiple times to eliminate
\emph{all} quantifiers from a formula.
%
%
Putting together the upper bounds and equations
given by~\Cref{lemma:bound-quantifier-elimination}
for 
$\card{\homterms{\Aformula_5}}$,
$N$,
$m$, and
$k$,
we observe that
the upper bound for~$\card{\homterms{\Aformula_5}}$
is exponential in~$\card{\homterms{\aformula}}$.
This means that more fine-grained bounds are necessary for
decision procedures with elementary complexity, i.e.,
with running time bounded from above by a $k$-fold exponential in
the size of the input formula.

Tracing the exponential dependence
of~$\card{\homterms{\Aformula_5}}$
on~$\card{\homterms{\aformula}}$ back to the QE procedure,
one can see that the quantity~$k$ from~\Cref{lemma:bound-quantifier-elimination}
stems from computing the least common multiple of the coefficients
at~$y$ in Step~I of the procedure.
Each of them can be as big as~$\norminf{\homterms{\aformula}}$,
and there can be $\card{\homterms{\aformula}}$-many of them.
Unfortunately, there does not appear to be a stronger upper bound
on the magnitude of their common multiple, even in subsequent rounds
of the QE procedure.
Indeed, $y$-free terms $t_1', \ldots, t_\ell'$ in the remaining variables
do not only get subtracted from one another
in $\Aformula_5^{i,r}$, 
but also get multiplied by factors~$p_j$ as they are in formulae~$\Aformula_4^{i,r}$.
These factors, computed at the beginning of Step~IV,
represent the limit density of
suitable assignments for $y$ in the intervals~%
$\aterm_{j-1}' < \avarbis < \aterm_{j}'$
that are long enough.
As such, they are model counts
of univariate quantifier-free formulae~$\aformulabis_{\aformulafour}^{i,r}$,
so a~priori nothing prevents many different factors~$p_j$
from taking different values in the range~$[0, m]$
and contributing to a big least common multiple in the next QE round.

\vspace{-2ex}
\subparagraph*{\threeexptime decision procedures.}
We will now explain how this growth of parameters can be countered for more restricted quantifiers,
arriving at a \threeexptime quantifier elimination and decision procedures.
This analysis relies on and extends~\Cref{lemma:bound-quantifier-elimination}.
The following theorem is our main result.

\begin{restatable}{theorem}{TheoremThresholdThreeExp}
\label{theorem:threshold}
There is a \threeexptime quantifier elimination procedure for
Presburger arithmetic with threshold counting quantifiers~$\exists^{\geq c}\avarbis$.
\end{restatable}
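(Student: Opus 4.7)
My plan is to apply the QE procedure from Section~\ref{section:quantifier-elimination} iteratively to eliminate each threshold counting quantifier, refining the analysis of Lemma~\ref{lemma:bound-quantifier-elimination} in the threshold case to counter the non-elementary growth discussed above. Two structural observations about $\exists^{\geq c} \avarbis$ with $c$ a fixed binary constant are exploited. First, when the procedure is applied to $\exists^{\geq c} \avarbis\, \aformula$, the constant~$c$ enters the resulting formula $\Aformula_5$ only as a constant factor on the LHS of a single inequality $m c \leq \mathrm{RHS}$ per disjunct; it does not enter any coefficient of a free variable, so $\homterms{\Aformula_5}$ and $\fmod{\Aformula_5}$ do not depend on~$c$. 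Second, the threshold comparison $m c \leq \mathrm{RHS}$ is sensitive only to the overall magnitude of the RHS rather than to the precise $p_j$-weighted decomposition produced in Step~IV.

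The core technical task is to refine the bound $k \leq \norminf{\homterms{\aformula}}^{\card{\homterms{\aformula}}}$ from Lemma~\ref{lemma:bound-quantifier-elimination}. As noted in the discussion preceding the theorem, the factors $p_j$ produced in Step~IV feed back as coefficients of free variables in the next round, and the naive accounting lets them span the whole range $[0,m]$, driving the LCM of the coefficients of the next counting variable non-elementarily high. I aim to show that, for threshold quantifiers, the distinct values of~$p_j$ that can actually influence subsequent rounds are only singly exponential in the original input size: $p_j$'s that are close enough can be bucketed together without affecting the truth of the inequalities $m c \leq \mathrm{RHS}$, because $m c$ and the maximum possible RHS are both controlled by the input. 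This replaces the exponent $\card{\homterms{\aformula}}$ in the bound on~$k$ by a quantity whose growth across rounds stays tame.

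With this refinement in hand, I iterate the QE procedure across the $O(\abs{\aformula})$ threshold counting quantifiers in the input formula, so that each round increases the parameters $\card{\linterms}$, $\card{\homterms}$, $\norminf{\linterms}$, $\norminf{\homterms}$, $\fmod$ by at most a singly exponential factor in the input size (and not in the current formula size). Combining this with the bounds of Lemma~\ref{lemma:bound-quantifier-elimination} in the style of Oppen's~\cite{Opp78} analysis of standard Presburger, the final quantifier-free formula has size at most $2^{2^{2^{O(\abs{\aformula})}}}$, and can be constructed and evaluated in \threeexptime; the resulting procedure is therefore a \threeexptime QE procedure.

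The main obstacle is the refined bound on the distinct $p_j$ values propagated across rounds: proving that bucketing can be done soundly for threshold quantifiers while preserving the correctness of the resulting threshold inequalities. This requires a detailed structural analysis of how Step~IV outputs feed into Step~I of the next round, and likely a modification of Step~IV that coarsens the set of coefficients~$p_j$ it outputs to a singly-exponential-size set of representatives, together with a careful re-derivation of the equivalences in Claims~\ref{claim:psi4-remove-cq-ineq}--\ref{claim:psi5} under this coarsening.
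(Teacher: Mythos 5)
Your proposal identifies the same high-level plan as the paper's proof (iterate the QE procedure, refine the bound of Lemma~\ref{lemma:bound-quantifier-elimination} for the fixed-constant case, show singly-exponential growth per round) and correctly locates the culprit: the weights~$p_j$ produced in Step~IV re-enter as coefficients of free variables, driving the least common multiple~$k$ in Step~I of the next round sky-high. However, the mechanism you propose for controlling this is not the paper's, and as stated it does not work.

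Your ``bucketing'' claim --- that $p_j$'s which are close enough can be merged ``without affecting the truth of the inequalities $mc \leq \mathrm{RHS}$, because $mc$ and the maximum possible RHS are both controlled by the input'' --- is false: the RHS contains $p_j(t_j' - t_{j-1}')$ where $t_j' - t_{j-1}'$ is a linear term in the remaining free variables, whose value under an arbitrary assignment is unbounded. Perturbing a single~$p_j$ by~$1$ can change the truth of the inequality for some assignment. Moreover, even a singly-exponential bound on the \emph{number} of distinct $p_j$-values would not directly tame~$k$, since $k$ is an $\lcm$ and depends on \emph{which} values occur, not merely how many.

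What the paper actually does (Lemma~\ref{lemma:simplifying-threshold}, proved via Lemma~\ref{lemma:bound-number-of-orderings}) is stronger: it eliminates the $p_j$'s from the output entirely. After substituting~$c$, the single inequality $mc \leq \sum_j p_j(t_j'-t_{j-1}') + r_j + m\sum c_j$ is rearranged to $e \leq \sum_j p_j(t_j'-t_{j-1}')$, and one then case-splits on orderings of the set $\{t_j' - t_{j-1}' : j\} \cup [0,e]$. In each ordering either $t_j'-t_{j-1}'$ is pinned to a concrete integer $i_j \in [0,e]$, or it exceeds~$e$ and can be treated as $i_j=e$. The arithmetic check $e \leq \sum_j p_j i_j$ is then decided \emph{numerically}, and each satisfiable ordering is replaced by a conjunction $\bigwedge_j (t_j' - t_{j-1}' \geq i_j)$ in which every coefficient on a variable is $\pm 1$ (the $p_j$'s and $i_j$'s survive only as constants). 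Consequently $\homterms{\Psi_6^c}$ consists just of differences of terms from~$T\cup\{0\}$, giving $\card{\homterms{\Psi_6^c}} = \BigO{\card{\homterms{\varphi}}^2}$ (Lemma~\ref{lemma:bound-quantifier-elimination-threshold-simplified}), which is the bound that actually stops the non-elementary growth across rounds in Lemma~\ref{lemma:bound-quantifier-elimination-threshold-d-quant}. The observations your proposal flags as ``the main obstacle'' are precisely the content of this lemma, and the coarsening you gesture at needs to be applied to the differences $t_j'-t_{j-1}'$ relative to the finite range $[0,e]$, not to the densities~$p_j$.
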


In essence, the procedure of Theorem~\ref{theorem:threshold}
is our main QE procedure from~\Cref{section:quantifier-elimination}
that treats the quantifier~$\exists^{\geq c}\avarbis$
as if it were~$\exists^{\geq z}\avarbis$.
After substituting~$c$ for~$z$ at the end, we are able to
improve the bound on~$\card{\homterms{\Aformula_5}}$
from~\Cref{lemma:bound-quantifier-elimination}.
This results in an elementary decision procedure.
We discuss details in~\Cref{section:elimination-treshold-quantifiers}.


Similarly to the case of quantifiers~$\exists^{=\avar}\avarbis$ mentioned in~\Cref{remark:equality-quantifier-elimination},
with very minor changes to the procedure
from Section~\ref{section:quantifier-elimination}
one can
eliminate quantifiers~$\exists^{(\avar,q)}\avarbis$ too,
with the same complexity bounds as in~\Cref{lemma:bound-quantifier-elimination}.

\begin{restatable}{theorem}{TheoremModuloThreeExp}
\label{theorem:modulo}
There is a \threeexptime quantifier elimination procedure for
Presburger arithmetic with modulo counting quantifiers
$\exists^{(\avar,q)}\avarbis$.
\end{restatable}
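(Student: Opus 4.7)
The plan is to reuse the QE procedure of~\Cref{section:quantifier-elimination} almost verbatim, modifying only Step~V. After Steps~I--IV, the numbers $c_j$, $p_j$, $r_j$ computed for each disjunct $\Gamma_{i,r}$ already encode the exact number of~$\avarbis$ that satisfy $\aformula$: by~\Cref{claim:psi4-remove-cq-eq} and~\Cref{claim:psi4-remove-cq-ineq}, this count equals $\sum_{j=1}^{\ell} c_j + (1/m)\sum_{j=2}^{\ell} (p_j(t_j' - t_{j-1}') + r_j)$. For the threshold quantifier, Step~V asserts that this count is $\geq \avar$; for modulo counting I would instead require it to be $\equiv \avar \pmod q$. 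Multiplying through by~$m$, I would redefine
\[
    \Aformula_5^{i,r} \egdef \sum_{j=2}^{\ell} \bigl(p_j(t_j'-t_{j-1}') + r_j\bigr) + m\sum_{j=1}^{\ell} c_j \,\equiv\, m\avar \pmod{mq},
\]
and set $\Aformula_5^{i,r} \egdef \false$ in the case where some $\aformulafour \in \{\avarbis < t_1',\ t_\ell' < \avarbis\}$ witnesses infinitely many solutions, since an infinite count never satisfies $\exists^{(\avar,q)}\avarbis$ under the semantics via $\exists^{=\avarter}\avarbis$ fixed in~\Cref{section:preliminaries}.

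Correctness of the modified procedure follows from~\Cref{claim:psi1}--\Cref{claim:psi4} without change, since those claims do not depend on how the computed count is compared to the value of~$\avar$. The output $\Aformula_5$ is quantifier-free, being a Boolean combination of linear inequalities and simple modulo constraints. Its parameters match those in~\Cref{lemma:bound-quantifier-elimination} up to replacing $\fmod{\Aformula_5} = \{m\}$ by $\{mq\}$ and adding a single extra homogeneous term $m\avar$, yielding precisely the modulo-counting analogue of~\Cref{remark:equality-quantifier-elimination}.

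To derive the \threeexptime bound I would mirror the argument used for~\Cref{theorem:threshold}. The non-elementary blow-up identified after~\Cref{lemma:bound-quantifier-elimination} traces back to $\card{\homterms{\Aformula_5}}$ growing exponentially in $\card{\homterms{\aformula}}$, which in turn inflates the factor~$k$ computed by Step~I of the next round. The saving in the modulo-counting case is structurally identical to the threshold case: the modulus~$q$ enters the output only as a multiplicative factor in~$mq$, and the counted variable~$\avar$ appears only once, with coefficient~$m$, inside the new modular constraint. Consequently the refined bound on $\card{\homterms{\Aformula_5}}$ underlying~\Cref{theorem:threshold}, polynomial in $\card{\homterms{\aformula}}$ together with the $y$-free differences $t_j' - t_{j-1}'$, can be reused essentially unchanged, and induction on quantifier depth yields a triple-exponential upper bound. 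The main obstacle will be to make this refined bound rigorous in the presence of the additional factor~$q$ in the output modulus: specifically, to verify that the interplay between~$q$ and the density multipliers~$p_j$ in subsequent rounds does not reintroduce the exponential-in-$\card{\homterms{\aformula}}$ blow-up that the refined analysis is designed to avoid.
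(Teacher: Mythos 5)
Your plan diverges from the paper's route but arrives at the same result, and your key idea is sound. The paper first runs a full $\exists^{=\avarter}\avarbis$ elimination (Appendix~\ref{subsection:exists-equal-elim}), obtaining $\Aformula_5^=$ with the \emph{equality} $m\avarter = \sum_j(p_j(t_j'-t_{j-1}')+r_j)+m\sum_j c_j$ still present; it then eliminates the remaining $\exists\avarter\,(\avarter\equiv_q\avar\land\cdots)$ in two extra Steps VI--VII by guessing residues of all variables modulo $mq$ and \emph{evaluating} the resulting modulo constraint on that residue assignment, which literally removes the $p_j$-laden term from the output, leaving each disjunct to be either $\false$ or $\Gamma_{i,s}$. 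You instead fold the constraint $\avarter\equiv_q\avar$ directly into Step~V, producing a modulo constraint of modulus~$mq$ in place of an inequality or equality. Both routes are correct; the effect on $\homterms$ is the same but for different reasons. In the paper's version the $p_j$-laden expression disappears by evaluation; in yours it survives, but because $\homterms{\cdot}$ and $\linterms{\cdot}$ are by definition computed only from \emph{linear inequalities} (not modulo constraints), your constraint simply does not enter $\homterms{\Aformula_5}$ at all. Consequently $\homterms{\Aformula_5}$ comes entirely from the orderings $\ordering_i$, i.e., from differences of terms in $T\cup\{0\}$, giving $\card{\homterms{\Aformula_5}}\le(\card{\homterms{\aformula}}+1)^2$ exactly as in Lemma~\ref{lemma:bound-quantifier-elimination-modulo-simplified}. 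Your approach buys you two fewer procedural steps and a cleaner correctness argument; the paper's approach buys you a purely simple-modulo-constraint output, which spares the following round a call to Lemma~\ref{lemma:make-modulo-simple}, though that call is harmless since it never adds to $\linterms$ or $\homterms$.

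Two corrections. First, your sentence ``adding a single extra homogeneous term $m\avar$'' is wrong: $m\avar$ sits in a modulo constraint, not an inequality, so it contributes nothing to $\homterms{\Aformula_5}$; this actually strengthens your argument rather than weakening it. Second, the ``main obstacle'' you flag --- an interplay between $q$ and the density multipliers $p_j$ reintroducing exponential blow-up --- is illusory, for the very reason just stated: the $p_j$-weighted terms live exclusively inside a modulo constraint. When the next round applies Step~I and Lemma~\ref{lemma:make-modulo-simple} to it, those terms are evaluated at guessed residues modulo $mq$ and reduced to $\true$/$\false$, never touching $\homterms$ or the $\lcm$ of $\avarbis$-coefficients computed in the next round. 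The only place $q$ actually survives is as a multiplicative factor of the output modulus, and that propagates exactly as $\bar{q}^{B^d}$ in the quantity $D_d$ in the proof of Lemma~\ref{lemma:bound-quantifier-elimination-modulo-d-quant}, well within triple-exponential. You should also note that ``modifying only Step~V'' is a slight understatement: you need the $\exists^{=\cdot}$ variants of Steps~III--IV (equalities instead of inequalities, $\false$ instead of $\true$ for the infinite segments), not the $\exists^{\geq\cdot}$ ones, and the corresponding restatements of Claims~\ref{claim:psi3}--\ref{claim:psi4}; these are minor but not zero.
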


A proof outline for Theorem~\ref{theorem:modulo} is given in~\Cref{subsection:elimination-modulo-quantifiers}.

\section{Eliminating threshold counting quantifiers}
\label{section:elimination-treshold-quantifiers}

In this section, we extend the quantifier elimination procedure of~\Cref{section:quantifier-elimination}
in order to directly deal with the threshold counting quantifiers~$\exists^{\geq c} \avarbis$.
Afterwards, we provide the complexity analysis of the quantifier elimination procedure.

Consider a formula~$\Aformula_0 = \exists^{\geq c}\avarbis\,\aformula$ with $\varphi$ quantifier free.
By definition,~$\Aformula_0$ is equivalent to the formula~$\exists \avarter\,(\avarter = c \land \exists^{\geq \avarter} \avarbis\,\aformula)$ for some variable~$\avarter$ not occurring in~$\aformula$. 
In order to eliminate the threshold counting quantifier~$\exists^{\geq c}\avarbis$, 
we first perform the quantifier elimination procedure described in~\Cref{section:quantifier-elimination} on input~$\exists^{\geq \avarter} \avarbis\,\aformula$, obtaining the formula $\Aformula_5$.
We then eliminate the existential quantifier~$\exists \avarter$ from the formula~$\exists \avarter\,(\avarter = c \land \Aformula_5)$ by relying on the ad-hoc procedure we now describe.
As explained in~\Cref{s:summary}, the prefix ``$\exists \avarter\,(\avarter = c \,\land$'' allows us to drastically simplify the set of homogeneous terms in~$\Aformula_5$, leading to~\threeexptime.

\subparagraph*{Dealing with threshold quantifiers in a single step.}
With~$\Aformula_0$ and~$\Aformula_5$ defined as above,
we have
$\Aformula_0$ $\fequiv$ $\exists \avarter\,(\avarter = c \land \Aformula_5)$ $\fequiv$ $\Aformula_5\substitute{\avarter}{c}$.
Recall that $\Aformula_5$ is defined as 
\begin{center}
$\Aformula_5 \egdef \bigvee_{i \in [1,o]}\bigvee_{r\colon Z\to [m]} (\Gamma_{i,r} \land \Aformula_5^{i,r})$.
\end{center}
Here, $Z = \vars{\aformula}$, $m = \lcm(\fmod{\Aformula_1})$ (defined as in the Step II of the procedure) and~$\Gamma_{i,r}$ is a conjunction of inequalities and simple modulo constraints with variables from~$Z$ 
(hence,~\mbox{$\avarter$-free}). 
Therefore,~$\Gamma_{i,r}\substitute{\avarter}{c} = \Gamma_{i,r}$.
Moreover,~$\Aformula_5^{i,r}$ is either $\true$ or a formula of the form 
\begin{equation}
    \label{equation:5ir-form}
    m\avarter \leq p_2(t_{2}' - t_{1}') + r_2 + \dots + p_\ell(t_{\ell}' - t_{\ell-1}') + r_\ell + m(c_1 + \dots + c_\ell).
\end{equation}
where the terms $\aterm_1',\dots,\aterm_{\ell}'$ are from~$T \cup \{0\}$ (with~$T$ defined as in Step II of~\Cref{section:quantifier-elimination}), and thus written with variables from~$Z$. 
Therefore, the following property holds:
\begin{claim}
    \label{claim:zed-almost-free}
    In~$\Aformula_5$, $\avarter$ only appears on the left hand side of 
    inequalities of the form~\eqref{equation:5ir-form}. 
\end{claim}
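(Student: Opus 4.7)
The strategy is a straightforward bookkeeping exercise: we trace the occurrences of $\avarter$ through each of the five steps of the QE procedure of~\Cref{section:quantifier-elimination}, applied to the input $\exists^{\geq \avarter} \avarbis \, \aformula$. The key starting observation is that because $\avarter$ is chosen fresh, $\aformula$ itself is $\avarter$-free, so the only initial occurrence of $\avarter$ is as the witness of the outer counting quantifier $\exists^{\geq \avarter}$.

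Steps~I and~II do not introduce $\avarter$. Step~I only rewrites atoms based on the coefficients of $\avarbis$ and prepends a single congruence $\avarbis \equiv_k 0$; hence $\Aformula_1$ still features $\avarter$ only as the witness of the top-level $\exists^{\geq \avarter}$. In Step~II, the set $T$ of $\avarbis$-free linear terms is extracted from $\linterms{\Aformula_1}$, which, inside the outer counting quantifier, uses only variables from $\vars{\aformula} \cup \{\avarbis\}$; moreover the residue assignments range over $Z = \vars{\aformula}$. Since both $T$ and $Z$ are $\avarter$-free, each $\Gamma_{i,r}$ is $\avarter$-free. Step~III then introduces fresh variables $\avar_0,\dots,\avar_{2\ell}$ and places $\avarter$ only in the single inequality $\avarter \leq \avar_0 + \cdots + \avar_{2\ell}$; the subformulas $\aformulafour_j \land \aformulabis^{i,r}_{\aformulafour_j}$ mention only $\avarbis$ together with terms from $T$, so they remain $\avarter$-free.

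Step~IV eliminates the inner counting quantifiers via~\Cref{claim:psi4-remove-cq-eq,claim:psi4-remove-cq-ineq}; the inequalities produced bound the fresh witnesses $\avar_j$ and do not touch $\avarter$, so $\Aformula_4^{i,r}$ is either $\true$ or features $\avarter$ only on the left of $\avarter \leq \sum_{j \geq 2} \avar_j + \sum_{j} c_j$. Step~V then eliminates the $\avar_j$ by summing their upper bounds and clearing the denominator, yielding either $\true$ or exactly the inequality~\eqref{equation:5ir-form}, in which $\avarter$ occurs only once, on the left as $m\avarter$. Since $\Aformula_5 = \bigvee_{i,r}(\Gamma_{i,r} \land \Aformula_5^{i,r})$ and every $\Gamma_{i,r}$ is $\avarter$-free, the claim follows.

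No step presents a genuine obstacle; the argument is purely a matter of propagating the $\avarter$-freeness invariant. The only mild care needed is to recognise that the terms aggregated in Step~II live inside the scope of the outer counting quantifier (hence use only variables from $\vars{\aformula}$, not from $\vars{\Aformula_1}$, the latter containing $\avarter$), and that the witnesses introduced in Step~III are fresh, so none of the subsequent manipulations performed by Steps~IV and~V can inadvertently reintroduce $\avarter$ on the right-hand side of an inequality.
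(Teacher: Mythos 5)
Your proposal is correct and matches the paper's reasoning. The paper does not give a formal stand-alone proof of this claim; instead it justifies it inline in the surrounding text by observing that $\Gamma_{i,r}$ uses only variables from $Z = \vars{\aformula}$ (hence is $\avarter$-free) and that the terms $t_1', \dots, t_\ell'$ in $\Aformula_5^{i,r}$ come from $T \cup \{0\}$ and thus also use only variables from $Z$ — precisely the invariant you propagate step by step.
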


We manipulate the disjuncts of~$\Aformula_5\substitute{\avarter}{c}$ separately.
Fix $i \in [1,o]$ and~${r \colon Z \to [m]}$. We define a formula equivalent to~$\Aformula_5^{i,r}\substitute{\avarter}{c}$ by relying on the following lemma, where~${d = \vars{\Aformula_5^{i,r}}}$. 

\begin{restatable}{lemma}{LemmaSimplifyingThreshold}
    \label{lemma:simplifying-threshold}
    Consider $\Aformula_5^{i,r}$ as in~\eqref{equation:5ir-form}. Let $e = m(c - \sum_{j = 1}^\ell c_j) - \sum_{j = 2}^{\ell} r_j$.
    It is possible to compute in time $(e + \ell)^{\BigO{d}} \log(c\cdot\norminf{\Aformula_5^{i,r}})^{\BigO{1}}$ 
    a formula~${\aformulater_{i,r} = \bigvee_{(i_2,\dots,i_\ell) \in I} \bigwedge_{j \in [2,\ell]} \aterm_j' - \aterm_{j-1}' \geq i_j}$ such that (1) $I \subseteq [0,e]^\ell$, (2) $\card{I} \leq \BigO{(e+\ell)^{2d}}$, and (3) 
    $\Gamma_{i,r} \land \Aformula_5^{i,r}\substitute{\avarter}{c}$ \,$\fequiv$ \,$\Gamma_{i,r} \land \aformulater_{i,r}$.
\end{restatable}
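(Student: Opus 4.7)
The plan is to reformulate $\Aformula_5^{i,r}\substitute{\avarter}{c}$ as a single linear inequality on the \emph{gap terms} $s_j \egdef \aterm_j' - \aterm_{j-1}'$ for $j \in [2,\ell]$, and then replace this inequality, under $\Gamma_{i,r}$, by a disjunction of ``box conditions'' on the $s_j$. Substituting $\avarter = c$ in~\eqref{equation:5ir-form} and rearranging shows that $\Aformula_5^{i,r}\substitute{\avarter}{c}$ is equivalent to $e \leq \sum_{j=2}^{\ell} p_j s_j$, where the $p_j \in [0,m]$ are the non-negative integers from Step~IV. Under $\Gamma_{i,r}$, the ordering $\ordering_i$ forces $\aterm_1' < \dots < \aterm_\ell'$, so each $s_j$ evaluates to an integer $\geq 1$. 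If $e \leq 0$ the inequality is vacuous under $\Gamma_{i,r}$ and I take $I \egdef \{(0,\dots,0)\}$; henceforth assume $e \geq 1$.

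Define the \emph{truncation map} $\phi(\aeval) \egdef (\min(s_j(\aeval), e))_{j=2}^{\ell}$, and let $I$ be the image of $\phi$ over the assignments satisfying $\Gamma_{i,r} \land \Aformula_5^{i,r}\substitute{\avarter}{c}$. By construction $I \subseteq [0,e]^{\ell-1}$, giving item~(1). For the equivalence in item~(3), the direction $\Aformula_5^{i,r}\substitute{\avarter}{c} \Rightarrow \aformulater_{i,r}$ (under $\Gamma_{i,r}$) is immediate, since any witnessing $\aeval$ satisfies $s_j(\aeval) \geq (\phi(\aeval))_j$ for $\phi(\aeval) \in I$.

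For the converse direction, suppose $\aeval \models \Gamma_{i,r} \land \bigwedge_j (s_j \geq i_j)$ for some $(i_j) \in I$, witnessed by some $\aeval^*$ with $\phi(\aeval^*) = (i_j)$. The key observation is that truncation preserves the inequality: from $\sum_j p_j s_j(\aeval^*) \geq e$ one deduces $\sum_j p_j i_j \geq e$. Indeed, either the partial sum over indices with $s_j(\aeval^*) \leq e$ (on which $i_j = s_j(\aeval^*)$) already reaches $e$, or some $j_0$ has $s_{j_0}(\aeval^*) > e$ with $p_{j_0} \geq 1$, in which case $p_{j_0} i_{j_0} = p_{j_0} e \geq e$; coordinates with $p_j = 0$ can be ignored throughout. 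Combining with $s_j(\aeval) \geq i_j \geq 0$ and $p_j \geq 0$ yields $\sum_j p_j s_j(\aeval) \geq \sum_j p_j i_j \geq e$, as required.

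For the size bound (item~(2)) and the computation, I apply Lemma~\ref{lemma:bound-number-of-orderings} to the set $T^\star \egdef \{s_2,\dots,s_\ell\} \cup \{0,1,\dots,e\}$ of $\mathcal{O}(\ell+e)$ linear terms over the $d$ variables of $\Aformula_5^{i,r}$. This returns at most $\mathcal{O}((\ell+e)^{2d})$ pairwise disjoint orderings partitioning the assignment space; within each ordering, the relative position of every $s_j$ with respect to each $k \in [0,e]$ is fixed, so $\phi$ is constant on the ordering and $\card{I} \leq \mathcal{O}((\ell+e)^{2d})$. To compute $I$ within the time budget, I enumerate these orderings, read off the candidate tuple $(i_j)$ from each (by which of $\{0,\dots,e\}$ the ordering places $s_j$ equal to, or by the marker ``$s_j > e$''), test $\sum_j p_j i_j \geq e$, and discard orderings whose conjunction with $\Gamma_{i,r}$ is unsatisfiable (a fixed-dimension Presburger check absorbed in the $\log(c \cdot \norminf{\Aformula_5^{i,r}})^{\mathcal{O}(1)}$ factor). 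The delicate step in this plan is the truncation claim used in the converse direction: coordinates with $p_j = 0$ must be handled separately, after which the dichotomy ``the $s_j(\aeval^*) \leq e$ coordinates alone suffice, or some $p_{j_0} \geq 1$ has $s_{j_0}(\aeval^*) > e$'' closes the argument.
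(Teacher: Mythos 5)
Your proof is correct and takes essentially the same approach as the paper: both apply Lemma~\ref{lemma:bound-number-of-orderings} to the set of gap terms together with the integers $[0,e]$, use the resulting orderings to read off a candidate tuple $(i_j)$ for each disjunct, keep only the orderings whose tuple passes the test $\sum_j p_j i_j \geq e$, and output the disjunction of box conditions $\bigwedge_j s_j \geq i_j$. Your framing via the truncation map $\phi$ and the explicit dichotomy for why truncation preserves the inequality (all nonzero-weight gaps $\leq e$ versus some $p_{j_0} \geq 1$ with $s_{j_0} > e$) is a clean articulation of the same argument that the paper carries out by incrementally rewriting $\psi_0,\dots,\psi_{o'}$.
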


\noindent
To prove~\Cref{lemma:simplifying-threshold}, we apply~\Cref{lemma:bound-number-of-orderings} on the set of terms ${\{\aterm_{j}' - \aterm_{j-1}' \mid j \in [2,\ell]\} \cup [0,e]}$, and manipulate the resulting tautology to filter out orderings that do not satisfy~$\Aformula_5^{i,r}\substitute{\avarter}{c}$.

The QE procedure proceeds as follows.

\begin{itemize}
\item For every $i \in [1,o]$ and $r \colon Z \to [m]$,
    \begin{itemize}
    \item if $\Aformula_5^{i,r} = \true$, then let $\Aformula_6^{i,r}  \, \egdef \, \true$, 
    \item else let $\Aformula_6^{i,r} \, \egdef \, \aformulater_{i,r}$, according to $\text{\Cref{lemma:simplifying-threshold}}$.
    \end{itemize}
\item Let $\Aformula_6^c \, \egdef \, \bigvee_{i \in [1,o]}\bigvee_{r\colon Z\to [m]} (\Gamma_{i,r} \land \Aformula_6^{i,r})$. 
\end{itemize}

After defining~$\Aformula_6^c$, the procedure ends.
Notice that the inequalities~$\aterm_j' - \aterm_{j-1}' \geq i_j$ that replace the inequalities given in~\eqref{equation:5ir-form} are such that $\aterm_{j-1}',\aterm_{j}' \in T \cup \{0\}$. This leads to a better bound on the size of the set~$\homterms{\Aformula_6^c}$ (more precisely, quadratic on~$\card{\homterms{\aformula}}$), which 
ultimately enables us to establish the~\threeexptime membership of Presburger arithmetic with threshold counting quantifiers.

\begin{restatable}{claim}{ClaimThresholdQEProcedure}
    $\Aformula_0 \fequiv \Aformula_6^c$. The formula $\Aformula_6^c$ is quantifier-free.
\end{restatable}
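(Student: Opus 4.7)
\medskip

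The plan is to chain together the equivalences established earlier in the section. First I would unfold the definition of the threshold counting quantifier: by construction, $\Aformula_0 = \exists^{\geq c} y\, \aformula$ is syntactic sugar for $\exists z\,(z = c \land \exists^{\geq z} y\, \aformula)$, which in turn is logically equivalent to $(\exists^{\geq z} y\, \aformula)\substitute{z}{c}$ since the only role of $z$ is to stand for $c$. Applying~\Cref{theorem:counting-quantifier-elimination} (more precisely, the QE procedure of~\Cref{section:quantifier-elimination}) to $\exists^{\geq z} y\, \aformula$ yields $\Aformula_5$, and because the procedure preserves logical equivalence (by Claims~\ref{claim:psi1} through~\ref{claim:psi5}), we have $\Aformula_0 \fequiv \Aformula_5\substitute{z}{c}$.

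Next I would distribute the substitution over the disjunction
$\Aformula_5 = \bigvee_{i \in [1,o]}\bigvee_{r\colon Z\to [m]} (\Gamma_{i,r} \land \Aformula_5^{i,r})$.
By~\Cref{claim:zed-almost-free}, $z$ does not occur in $\Gamma_{i,r}$, so $\Gamma_{i,r}\substitute{z}{c} = \Gamma_{i,r}$ and
\[
\Aformula_5\substitute{z}{c} \ = \ \bigvee_{i \in [1,o]}\bigvee_{r\colon Z\to [m]} \bigl(\Gamma_{i,r} \land \Aformula_5^{i,r}\substitute{z}{c}\bigr).
\]
For each disjunct I would then split on two cases. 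If $\Aformula_5^{i,r} = \true$, the substitution leaves it unchanged and the disjunct equals $\Gamma_{i,r}$, which by definition matches $\Gamma_{i,r} \land \Aformula_6^{i,r}$ with $\Aformula_6^{i,r} = \true$. Otherwise, $\Aformula_5^{i,r}$ has the shape~\eqref{equation:5ir-form}, so I would invoke~\Cref{lemma:simplifying-threshold} to obtain $\Gamma_{i,r} \land \Aformula_5^{i,r}\substitute{z}{c} \fequiv \Gamma_{i,r} \land \aformulater_{i,r} = \Gamma_{i,r} \land \Aformula_6^{i,r}$. Taking the disjunction of these equivalences over all $(i,r)$ yields $\Aformula_5\substitute{z}{c} \fequiv \Aformula_6^c$, and combining with the first paragraph gives $\Aformula_0 \fequiv \Aformula_6^c$.

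Finally, to verify that $\Aformula_6^c$ is quantifier-free, I would observe that each $\Gamma_{i,r}$ is by construction a conjunction of linear (in)equalities and simple modulo constraints, and each $\Aformula_6^{i,r}$ is either $\true$ or the formula $\aformulater_{i,r}$ supplied by~\Cref{lemma:simplifying-threshold}, which is a finite disjunction of conjunctions of linear inequalities $\aterm_j' - \aterm_{j-1}' \geq i_j$. Hence no quantifier (first-order or counting) appears in $\Aformula_6^c$.

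There is no real obstacle here, since all the semantic content has already been discharged by~\Cref{theorem:counting-quantifier-elimination},~\Cref{claim:zed-almost-free} and~\Cref{lemma:simplifying-threshold}; the only thing to check is that the substitution $\substitute{z}{c}$ interacts cleanly with the disjunctive form of $\Aformula_5$, which is precisely what~\Cref{claim:zed-almost-free} guarantees. The genuine work — controlling the blow-up induced by substituting $c$ for $z$ and replacing the inequality~\eqref{equation:5ir-form} by a disjunction over orderings of the terms $\aterm_j' - \aterm_{j-1}'$ — is concentrated in~\Cref{lemma:simplifying-threshold} and is what will later enable the \threeexptime bound, but it is not needed for correctness of the present claim.
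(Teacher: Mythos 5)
Your proof is correct and follows essentially the same route as the paper's (very terse) argument: chain Claims~\ref{claim:psi1}--\ref{claim:psi5} to get $\Aformula_0 \fequiv \Aformula_5\substitute{\avarter}{c}$, then use \Cref{claim:zed-almost-free} and \Cref{lemma:simplifying-threshold} disjunct-by-disjunct to show $\Aformula_5\substitute{\avarter}{c} \fequiv \Aformula_6^c$. Your version is slightly more careful than the paper's one-line proof, which omits the substitution $\substitute{\avarter}{c}$ when writing ``$\Aformula_5 \fequiv \Aformula_6^c$''.
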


\subsection*{Proof idea for Theorem~\ref{theorem:threshold}}

The key role in the analysis is played by the following lemma.

\begin{restatable}{lemma}{LemmaBoundQuantifierThresholdSimplified}
    \label{lemma:bound-quantifier-elimination-threshold-simplified}
        $\card{\homterms{\Aformula_6^c}} = \BigO{\card{\homterms{\aformula}}^2}$.
\end{restatable}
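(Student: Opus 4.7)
The plan is to trace every homogeneous term of $\Aformula_6^c$ back to a single homogeneous term of~$\aformula$ via the $y$-free set~$T$ defined in Step~II. Since all linear inequalities occurring in $\Aformula_6^c$ turn out to have the form ``difference of two elements of $T \cup \{0\}$'', and since the set of homogeneous parts of $T \cup \{0\}$ has size at most $|\homterms{\aformula}| + 1$, the quadratic bound falls out by squaring.

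The first step is a syntactic walk through the procedure in order to verify that every linear inequality in $\Aformula_6^c = \bigvee_{i,r}(\Gamma_{i,r} \land \Aformula_6^{i,r})$ has non-constant part $t_a - t_b$ with $t_a, t_b \in T \cup \{0\}$. There are only two sources for such inequalities: (a) the orderings $\ordering_i$ sitting inside~$\Gamma_{i,r}$, which by construction in Step~II compare terms in $T \cup \{0\}$; and (b) the atoms $\aterm_j' - \aterm_{j-1}' \ge i_j$ produced by~\Cref{lemma:simplifying-threshold}, where by construction $\aterm_{j-1}', \aterm_j' \in T \cup \{0\}$. The simple modulo constraints $z \equiv_m r(z)$ inside $\Gamma_{i,r}$ are irrelevant here, since $\homterms{\cdot}$ only collects terms occurring in linear inequalities. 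Letting $t^h$ denote the homogeneous part of a term~$t$, this yields
\[
\homterms{\Aformula_6^c} \ \subseteq\ \{\, t_a^h - t_b^h \ \mid\ t_a, t_b \in T \cup \{0\}\,\}.
\]

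The second step bounds $T^h := \{\, t^h \mid t \in T \cup \{0\}\,\}$ using Step~I of the QE procedure from~\Cref{section:quantifier-elimination}. Each term $\aterm = a\avarbis + s \in \linterms{\aformula}$ contributes to $\linterms{\Aformula_1}$ either the unchanged term $s$ (when $a=0$) or the term $\pm\avarbis + (k/|a|)\,s$ (when $a \neq 0$); the associated element of $T$ is therefore a scalar multiple of $s$, and its homogeneous part a scalar multiple of~$s^h$, where $\aterm^h = a\avarbis + s^h$ is an element of~$\homterms{\aformula}$. This defines a surjection from $\homterms{\aformula}$ onto $T^h$, so $|T^h \cup \{0\}| \le |\homterms{\aformula}| + 1$. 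Combining with the inclusion of the first step gives
\[
|\homterms{\Aformula_6^c}|\ \le\ (|\homterms{\aformula}| + 1)^2\ =\ O(|\homterms{\aformula}|^2).
\]

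The main obstacle I anticipate lies in the syntactic inspection of the first step. One must carefully verify, for each of Steps~II--V as well as the construction in~\Cref{lemma:simplifying-threshold}, that no atomic inequality is introduced whose non-constant part falls outside the set of differences of elements of~$T \cup \{0\}$. The key subtlety is that the numerical shifts $i_j \in [0,e]$ appearing on the right-hand side of the atoms produced by~\Cref{lemma:simplifying-threshold} are pure constants, so they vanish under homogenisation, and the only homogeneous terms they contribute are of the form~$t_a^h - t_b^h$. Once this bookkeeping is cleanly in place, the rescaling argument and the squared-size bound are mechanical.
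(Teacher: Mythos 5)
Your proof is correct and follows essentially the same route as the paper's: the paper also argues (in the proof of the more detailed Lemma~C.1, from which Lemma~\ref{lemma:bound-quantifier-elimination-threshold-simplified} is extracted) that every inequality in $\Aformula_6^c$ compares two terms from $T \cup \{0\}$ and that the homogeneous parts of $T$ are obtained from $\homterms{\Aformula_1}$ (hence $\homterms{\aformula}$, since Step~I preserves cardinality) by deleting $y$, yielding the $(\card{\homterms{\aformula}}+1)^2$ bound. Your phrasing via an explicit surjection from $\homterms{\aformula}$ onto the homogeneous parts of $T$ is a slightly more detailed rendering of the same idea, not a different argument.
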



As already stated, this quadratic bound is key in order to obtain an elementary decision procedure.
In particular, this improvement over the ``baseline'' Lemma~\ref{lemma:bound-quantifier-elimination} leads to the following bounds on the elimination of an arbitrary number of threshold counting quantifiers.

\begin{restatable}{lemma}{LemmaBoundQuantTreshMultiple}
    \label{lemma:bound-quantifier-elimination-threshold-d-quant}
    Let $\aformula$ be a formula of Presburger arithmetic with threshold quantifiers.
    There is an equivalent quantifier-free formula~$\Aformula$ such that
    \begin{itemize}
        \setlength{\itemsep}{3pt}
        \item $\card{\linterms{\Aformula}}, \norminf{\linterms{\Aformula}}, \norminf{\homterms{\Aformula}}$ and $\norminf{\fmod{\Aformula}}$ are at most $2^{2^{2^{\BigO{\abs{\aformula}^2}}}}$,
        \item $\card{\homterms{\Aformula}} \leq 2^{2^{\BigO{\abs{\aformula}^2}}}$ and $\card{\fmod{\Aformula}} \leq \abs{\aformula}$.
    \end{itemize}
\end{restatable}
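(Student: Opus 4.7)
The plan is to prove the lemma by iterated application of the quantifier elimination procedure of~\Cref{section:elimination-treshold-quantifiers}, eliminating the threshold counting quantifiers one at a time from innermost to outermost. Let $k \leq \abs{\aformula}$ denote the number of such quantifiers and, for $i \in [0, k]$, let $\aformula_i$ be the formula obtained after eliminating the $i$ innermost threshold quantifiers; each step applies the procedure to a subformula of the form $\exists^{\geq c} y \, \psi$ with $\psi$ quantifier-free and substitutes the resulting $\Aformula_6^c$ in place. Throughout the induction on $i$ I carry along six quantities: $\card{\linterms{\aformula_i}}$, $\card{\homterms{\aformula_i}}$, $\norminf{\linterms{\aformula_i}}$, $\norminf{\homterms{\aformula_i}}$, $\norminf{\fmod{\aformula_i}}$, and $\card{\fmod{\aformula_i}}$, each of which is at most $\abs{\aformula}$ initially.

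The critical ingredient is~\Cref{lemma:bound-quantifier-elimination-threshold-simplified}, which yields the quadratic recurrence $\card{\homterms{\aformula_{i+1}}} = \BigO{\card{\homterms{\aformula_i}}^2}$. Iterating this $k \leq \abs{\aformula}$ times gives $\card{\homterms{\aformula_k}} \leq 2^{2^{\BigO{\abs{\aformula}^2}}}$. The remaining parameters are controlled by~\Cref{lemma:bound-quantifier-elimination}: each elimination creates a single new modulus $m_{i+1}$ with $\log m_{i+1} \leq \card{\homterms{\aformula_i}} \cdot \log \norminf{\homterms{\aformula_i}} + \log \lcm(\fmod{\aformula_i})$; the count $\card{\linterms{\aformula_{i+1}}}$ grows at most as $N^{\BigO{d}}$, while the norms $\norminf{\linterms{\aformula_{i+1}}}$ and $\norminf{\homterms{\aformula_{i+1}}}$ grow at most as $\BigO{N}$ times their previous values, where $N = m_i^2 \cdot \card{\linterms{\aformula_i}}$ and $d \leq \abs{\aformula}$ is the number of variables. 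Finally, $\card{\fmod{\aformula_k}} \leq \abs{\aformula}$ is immediate because each step contributes at most one new modulus.

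The main obstacle, and the reason the quadratic bound on $\card{\homterms{\cdot}}$ is essential, is to close the induction by showing that all six parameters \emph{simultaneously} fit within the triply exponential bound. Assuming inductively that $\card{\homterms{\aformula_i}} \leq 2^{2^{\BigO{\abs{\aformula}^2}}}$ and $\norminf{\homterms{\aformula_i}} \leq 2^{2^{2^{\BigO{\abs{\aformula}^2}}}}$, the product $\card{\homterms{\aformula_i}} \cdot \log \norminf{\homterms{\aformula_i}}$ remains doubly exponential, so $\log m_{i+1} \leq 2^{2^{\BigO{\abs{\aformula}^2}}}$ and $m_{i+1} \leq 2^{2^{2^{\BigO{\abs{\aformula}^2}}}}$. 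Substituting this back into $N = m_i^2 \cdot \card{\linterms{\aformula_i}}$ preserves a doubly exponential bound on $\log N$; raising to the $\BigO{d}$th power and multiplying by $\BigO{N}$ over $k \leq \abs{\aformula}$ iterations absorbs the factors of $d$ and $k$ into the $\BigO{\abs{\aformula}^2}$ exponent. Without the quadratic bound of~\Cref{lemma:bound-quantifier-elimination-threshold-simplified}, the dependence of $\log m_{i+1}$ on $\card{\homterms{\aformula_i}}$ would be exponential in $\card{\homterms{\aformula_{i-1}}}$, triggering a non-elementary blow-up; the quadratic growth is precisely what keeps the recurrence closed at the triply exponential level and thereby establishes all claimed bounds.
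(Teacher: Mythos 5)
Your proposal captures the essential strategy and the key insight of the paper's proof---the quadratic bound on $\card{\homterms{\cdot}}$ from \Cref{lemma:bound-quantifier-elimination-threshold-simplified} is exactly what prevents non-elementary blow-up---but there are two points to tighten. First, your induction variable is the total number of quantifiers, processing one counting quantifier at a time, whereas the paper's proof inducts on the quantifier depth~$d$, processing all quantifiers of a Boolean combination at the same level simultaneously and carrying explicit recurrence quantities $A_d, B, C_d, D_d, E_d, F_d$; these are equivalent bookkeeping schemes, and your version is arguably more intuitive, but it requires verifying that eliminating one quantifier at a time still keeps the untouched part of the formula in check (here the union bound on $\homterms{\cdot}$ suffices). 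Second, and more importantly, you cite \Cref{lemma:bound-quantifier-elimination} for the per-step bounds on $\card{\linterms{\cdot}}$ and the two norms, but that lemma bounds parameters of $\Aformula_5$ (the output of the generic $\exists^{\geq x}y$ procedure), not of $\Aformula_6^c$ (the actual output of the threshold procedure, which post-processes $\Aformula_5$ via \Cref{lemma:simplifying-threshold}); the correct reference is \Cref{lemma:bound-quantifier-elimination-threshold}, whose bounds are in fact sharper (polynomial in $\card{\linterms{\aformula}}$ and multiplicative in $k = $ lcm of $y$-coefficients rather than in $N$), so your looser invocation happens not to break the triply-exponential conclusion, but the justification as written is not quite applicable to the formula being substituted. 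With the citation corrected, and the induction hypothesis stated for all six tracked quantities (not just two, and including the base case), your argument aligns with the paper's.
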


\begin{proof}[Proof idea]
In a nutshell, elementary upper bounds of
Lemma~\ref{lemma:bound-quantifier-elimination-threshold-d-quant}
are obtained by first iterating
Lemma~\ref{lemma:bound-quantifier-elimination-threshold-simplified}
across \emph{all} quantifier elimination rounds.
This results in a doubly exponential bound on the cardinalities
of sets $\homterms{\Aformula_6^c}$ throughout the entire procedure.
With this bound in hand,
exponentiation in the right-hand side of the inequalities of
Lemma~\ref{lemma:bound-quantifier-elimination}
does not blow the parameters above triple exponential.
\end{proof}

%
Theorem~\ref{theorem:threshold} follows by combining
Lemma~\ref{lemma:bound-quantifier-elimination-threshold-d-quant}
with upper bounds on the running time of a single quantifier elimination round.
These upper bounds are all subsumed by the size of the \emph{obtained} formulae,
except possibly for
the procedures of~\Cref{lemma:bound-number-of-orderings,lemma:simplifying-threshold},
and
the model counting procedure of Lemma~\ref{lemma:sharp-sat}.
For~\Cref{lemma:bound-number-of-orderings,lemma:simplifying-threshold}, the running time is only exponential in the size of the \emph{original} formula, and thus runs in polynomial time on the size of the obtained formula, as soon as this formula has size at least exponential.
For~\Cref{lemma:sharp-sat}, observe that the factor~$m$ in the running time
cannot be more than the product of all elements of the set~$\homterms{\Aformula}$.
Hence, the bounds of
Lemma~\ref{lemma:bound-quantifier-elimination-threshold-d-quant} suffice
for a triply exponential time overall.%


\section{Eliminating modulo counting quantifiers}
\label{subsection:elimination-modulo-quantifiers}


Consider a formula~$\Aformula_0 =
\exists^{(\avar,q)}\avarbis\,\aformula$, where $\aformula$ is
quantifier-free.  By definition, $\Aformula_0$ is equivalent to
$\exists \avarter (\avarter \equiv_q \avar \land \exists^{=\avarter}
\avarbis\,\aformula)$, where~$\avarter$ is a variable not occurring
in~$\aformula$. Thus, in order to eliminate a modulo counting
quantifier, it makes sense to piggyback on a quantifier elimination
procedure for the $\exists^{=\avar}\avarbis$ counting quantifier. Due
to space constraints, we only briefly describe the main aspects of
eliminating an $\exists^{=\avar}\avarbis$ counting quantifier, further
details can be found in~\Cref{subsection:exists-equal-elim}. In its
essence, the quantifier elimination procedure mirrors Steps~I-V of the
procedure described in~\Cref{section:quantifier-elimination},
replacing the inequalities in the definitions of $\Psi_3^{i,r}$ and
$\Psi_4^{i,r}$ in Steps~3 and~4 with an equality. However, in addition
in Step~4, instead of setting $\Psi_4^{i,r}$ to $\top$ when there is
an infinite number of solutions, $\Psi_4^{i,r}$ is set to $\bot$ to
capture the semantics of the $\exists^{=\avar}\avarbis$ quantifier.
Apart from that, the quantifier-elimination procedure for the
$\exists^{=\avar}\avarbis$ quantifier inherits all the properties of
the one described in~\Cref{section:quantifier-elimination}.

Let $\Aformula_5^=$ be the formula obtained from performing the quantifier-elimination procedure for the $\exists^{=x}y$ counting quantifier on~$\exists^{= \avarter} \avarbis\,\aformula$, so that $\Aformula_0 \fequiv \exists \avarter (\avarter \equiv_q \avar \land \Aformula_5^=)$.
We have that $\Aformula_5^=$ is defined as 
\begin{center}
$\Aformula_5^= \egdef \bigvee_{i \in [1,o]}\bigvee_{r\colon Z\to [m]} (\Gamma_{i,r} \land \Aformula_5^{i,r})$,
\end{center}
where $Z = \vars{\aformula}$, $m = \lcm(\fmod{\Aformula_1})$ and~$\Gamma_{i,r} = \ordering_i \land (\bigwedge_{\avarfour \in Z} \avarfour \equiv_m r(\avarfour))$ is a conjunction of an ordering~$\ordering_i$ and simple modulo constraints~$\avarfour \equiv_m r(\avarfour)$ with variables from~$Z$ (and so,~$\avarter$-free).
Moreover, $\Aformula_5^{i,r}$ is either $\false$ or a formula of the form 
\begin{equation}
    \label{equation:5ir-form-eq}
    \textstyle m\avarter = \sum_{j=2}^{\ell}( p_j(t_{j}' - t_{j-1}') + r_j ) + m \cdot \sum_{j=1}^{\ell} c_j.
\end{equation}
where the terms $\aterm_1',\dots,\aterm_{\ell}'$ are from~$T \cup \{0\}$ (where $T$ is defined as in Step II of~\Cref{section:quantifier-elimination}), and 
hence~$\avarter$-free.
Analogously to~\Cref{section:elimination-treshold-quantifiers}, \Cref{claim:zed-almost-free}, the following property holds. 

\begin{claim}
    \label{claim:zed-almost-free-bis}
    In~$\Aformula_5^{=}$,~$\avarter$ only appears on the left hand side of 
    equalities of the form~\eqref{equation:5ir-form-eq}. 
\end{claim}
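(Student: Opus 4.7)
The plan is to verify this claim by direct syntactic inspection of the construction of $\Aformula_5^{=}$, mirroring the argument underpinning \Cref{claim:zed-almost-free} for the threshold variant. Since the modified procedure for $\exists^{=\avarter}\avarbis\,\aformula$ is applied with $\avarter$ playing exactly the role of $\avar$ in \Cref{section:quantifier-elimination}, and $\avarter$ is introduced as a fresh variable not in $\vars{\aformula}$, it suffices to track where $\avarter$ can enter the formula at each step of the QE procedure.

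First, I would observe that $\avarter \notin Z = \vars{\aformula}$ and that every term used to build $\Gamma_{i,r}$ comes from $T \cup \{0\}$, where $T$ consists of the $y$-free terms appearing in $\linterms{\Aformula_1}$. Since $\Aformula_1$ is obtained from $\aformula$ by normalising the coefficients of $\avarbis$, its terms involve only variables in $\vars{\aformula} \cup \{\avarbis\}$; hence the $y$-free terms in $T$ have variables in $Z$. It follows that $\Gamma_{i,r} = \ordering_i \land \bigwedge_{\avarfour \in Z} \avarfour \equiv_m r(\avarfour)$ is $\avarter$-free, and therefore $\avarter$ cannot appear anywhere in the ``scaffolding'' disjunction $\bigvee_{i,r}(\Gamma_{i,r} \land \cdot)$ except through the inner formulae $\Aformula_5^{i,r}$.

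Next, I would inspect how $\avarter$ enters $\Aformula_5^{i,r}$. In Step~III of the modified procedure, $\avarter$ only appears through the single equation $\avarter = \sum_j \avar_j$ (the analogue of the inequality in Step~III of \Cref{section:quantifier-elimination}) that binds the newly introduced auxiliary variables $\avar_j$. In Step~IV, the counting subformulae $\exists^{=\avar_j}\avarbis(\aformulafour_j \land \aformulabis^{i,r}_{\aformulafour_j})$ get replaced by equalities or constants that do not mention $\avarter$; and in Step~V, the auxiliary variables $\avar_j$ are eliminated by substitution, collapsing the whole subformula into the single linear equation~\eqref{equation:5ir-form-eq} (or into $\false$, in the case where Step~IV produced $\bot$ due to an infinite segment). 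In both cases, no other occurrence of $\avarter$ is introduced.

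The only subtle bookkeeping, which I would treat as the main obstacle, is to check that the ``$\false$'' branch genuinely prevents $\avarter$ from being carried into the final formula (so that no residual inequality or modulo constraint containing $\avarter$ survives), and that substitution in Step~V truly replaces every occurrence of every $\avar_j$ without leaving a free $\avarter$ somewhere other than as the left-hand side of~\eqref{equation:5ir-form-eq}. Both are routine but need to be matched step by step against the explicit rewrites in the adapted procedure described in \Cref{subsection:exists-equal-elim}. Once these are verified, \Cref{claim:zed-almost-free-bis} follows.
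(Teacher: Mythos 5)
Your proposal is correct and follows essentially the same reasoning as the paper: the claim is a direct syntactic observation about the explicit shape of $\Aformula_5^{=}$, where $\Gamma_{i,r}$ is built only from terms over $Z = \vars{\aformula}$ (and $\avarter$ is fresh, so $\avarter \notin Z$), and each $\Aformula_5^{i,r}$ is either $\false$ or an equation of the form~\eqref{equation:5ir-form-eq} whose right-hand side is built from $T \cup \{0\}$ and hence $\avarter$-free. The paper states this as a consequence of the displayed form of $\Aformula_5^{=}$; your tracing through Steps~III--V is a slightly more detailed way of justifying that form.
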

We manipulate~$\exists \avarter (\avarter \equiv_q \avar \land \Aformula_5^=)$ with the following steps, denoted by VI and VII to stress the fact that they are performed after the five steps of the $\exists^{=x}y$ quantifier elimination procedure.%

\vspace{-2ex}
\subparagraph*{Step VI: Subdivide the formula according the residue classes (again).}
To efficiently eliminate the existential quantifier of the formula~$\exists \avarter (\avarter \equiv_q \avar \land \Aformula_5^=)$, we first guess the residue classes of all variables in $Z \cup \{\avar\}$ modulo~$mq$ (instead of just~$m$, as done in~$\Aformula_5^=$ for the variables in~$Z$).
    \begin{itemize}
        \setlength{\itemsep}{2pt}
        \item Let $\aformulater = \bigvee_{s \colon (Z \cup \{\avar\}) \to [mq]} \bigvee_{i \in [1,o]}\bigvee_{r \colon Z \to [m]} ((\bigwedge_{\avarfour \in Z \cup \{\avar\}} \avarfour \equiv_{mq} s(\avarfour)) \land \Gamma_{i,r} \land \Aformula_5^{i,r})$.
        \item For every $s \colon (Z\cup\{\avar\}) \to [mq]$, every $i \in [1,o]$ and every $r \colon Z \to [m]$, consider the disjunct 
        $(\bigwedge_{\avarfour \in Z \cup \{\avar\}} \avarfour \equiv_{mq} s(\avarfour)) \land \Gamma_{i,r} \land \Aformula_5^{i,r}$ of the formula~$\aformulater$
        and evaluate every modulo constraint $\avarfour \equiv_m r(\avarfour)$ in $\Gamma_{i,r}$ ($\avarfour \in Z$) to $\true$ or $\false$, according to the truth of $s(\avarfour) \equiv_m r(\avarfour)$.
    \end{itemize}
    Since every function $r \colon Z \to [m]$ can be seen as a partial function from~$Z \cup \{\avar\}$ to~$[mq]$, after the two steps above, for every~$s \colon (Z\cup\{\avar\}) \to [mq]$ and~$i \in [1,o]$, all but one disjunct of the subformula $\bigvee_{r \colon Z \to [m]} ((\bigwedge_{\avarfour \in Z \cup \{\avar\}} \avarfour \equiv_{mq} s(\avarfour)) \land \Gamma_{i,r} \land \Aformula_5^{i,r})$ of~$\aformulater$ evaluate $\false$.
    We conclude that $\aformulater$ is equivalent to 
    \begin{center}
        $\bigvee_{i \in [1,o]}\bigvee_{s\colon (Z\cup \{\avar\}) \to [mq]} (\Gamma_{i,s} \land \Aformula_5^{i,s})$,
    \end{center}
    where $\Gamma_{i,s} \egdef \ordering_i \land (\bigwedge_{\avarfour \in Z\cup\{\avar\}} \avarfour \equiv_{mq} s(\avarfour))$ and $\Aformula_5^{i,s}$ 
    is~$\false$ or a formula as described~in~\eqref{equation:5ir-form-eq}.

    Let~$\Aformula_6^{(\avar,q)} \egdef \bigvee_{i \in [1,o]}\bigvee_{s\colon (Z\cup \{\avar\}) \to [mq]} \exists \avarter (\avarter \equiv_q \avar \land \Gamma_{i,s} \land \Aformula_5^{i,s})$, the following holds:

    \begin{restatable}{claim}{ClaimModAFiASix}
        \label{claim:mod-a5-a6}
        $\exists \avarter (\avarter \equiv_q \avar \land \Aformula_5^=)$ $\fequiv$ $\Aformula_6^{(\avar,q)}$.
    \end{restatable}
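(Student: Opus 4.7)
The plan is to justify each of the two manipulations performed in Step VI (refining residue classes and evaluating old simple modulo constraints) as an equivalence-preserving step, and then conclude by pushing the leading $\exists \avarter$ inside the disjunctions.

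First I would push the quantifier $\exists \avarter (\avarter \equiv_q \avar \land \cdot\,)$ inside the top-level disjunction that defines $\Aformula_5^=$: since $\avarter$ does not appear in $\Gamma_{i,r}$ (by~\Cref{claim:zed-almost-free-bis}, $\avarter$ only appears in the equalities of~\eqref{equation:5ir-form-eq} inside the $\Aformula_5^{i,r}$), this is an immediate logical equivalence, giving
\[
\exists \avarter\,(\avarter \equiv_q \avar \land \Aformula_5^=) \;\fequiv\; \bigvee_{i \in [1,o]} \bigvee_{r \colon Z \to [m]} \exists \avarter\,(\avarter \equiv_q \avar \land \Gamma_{i,r} \land \Aformula_5^{i,r}).
\]

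Next I would conjunct each disjunct with the tautology $\bigvee_{s \colon (Z \cup \{\avar\}) \to [mq]} \bigwedge_{\avarfour \in Z \cup \{\avar\}} \avarfour \equiv_{mq} s(\avarfour)$ and distribute to obtain the formula~$\aformulater$ described in Step VI. This is valid because the disjunction is a tautology and conjunction distributes over disjunction; moreover the new modulo constraints are $\avarter$-free, so they can be pulled outside of $\exists \avarter$. The key observation is then that for each fixed $s \colon (Z \cup \{\avar\}) \to [mq]$ and each $\avarfour \in Z$, the constraint $\avarfour \equiv_{mq} s(\avarfour)$ already determines $\avarfour \bmod m$, hence exactly one residue $r(\avarfour) \in [m]$ is compatible with $s(\avarfour)$, namely $s(\avarfour) \bmod m$. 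Thus for every $i$ and $s$, all disjuncts indexed by $r \colon Z \to [m]$ except one reduce to $\false$ after evaluating the simple modulo constraints $\avarfour \equiv_m r(\avarfour)$ of $\Gamma_{i,r}$, and the surviving disjunct is precisely the one where $r(\avarfour) = s(\avarfour) \bmod m$.

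It remains to package what remains. After this evaluation, the surviving subformula for each pair $(i,s)$ is $\Gamma_{i,s} \land \Aformula_5^{i,s}$, where $\Gamma_{i,s} = \ordering_i \land \bigwedge_{\avarfour \in Z \cup \{\avar\}} \avarfour \equiv_{mq} s(\avarfour)$, and $\Aformula_5^{i,s}$ is the formula $\Aformula_5^{i,r}$ attached to the unique compatible $r$. Pulling $\exists \avarter (\avarter \equiv_q \avar \land \cdot\,)$ back inside this disjunction (again because $\avarter$ only appears in $\Aformula_5^{i,s}$) yields exactly $\Aformula_6^{(\avar,q)}$, establishing the claim.

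The only delicate point is the bookkeeping for compatibility between the coarse residues $r$ and the refined residues $s$: one has to check that choosing $r(\avarfour) = s(\avarfour) \bmod m$ for $\avarfour \in Z$ both preserves satisfaction of $\Gamma_{i,r}$ under any assignment compatible with $s$, and ensures that the attached $\Aformula_5^{i,r}$ (which depends on the counts $c_j$, $p_j$, $r_j$ computed from $r$) is the intended one; this is immediate from the definitions but is the step most prone to an off-by-one or notational slip.
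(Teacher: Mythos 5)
Your proof is correct and takes essentially the same approach as the paper: insert the tautological disjunction of refined residues modulo $mq$, observe that each $s$ determines a unique compatible $r$ (so all other disjuncts collapse to $\false$), and distribute $\exists\avarter(\avarter \equiv_q \avar \land \cdot)$ over the resulting disjunction. The only difference is that you distribute the existential quantifier at the outset whereas the paper does so at the end, which is a purely presentational choice.
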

    
\vspace{-2ex}
\subparagraph*{Step VII: Eliminate existential quantifiers.}
We conclude the procedure by manipulating each disjunct~of~$\Aformula_6^{(x,q)}$ separately. 
Fix~$i \in [1,o]$ and $s \colon (Z \cup \{\avar\}) \to [mq]$.
We aim at defining a quantifier-free formula~$\aformulater_{i,s}$ equivalent to the disjunct $\exists \avarter (\avarter \equiv_q \avar \land \Gamma_{i,s} \land \Aformula_5^{i,s})$ of~$\Aformula_6^{(x,q)}$.

\begin{itemize}
    \setlength{\itemsep}{2pt}
    \item  If $\Aformula_5^{i,s} = \false$ then let $\aformulater_{i,s} = \false$. 
    \item  Else, $\Aformula_5^{i,s}$ has the form in~\Cref{equation:5ir-form-eq}.
    By modular arithmetic, $\avarter \equiv_q \avar \fequiv m\avarter \equiv_{mq} m\avar$. Consider the formula 
        $m\avar \equiv_{mq} \sum_{j=2}^{\ell}( p_j(t_{j}' - t_{j-1}') + r_j ) + m \cdot \sum_{j=1}^{\ell} c_j$.
    By~\Cref{claim:zed-almost-free-bis}, this formula has variables from~$Z \cup \{\avar\}$.
    Evaluate this formula on~$s$. 
    If it is found to be equivalent to~$\false$, let $\aformulater_{i,s} \egdef\,\false$. Otherwise, let $\aformulater_{i,s} \egdef \Gamma_{i,s}$.
    \item Let $\Aformula_7^{(\avar,q)} \egdef \bigvee_{i \in [1,o]}\bigvee_{s\colon (Z\cup \{\avar\}) \to [mq]} \aformulater_{i,s}$.
\end{itemize}
After defining~$\Aformula_7^{(\avar,q)}$, the procedure ends.
Notice that all the disjuncts of~$\Aformula_7^{(\avar,q)}$ are either~$\false$ or $\Gamma_{i,s} = \ordering_i \land \bigwedge_{\avarfour \in Z \cup \{\avar\}} \avarfour \equiv_{mq} s(\avarfour)$, where $\vars{\ordering_i} \subseteq Z \cup \{\avar\}$. The (in)equalities appearing in~$\ordering_i$ are of the form $\aterm \lhd \aterm'$, where $\lhd \in \{<,=\}$ and~$\aterm,\aterm' \in T \cup \{0\}$. Exactly as in the case of threshold quantifiers, this leads to~$\card{\homterms{\Aformula_7^{(\avar,q)}}} \leq (\card{T} \cup \{0\})^2$, which 
ultimately leads to a~\threeexptime running time for the QE procedure.

\begin{restatable}{claim}{ClaimModASixASep}
    \label{claim-mod-A6-A7}
    $\Aformula_6^{(\avar,q)} \fequiv \Aformula_7^{(\avar,q)}$.
    The formula $\Aformula_7^{(\avar,q)}$ is quantifier-free.
\end{restatable}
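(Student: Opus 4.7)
The plan is to argue both parts of the claim disjunct by disjunct, using the explicit form of each $\Aformula_5^{i,s}$ together with the fact that $\Gamma_{i,s}$ pins down every variable of $Z \cup \{\avar\}$ modulo $mq$. For the quantifier-freeness, a direct inspection suffices: each $\aformulater_{i,s}$ is either $\false$ or $\Gamma_{i,s}$, and $\Gamma_{i,s} = \ordering_i \land \bigwedge_{\avarfour \in Z\cup\{\avar\}} \avarfour \equiv_{mq} s(\avarfour)$ is by construction a Boolean combination of (in)equalities between $\avarter$-free terms and simple modulo constraints, hence quantifier-free. As a finite disjunction of such formulae, $\Aformula_7^{(\avar,q)}$ is quantifier-free.

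For the equivalence, fix a disjunct indexed by $i$ and $s$ and consider its left-hand side $\exists \avarter (\avarter \equiv_q \avar \land \Gamma_{i,s} \land \Aformula_5^{i,s})$. Since $\Gamma_{i,s}$ is $\avarter$-free (by~\Cref{claim:zed-almost-free-bis}), the quantifier can be pushed inside to obtain $\Gamma_{i,s} \land \exists \avarter (\avarter \equiv_q \avar \land \Aformula_5^{i,s})$. If $\Aformula_5^{i,s} = \false$, this evaluates to $\false$, matching the definition $\aformulater_{i,s} = \false$. Otherwise, $\Aformula_5^{i,s}$ has the shape $m\avarter = E$ of~\Cref{equation:5ir-form-eq}, where $E$ is a linear term with variables in $Z$ only.

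The crux is the following equivalence, valid regardless of the assignment to the other variables: $\exists \avarter (\avarter \equiv_q \avar \land m\avarter = E)$ holds if and only if $m\avar \equiv_{mq} E$. Indeed, a witness $\avarter = \avar + qk$ gives $E = m\avar + mqk$, so $E \equiv m\avar \pmod{mq}$; conversely, if $E - m\avar = mqk$ for some integer $k$, then $\avarter \egdef \avar + qk$ satisfies both conjuncts. Hence the disjunct under consideration is equivalent to $\Gamma_{i,s} \land (m\avar \equiv_{mq} E)$.

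It remains to observe that the modulo constraint $m\avar \equiv_{mq} E$ has variables in $Z \cup \{\avar\}$ and its truth value depends only on the residues of these variables modulo $mq$, which are fixed by the conjuncts $\avarfour \equiv_{mq} s(\avarfour)$ inside $\Gamma_{i,s}$. Consequently, under any assignment satisfying $\Gamma_{i,s}$ the constraint $m\avar \equiv_{mq} E$ has the same truth value as its evaluation on $s$. If this evaluation yields $\false$, then the disjunct is $\false$ (matching $\aformulater_{i,s} = \false$); if it yields $\true$, then the disjunct simplifies to $\Gamma_{i,s}$ (matching $\aformulater_{i,s} = \Gamma_{i,s}$). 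Taking the disjunction over all $i$ and $s$ yields $\Aformula_6^{(\avar,q)} \fequiv \Aformula_7^{(\avar,q)}$. No serious obstacle is anticipated; the only point requiring care is justifying that $E \bmod mq$ is indeed determined by $s$, which relies on $E$ being a linear term over $Z$ and standard properties of modular arithmetic.
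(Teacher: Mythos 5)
Your proof is correct and takes essentially the same route as the paper's: both reduce the disjunct to the evaluability of the single congruence $m\avar \equiv_{mq} E$ and then case-split on whether this congruence, once pinned down by the residues that $\Gamma_{i,s}$ fixes, evaluates to $\true$ or $\false$. Your explicit extraction of the biconditional $\exists \avarter(\avarter \equiv_q \avar \land m\avarter = E) \fequiv m\avar \equiv_{mq} E$ as a standalone arithmetic fact is a slightly cleaner packaging of what the paper spreads across its two cases (the paper proves the forward direction inside its unsatisfiability argument and the backward direction by constructing the witness $v$ in the congruent case), but the mathematical content is the same.
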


\subsection*{Proof idea for \Cref{theorem:modulo}}
The key role in the analysis is played by the following lemma.

\begin{restatable}{lemma}{LemmaBoundQuantifierEliminationModuloSimplified}
    \label{lemma:bound-quantifier-elimination-modulo-simplified}
        $\card{\homterms{\Aformula_7^{(\avar,q)}}} = \BigO{\card{\homterms{\aformula}}^2}$.
\end{restatable}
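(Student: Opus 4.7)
The plan follows the same strategy used for the threshold analogue (\Cref{lemma:bound-quantifier-elimination-threshold-simplified}). The idea is to trace every element of $\homterms{\Aformula_7^{(\avar,q)}}$ back to the set $T$ from Step~II of the QE procedure of \Cref{section:quantifier-elimination}, and then bound the homogeneous parts of $T$ by $\homterms{\aformula}$.

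First, I would analyse the syntactic shape of $\Aformula_7^{(\avar,q)}$. By construction in Step~VII, each non-$\false$ disjunct $\aformulater_{i,s}$ equals $\Gamma_{i,s} = \ordering_i \land \bigwedge_{\avarfour \in Z \cup \{\avar\}} \avarfour \equiv_{mq} s(\avarfour)$; the equality~\eqref{equation:5ir-form-eq}, which by \Cref{claim:zed-almost-free-bis} is the only place where $\avarter$ occurred, has either been discarded (once its $s$-evaluation returned true) or rendered the disjunct $\false$. Simple modulo constraints contribute nothing to $\linterms$, and hence nothing to $\homterms$, by definition; so the only linear terms present in $\linterms{\Aformula_7^{(\avar,q)}}$ come from the (in)equalities of $\ordering_i$, each of the form $\aterm \lhd \aterm'$ with $\aterm, \aterm' \in T \cup \{0\}$. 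Rewriting these in the canonical form $\aterm - \aterm' < 0$ (or $= 0$) and passing to homogeneous parts, every element of $\homterms{\Aformula_7^{(\avar,q)}}$ has the form $h - h'$ for some $h, h'$ in $T_{\mathrm{hom}} \cup \{0\}$, where $T_{\mathrm{hom}}$ denotes the set of homogeneous parts of the terms in $T$. Thus $\card{\homterms{\Aformula_7^{(\avar,q)}}} \leq (\card{T_{\mathrm{hom}}} + 1)^2$.

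Second, I would bound $\card{T_{\mathrm{hom}}}$ by $\card{\homterms{\aformula}}$. By Step~II, each element of $T$ is the $\avarbis$-free part of a term in $\linterms{\Aformula_1}$; because after Step~I the $\avarbis$-coefficient of each such term lies in $\{-1, 0, +1\}$, each term in $\linterms{\Aformula_1}$ contributes at most one $\avarbis$-free part to $T$, and the homogeneous part of this $\avarbis$-free part is determined by the homogeneous part of the original term. Hence $\card{T_{\mathrm{hom}}} \leq \card{\homterms{\Aformula_1}}$. Step~I moreover rewrites each term of $\linterms{\aformula}$ to exactly one term of $\linterms{\Aformula_1}$ via a rule that depends only on the term itself, so $\card{\homterms{\Aformula_1}} \leq \card{\homterms{\aformula}}$. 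Combining the two inequalities yields $\card{\homterms{\Aformula_7^{(\avar,q)}}} \leq (\card{\homterms{\aformula}} + 1)^2 = \BigO{\card{\homterms{\aformula}}^2}$. I expect the most delicate point to be verifying that no trace of $\avarter$ survives into a linear (in)equality of $\Aformula_7^{(\avar,q)}$, and that the Step~VI residue-class refinement merely promotes simple modulo constraints from modulus $m$ to modulus $mq$ without introducing new linear terms. Once these structural observations are in place, the counting is immediate.
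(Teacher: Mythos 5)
Your proposal is correct and follows the same route as the paper: the paper's own proof of this lemma is a one-line citation of Lemma~\ref{lemma:bound-quantifier-elimination-modulo}, whose proof rests on precisely the structural observation you make — every non-$\false$ disjunct of $\Aformula_7^{(\avar,q)}$ is $\Gamma_{i,s}$, the equality~\eqref{equation:5ir-form-eq} carrying $\avarter$ has been discarded, and the only linear (in)equalities left are those of the orderings $\ordering_i$ over $T \cup \{0\}$, so every element of $\homterms{\Aformula_7^{(\avar,q)}}$ is a difference of homogeneous parts drawn from $T \cup \{0\}$. Your explicit intermediate step bounding the number of homogeneous parts of $T$ by $\card{\homterms{\Aformula_1}} \leq \card{\homterms{\aformula}}$ is exactly what justifies the paper's terse conclusion $\card{\homterms{\Aformula_7^{(\avar,q)}}} \leq (\card{\homterms{\aformula}}+1)^2$.
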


When eliminating an arbitrary number of~$\exists^{(\avar,q)}\avarbis$,~\Cref{lemma:bound-quantifier-elimination-modulo-simplified} leads to the following result.

\begin{restatable}{lemma}{LemmaBoundQuantifierElimDQUANT}
    \label{lemma:bound-quantifier-elimination-modulo-d-quant}
    Let $\aformula$ be a formula of Presburger arithmetic with threshold quantifiers.
    There is an equivalent quantifier-free formula~$\Aformula$ such that
    \begin{itemize}
        \setlength{\itemsep}{3pt}
        \item $\card{\linterms{\Aformula}}, \norminf{\linterms{\Aformula}}, \norminf{\homterms{\Aformula}}$ and $\norminf{\fmod{\Aformula}}$ are at most $2^{2^{2^{\BigO{\abs{\aformula}^2}}}}$,
        \item $\card{\homterms{\Aformula}} \leq 2^{2^{\BigO{\abs{\aformula}^2}}}$ and $\card{\fmod{\Aformula}} \leq \abs{\aformula}$.
    \end{itemize}
\end{restatable}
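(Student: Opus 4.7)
The plan is to mirror the argument sketched for~\Cref{lemma:bound-quantifier-elimination-threshold-d-quant}, replacing the threshold-specific~\Cref{lemma:bound-quantifier-elimination-threshold-simplified} by its modulo counterpart~\Cref{lemma:bound-quantifier-elimination-modulo-simplified}. I would eliminate the counting quantifiers of~$\aformula$ one at a time, from the innermost outwards, using the procedure of~\Cref{subsection:elimination-modulo-quantifiers}. At each of the at most~$\abs{\aformula}$ rounds,~\Cref{lemma:bound-quantifier-elimination-modulo-simplified} caps the growth of~$\card{\homterms{\cdot}}$ to (essentially) a squaring of the previous value. Composing this quadratic blow-up over $\BigO{\abs{\aformula}}$ rounds yields a doubly exponential bound of the form~$2^{2^{\BigO{\abs{\aformula}^2}}}$ on~$\card{\homterms{\Aformula}}$ at the end of the procedure.

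With this doubly exponential cap in hand, I would then invoke~\Cref{lemma:bound-quantifier-elimination} to control the remaining parameters. The modulus~$m$ at each round is bounded by~$\norminf{\homterms{\cdot}}^{\card{\homterms{\cdot}}} \cdot \lcm(\fmod{\cdot})$, and since~$\card{\homterms{\cdot}}$ is now capped doubly exponentially while~$\norminf{\cdot}$ grows only multiplicatively per round, an inductive argument across rounds keeps~$m$, and hence~$N = m^2 \cdot \card{\linterms{\cdot}}$, triply exponential in~$\abs{\aformula}$. Plugging this into the inequalities $\card{\linterms{\Aformula}} \leq N^{\BigO{d}}$, $\norminf{\linterms{\Aformula}} \leq \BigO{N} \cdot \norminf{\linterms{\aformula}}$, and so on, with~$d \leq \abs{\aformula}$, delivers the claimed bound~$2^{2^{2^{\BigO{\abs{\aformula}^2}}}}$ for $\card{\linterms{\Aformula}}$, $\norminf{\linterms{\Aformula}}$, $\norminf{\homterms{\Aformula}}$, and~$\norminf{\fmod{\Aformula}}$.

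The parameter~$\card{\fmod{\Aformula}}$ can be bounded separately and more directly: by inspection of Step~VI, each round of the elimination collapses every modulo constraint of the current formula into constraints with the single modulus~$mq$, so~$\card{\fmod{\Aformula_7^{(\avar,q)}}} \leq 1$. Consequently, the only moduli that can persist across the entire elimination are those brought in by the~$q$-parameters of the counting quantifiers of~$\aformula$ together with the moduli of its initial modulo constraints, whence~$\card{\fmod{\Aformula}} \leq \abs{\aformula}$.

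The main obstacle, as in the threshold case, will be to rigorously propagate the quadratic improvement of~\Cref{lemma:bound-quantifier-elimination-modulo-simplified} across rounds: the lemma is stated for a \emph{single} round, and one has to verify that the output~$\Aformula_7^{(\avar,q)}$---whose inequalities are, by~\Cref{claim-mod-A6-A7}, pairwise comparisons among terms of~$T \cup \{0\}$---is the appropriate input on which the lemma can be re-applied in the next round with the ``right'' base. In particular, one must separate the quadratic growth of~$\card{\homterms{\cdot}}$ from the larger growth of~$\norminf{\homterms{\cdot}}$ and~$\card{\linterms{\cdot}}$ governed by~\Cref{lemma:bound-quantifier-elimination}, so that the doubly exponential cap on~$\card{\homterms{\cdot}}$ emerges cleanly and is not swamped by the other parameters feeding into subsequent rounds.
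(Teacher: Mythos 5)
Your proposal matches the paper's approach: the paper proves this lemma by induction on quantifier depth, applying the per-round bounds of~\Cref{lemma:bound-quantifier-elimination-modulo} (of which~\Cref{lemma:bound-quantifier-elimination-modulo-simplified} is a summary) in each step; it introduces explicit quantities $A_d, B, C_d, D_d, E_d, F_d$ and shows the quadratic cap on $\card{\homterms{\cdot}}$ keeps $A_d$ doubly exponential, which in turn keeps $C_d, D_d, E_d, F_d$ triply exponential, and bounds $\card{\fmod{\cdot}}$ by observing that each round collapses to a single modulus. One small imprecision in your sketch: you say $\norminf{\homterms{\cdot}}$ ``grows only multiplicatively per round,'' but in fact one round sends it from $N$ to roughly $2N^{2H}$ with $H = \card{\homterms{\cdot}}$, which is an exponential jump whose \emph{exponent} is $2H$; the argument goes through only because $H$ is kept doubly exponential, which is exactly the separation you flag as the main obstacle in your final paragraph, so the idea is right but the phrasing understates what must be tracked.
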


\Cref{lemma:bound-quantifier-elimination-modulo-d-quant} allows us to establish that the QE procedure for Presburger arithmetic with modulo counting quantifiers runs in~\threeexptime. The proof follows the pattern of~\Cref{theorem:threshold}.



\section{Conclusion}
\label{section:conclusion}

We developed a QE procedure for Presburger
arithmetic extended with the unary threshold counting quantifier
$\exists^{\ge c}y$ that runs in \threeexptime, i.e., at no additional
cost compared to standard QE procedures for
Presburger arithmetic, see e.g.~\cite{Opp78}. From the estimation of
the growth of the constants occurring in our QE
procedure, using standard relativisation arguments, see
e.g.~\cite{Weispfenning90}, we can derive that the decision problem
for Presburger extended with the $\exists^{\ge c}$ quantifier is
in \twoexpspace. This matches the complexity of deciding standard
Presburger arithmetic closely. Indeed, the latter is complete for the
complexity class $\textsc{STA}(*,2^{2^{\poly n}},O(n))$~\cite{Ber80}.
Fully settling the complexity of Presburger arithmetic extended with
$\exists^{\ge c}y$ will likely require generalising the $\textsc{STA}$
complexity measure, which we leave as an interesting avenue for
further investigation.

Our QE procedure is based on a QE procedure for the more general $\exists^{\ge x}y$ counting
quantifier that we developed in this paper. While the latter procedure
slightly improves the QE procedure given by
Schweikardt~\cite{Schweikardt05}, it still only runs in non-elementary
time. We have pinpointed precisely at where the non-elementary growth
occurs. It remains to be seen whether our QE
procedure can be further improved, or whether, possibly based on the
insights obtained from our QE procedure, a
non-elementary lower bound for Presburger arithmetic extended with the
$\exists^{\ge x}y$ quantifier can be established.

\bibliography{bibliography}

\appendix
\section{Missing proofs from \Cref{section:quantifier-elimination}}

\LemmaMakeModuloSimple*
\begin{proof}
    Let $Z = \vars{\aterm \equiv_q 0}$. We guess the residue classes of the variables in~$Z$, as shown in the right hand side of the following equivalence:
    \begin{center}
        $\aterm \equiv_q 0 \fequiv \bigvee_{r \colon Z \to [q]} (\aterm \equiv_q 0 \land \bigwedge_{\avarter \in Z} \avarter \equiv_q r(\avarter))$.
    \end{center}
    Fix $r \colon Z \to [q]$, and consider the disjunct $(\aterm \equiv_q 0 \land \bigwedge_{\avarter \in Z} \avarter \equiv_q r(\avarter))$. Let~$\avar$ be a variable occurring in~$\aterm$.
    As~$r$ assigns to~$\avar$ a residue class modulo~$q$, the following equivalence holds:
    \begin{center}
        $\aterm \equiv_q 0 \land \bigwedge_{\avarter \in Z} \avarter \equiv_q r(\avarter)$
        $\fequiv$
        $(\aterm \equiv_q 0)\substitute{\avar}{r(\avar)} \land \bigwedge_{\avarter \in Z} \avarter \equiv_q r(\avarter)$.
    \end{center}
    Therefore, by substituting in~$\aterm$ every variable~$\avar$ with~$r(\avar)$, we derive
    \begin{center}
        $\aterm \equiv_q 0 \land \bigwedge_{\avarter \in Z} \avarter \equiv_q r(\avarter)$
        $\fequiv$
        $r(\aterm) \equiv_q 0 \land \bigwedge_{\avarter \in Z} \avarter \equiv_q r(\avarter)$.
    \end{center}
    Since~$r(\aterm) \equiv_q 0$ does not have free variables (i.e.~it is a statement), it is equivalent to~$\true$ or~$\false$. 
    Let $\aformulabis_r \in \{\true,\false\}$ such that $r(\aterm) \equiv_q 0 \fequiv \aformulabis_r$, and let 
    $\aformulabis = \bigvee_{r \colon Z \to [q]} (\aformulabis_r \land \bigwedge_{\avarter \in Z} \avarter \equiv_q r(\avarter))$. 
    The formula~$\aformulabis$ satisfies the required properties.
\end{proof}

\claimPsiOne*
\begin{claimproof}
    We show the following sequence of equivalences:
    \begin{align}
        \Aformula_0 &\ \fequiv\ \exists^{\geq \avar}\avarbis\, \aformula'\label{psi1:e1}\\
                    &\ \fequiv\ \exists^{\geq \avar}\avarbis\, \exists \avarter(\avarter = k\avarbis \land \aformula'\substitute{k\avarbis}{\avarter}) &\text{for some fresh variable}~\avarter\label{psi1:e2}\\
                    &\ \fequiv\ \exists^{\geq \avar}\avarter\, \exists \avarbis\,(\avarter = k\avarbis \land \aformula'\substitute{k\avarbis}{\avarter})\label{psi1:e3}\\
                    &\ \fequiv\ \exists^{\geq \avar}\avarter\, ( \avarter \equiv_k 0 \land \aformula'\substitute{k\avarbis}{\avarter})\label{psi1:e4}\\
                    &\ \fequiv\ \exists^{\geq \avar}\avarbis\, ( \avarbis \equiv_k 0 \land \aformula'\substitute{k\avarbis}{\avarbis})= \Aformula_1. \label{psi1:e5}
    \end{align}
    The equivalence~\eqref{psi1:e1} holds, because the rewrite rules used to produce $\aformula'$ from $\aformula$ come from biconditional axioms of (modular) arithmetic, e.g.~$a \equiv_b c \fequiv ka \equiv_{kb} kc$, for all $k \geq 1$.
    In~$\aformula'$, all non-zero coefficients of~$y$ are either~$k$ or~$-k$.
    This directly establishes equivalence~\eqref{psi1:e2}.
    In the formula~$\aformula'\substitute{k\avarbis}{\avarter}$ (and thus in~$\Aformula_1$) all non-zero coefficients of~$y$ are either~$1$ or~$-1$.
    The equivalence~\eqref{psi1:e3}  holds as the expression $\avarter = k \avarbis$ induces a bijection between all possible values of~$\avarbis$ and~$\avarter$. 
    The equivalence~\eqref{psi1:e4} holds as~$\avarbis$ does not occur in $\aformula'\substitute{k\avarbis}{\avarter}$. Notice that $z \equiv_k 0 \fequiv \exists y\, z = ky$.
    The equivalence~\eqref{psi1:e5} follows as we rename $z$ by~$y$.
\end{claimproof}

\lemmaBoundNumberOfOrderings*
\begin{proof}
We first show the existence of a family of orderings
with required properties. This part of the proof relies
on the insight 
that $n$~hyperplanes split~$\mathbb R^d$ into $\mathcal O(n^d)$ regions.
This is the basis of multiple ``geometric'' decision procedures and algorithms;
see, e.g.,~\cite{Sca84,Woods15}.

\begin{claim}
\label{claim:n-power-d-for-signs}
    Given $s_1, \dots, s_m$ linear terms over~$d$ variables, 
    there are at most
    $\mathcal O(m^d)$ conjunctions of the form
    \begin{equation*}
    (s_1 R_1 0) \land
    (s_2 R_2 0) \land \ldots \land
    (s_m R_m 0),
    \end{equation*}
    where $R_i \in \{{<}, {=}, {>}\}$,
    that are satisfiable over the reals~$\mathbb R$.
\end{claim}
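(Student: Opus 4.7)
My approach is geometric. Each non-constant linear term $s_i$ defines an affine hyperplane $H_i := \{x \in \mathbb{R}^d : s_i(x) = 0\}$; a constant term has fixed sign, which forces a unique $R_i$ and can be absorbed into the asymptotic count. A sign vector $(R_1, \ldots, R_m)$ is satisfiable over $\mathbb{R}$ iff the set $\{x \in \mathbb R^d : s_i(x)\, R_i\, 0 \text{ for all } i\}$ is non-empty, and this set is precisely the relative interior of a face of the hyperplane arrangement $\mathcal{A} := \{H_1, \ldots, H_m\}$. Since $\mathbb R^d$ decomposes into the disjoint union of the relative interiors of the faces of $\mathcal{A}$, satisfiable sign vectors are in bijection with the non-empty faces of $\mathcal{A}$, and it suffices to bound the total number of such faces (across all dimensions $0, 1, \ldots, d$) by $\mathcal O(m^d)$.

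To bound the total face count $f(m,d)$, I would induct on $m$, showing $f(m, d) = \mathcal O(m^d)$ uniformly in~$d$. The base case $m = 0$ gives $f(0,d) = 1$. For the inductive step, adding a new hyperplane $H$ to an arrangement of $m-1$ hyperplanes restricts to $H$ as an arrangement $\mathcal A|_H$ of at most $m-1$ affine flats inside $H \cong \mathbb R^{d-1}$; by the outer induction hypothesis on $d$, this induced arrangement has $\mathcal O(m^{d-1})$ faces. Each face of $\mathcal{A}|_H$ accounts for at most a bounded number of new faces in the enlarged arrangement: the face itself (now a face lying on $H$) and the at most two pieces into which it splits an existing face of the old arrangement. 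Hence $f(m, d) \le f(m-1, d) + \mathcal O(m^{d-1})$, which telescopes to $f(m, d) = \mathcal O(m^d)$.

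The main obstacle I anticipate is the careful bookkeeping in the inductive step: one must verify that every newly created face of the enlarged arrangement is charged to exactly one face of $\mathcal A|_H$, so that the additive increment is genuinely $\mathcal O(m^{d-1})$ rather than multiplicatively larger. This is the standard sweep argument for hyperplane arrangements (cf.\ the zone theorem in the plane and its higher-dimensional generalisations). As an alternative that sidesteps the combinatorial sweep, one may reduce to general position via a small perturbation of the coefficients of the $s_i$---such a perturbation can only refine the face structure, hence can only increase the number of satisfiable sign vectors---and then invoke the classical closed-form count of faces of a general-position arrangement, namely $\sum_{k=0}^{d} \binom{m}{k} \sum_{j=0}^{d-k} \binom{m-k}{j} = \mathcal O(m^d)$. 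Either route yields the claimed bound.
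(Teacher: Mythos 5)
Your proof is correct and follows essentially the same route as the paper's: a double induction on the dimension $d$ and the number of terms $m$, with the key step that adding one new hyperplane $H$ splits at most $\mathcal O(m^{d-1})$ existing faces, because the number of cut faces is bounded by the face count of the arrangement induced on $H$ (controlled by the induction hypothesis for dimension $d-1$). The paper phrases this directly in terms of ``regions cut by the new term,'' while you cast it in the standard hyperplane-arrangement language and additionally sketch a perturbation-to-general-position alternative, but the underlying recurrence $f(m,d) \le f(m-1,d) + \mathcal O(m^{d-1})$ is identical.
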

\begin{claimproof}
This is a small variation of the classic proof, using double induction.
For $d = 1$, the number of such conjunctions is clearly
at most $2 m + 1$, because $m$ points can split the line
into (at most) $m + 1$ finite or infinite open intervals
and $m$ points themselves.

For larger $d$, we proceed as follows.
We assume an $\mathcal O(m^{d-1})$ bound for dimension~$d-1$.
For $m = 1$, the number of satisfiable conjunctions is at most~$3$.
Let us deal with larger $m$ now.
Suppose we have already computed (or, rather, bounded from above)
the number of satisfiable conjunctions of terms $s_1, \ldots, s_{j-1}$,
and suppose this number is $N$.
Consider what happens when the term $s_j$ is added to them.
Each region of $\mathbb R$ that corresponds to one of the~$N$~satisfiable
conjunctions of $s_1, \ldots, s_{j-1}$ can be ``cut''
by the new term into at most $3$~regions,
according to whether $s_j < 0$, $s_j = 0$, or $s_j > 0$
(and thus adding $2$~new regions).
This is the only way new regions, and thus satisfiable conjunctions
of the form
$
    (s_1 R_1 0) \land
    \ldots \land
    (s_j R_j 0)
$ are composed. However, we can now observe that the number of regions
that are ``cut'' is not, in general, as big as~$N$.
Indeed, the number of regions that are ``cut'' is bounded from
above by the number of regions inside the set $\{ \vec x \in \mathbb R^d
\mid s_j = 0 \}$ formed by the terms $s_1, \ldots, s_{j-1}$.
But this number is $\mathcal O((j-1)^{d-1})$ by the bound
for dimension~$d-1$.
Therefore, for dimension~$d$ we obtain an overall bound of
\begin{equation*}
\mathcal O(1) + \sum_{j=2}^{m} 2 \cdot \mathcal O( (j-1)^{d-1} ) = \mathcal O(m^d).
\qedhere
\end{equation*}
\end{claimproof}

\begin{claim}
\label{claim:n-power-d-for-ineq}
    Given $\aterm_1, \dots, \aterm_n$ linear terms over~$d$ variables, 
    there are at most
    $\mathcal O(n^{2 d})$ orderings
    that satisfy properties~(I) and~(II).
\end{claim}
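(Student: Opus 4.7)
The plan is to reduce Claim~\ref{claim:n-power-d-for-ineq} to Claim~\ref{claim:n-power-d-for-signs} by applying the latter to the $\binom{n}{2}$ pairwise differences $t_i - t_j$ of the given terms (for $1 \le i < j \le n$). The key observation is that every ordering in the sense of~\eqref{eq:ordering} is essentially a total preorder on $\{t_1,\dots,t_n\}$, and such a preorder is uniquely determined by, and uniquely determines, the sign (in $\{<,=,>\}$) of each pairwise difference $t_i - t_j$.

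First I would make this correspondence precise: the map that sends an ordering $\ordering$ to the induced sign pattern on the set $\{t_i - t_j : 1 \le i < j \le n\}$ is injective, and it preserves satisfiability, in the sense that an assignment satisfies $\ordering$ if and only if it realises the corresponding sign pattern. Consequently, the number of \emph{satisfiable} orderings of $t_1,\dots,t_n$ over $\mathbb{R}^d$ is bounded from above by the number of satisfiable conjunctions of the form $\bigwedge_{i<j} (t_i - t_j) R_{ij}\, 0$ with $R_{ij} \in \{<,=,>\}$. Applying Claim~\ref{claim:n-power-d-for-signs} with $m = \binom{n}{2} = \mathcal O(n^2)$ linear terms in $d$ variables, this number is at most $\mathcal O((n^2)^d) = \mathcal O(n^{2d})$.

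To conclude, I would use property~(II): any two satisfiable orderings $\ordering_i, \ordering_j$ in a family $\{\ordering_1,\dots,\ordering_o\}$ meeting~(I) and~(II) must correspond to disjoint nonempty regions of $\mathbb{R}^d$ (otherwise $\ordering_i \land \ordering_j$ would be satisfiable), and hence to distinct sign patterns on the differences. Therefore the number of satisfiable orderings in the family is at most $\mathcal O(n^{2d})$. Unsatisfiable orderings add nothing to the disjunction in~(I) and are trivially compatible with~(II), so a minimal family realising~(I) and~(II) contains only satisfiable orderings, yielding the bound $o = \mathcal O(n^{2d})$ claimed.

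I do not expect a substantive obstacle here: the combinatorial content is already packaged inside Claim~\ref{claim:n-power-d-for-signs}, which handles the $\mathcal O(m^d)$ hyperplane-arrangement count inductively. The only minor subtlety is the bookkeeping of the bijection between orderings and sign patterns on differences, together with the observation that unsatisfiable orderings in a family satisfying~(I)--(II) can always be discarded without violating those two properties.
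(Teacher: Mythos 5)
Your proposal is correct and follows essentially the same route as the paper: both reduce to Claim~\ref{claim:n-power-d-for-signs} via the $\binom{n}{2}$ pairwise differences $t_i - t_j$ and observe that sign patterns of these differences determine orderings, giving the $\mathcal O(n^{2d})$ bound. Your write-up is somewhat more explicit about why property~(II) forces distinct satisfiable orderings in the family to occupy disjoint regions and about discarding unsatisfiable orderings, but the underlying argument is identical.
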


\begin{claimproof}
This is a consequence of Claim~\ref{claim:n-power-d-for-signs}.
Indeed, we can form $m = \binom{n}{2}$ terms of the form $t_i - t_j$, $i \ne j$.
For any valuation to the $d$~variables,
the signs of these $m$ terms determine an ordering of $t_1, \ldots, t_n$,
satisfiable over the reals.
Therefore, we obtain $\mathcal O(m^d) = \mathcal O(n^{2 d})$ orderings in total.
\end{claimproof}

Given Claim~\ref{claim:n-power-d-for-ineq},
let us now proceed to the second, algorithmic part of the proof.
The idea can be seen as dynamic programming.

Our algorithm runs as follows.
Let $t_1, \ldots, t_n$ be the terms from the statement of the lemma.
We construct several families of orderings, incrementally:
family $\mathcal F_j$ is the required family
for the the first $j$~terms, $t_1, \ldots, t_j$.

For $j = 1$, the family $F_1$ is trivial.
For $j = 2, \ldots, n$, we compute $F_j$ from $F_{j-1}$ as follows.
Start from $F_j = \varnothing$. For each ordering $o$ from $F_{j-1}$, enumerate all
possible positions to insert $t_j$ into it. There are at most $2 j - 1$ possible
options here; their precise number depends on the number of equalities
(as opposed to inequalities)
among the relations on $t_1, \ldots, t_j$. For
each of these options, check if the resulting ordering is satisfiable
when the variables are interpreted over $\mathbb R$, using any polynomial-time
algorithm for linear programming.
If so, add it to $F_j$, otherwise just skip it.
In the end, $F_n$ is a family of orderings with the required properties.

Let us analyse this algorithm.
From the (non-algorithmic) part of the lemma, already proved above
as Claim~\ref{claim:n-power-d-for-ineq},
we know that there are
$\mathcal O(j^{2d})$ orderings for $j$~terms.
For each of these orderings, we will try
to insert $t_{j+1}$ at $\mathcal O(j)$ possible positions.
So there are $\mathcal O(j^{2d+1})$
satisfiability checks to run.
Over all~$j$, this is $\mathcal O(n^{2d+2})$ checks.

Notice that it is \emph{sufficient} for us to look at satisfiability over the
reals or rationals here,
as long as the number of orderings we get is not too high.
Indeed, if some ordering is satisfiable over $\mathbb R$
(or, equivalently, over $\mathbb Q$) but not satisfiable over $\mathbb Z$,
then we may still include it.
This means that we can rely on polynomial-time algorithms for linear
programming. Each instance has at most $n$ constraints over $d$ variables.
The bit size of each coefficient be bounded by $b \egdef \log \max_j \norminf{t_j}$.
Therefore, a satisfiability check for a system of constraints of this
form can be run in time $\poly{b, n, d}$. The overall running time for the
entire algorithm is
\begin{equation*}
\mathcal O(n^{2d+2}) (b n d)^{\mathcal O(1)} =
n^{\mathcal O(d)} (n d)^{\mathcal O(1)} b^{\mathcal O(1)} =
n^{\mathcal O(d)} \log \max_j \norminf{t_j}^{\mathcal O(1)}.
\qedhere
\end{equation*}
\end{proof}

\claimPsiTwo*
\begin{claimproof}
    Let~$\aformulabis_{\text{rem}}$ be the formula $\bigvee_{r\colon Z\to [m]} \bigwedge_{z \in Z} z \equiv_m r(z)$ whose disjuncts represent a combination of residue classes modulo~$m$ for the variables in~$Z$. 
    We have 
    \begin{align}
        \Aformula_1 &\ \fequiv\ \aformulabis_{\text{ord}} \land \aformulabis_{\text{rem}} \land \Aformula_1 \label{psi2:e1}\\
        &\ \fequiv\ \Aformula_2. \label{psi2:e2}
    \end{align}
    The equivalence~\eqref{psi2:e1} follows from the fact that both the formulae~$\aformulabis_{\text{ord}}$ and $\aformulabis_{\text{rem}}$ are tautologies.
    From~$\aformulabis_{\text{ord}} \land \aformulabis_{\text{rem}} \land \Aformula_1$,
    we distribute the conjunctions over the disjunctions given by~$\bigvee_{i \in [1,o]}$ and~$\bigvee_{r\colon Z\to [m]}$, which shows equivalence~\eqref{psi2:e2}.
\end{claimproof}

\claimPsiThreeSimpleMod*
\begin{claimproof}
Let $\aformulafour$ be fixed.
We recall that the formulae $O_i$ were constructed based on the set of terms
that includes~$0$. This means, in particular, that for all assignments
that satisfy the conjunction $\Gamma_{i,r} \land \aformulafour$
(if any exist)
the truth value of all inequalities that occur in the formula~$\aformulabis$
is the same. Indeed, for inequalities not involving the variable~$y$ this
is because the formula~$O_i$ asserts or implies the sign of every linear term.
For inequalities involving~$y$, this is due to our choice of the set~$\segments(\avarbis, \ordering_i)$.

We now consider modulo constraints that occur in~$\aformulabis$.
Those of them where the variable~$y$ does not appear also evaluate to
just true or false on all assignments satisfying $\Gamma_{i,r} \land \aformulafour$,
because $r$~specifies residue classes modulo~$m = \lcm(\fmod{\aformulabis})$ for all
variables except~$y$. Since $y$~can only occur with coefficient $1$ or $-1$,
all the remaining modulo constraints become simple, i.e., take the form
$y \equiv_q r$ for some $q \in \fmod{\aformulabis}$.

To sum up, replacing all constraints in the $\aformulabis$ part of the formula
$\Gamma_{i,r} \land \aformulafour \land \aformulabis$ with their truth values
or their simplified form, as described above, we obtain an equivalent formula
$\Gamma_{i,r} \land \aformulafour \land \aformulabis_{\aformulafour}^{i,r}$,
as required.
\end{claimproof}

\claimPsiThree*
\begin{claimproof}
    Let~$i \in [1,o]$, $r \colon Z \to [m]$. 
    Establishing~$\Gamma_{i,r} \land \Aformula_1$ $\fequiv$ $\Gamma_{i,r} \land \Aformula_{3}^{i,r}$ suffices,
    where~$\Aformula_1 = \exists^{\geq \avar}\avarbis\,\aformulabis$ and~$\Aformula_3^{i,r} = \exists \avar_0\dots\exists\avar_{2\ell} \left(
        \avar \leq \avar_0 + \dots + \avar_{2\ell} \land \bigwedge_{j \in [0,2\ell]} \exists^{\geq \avar_j} \avarbis (\aformulafour_j \land \aformulabis^{i,r}_{\aformulafour_j})\right)$.
    Directly from~\Cref{claim:psi3-simpl-mod}, 
    The formula $\Gamma_{i,r} \land \Aformula_{3}^{i,r}$ is equivalent to 
    \begin{center}
    $\aformulater \egdef \Gamma_{i,r} \land \exists \avar_0\dots\exists\avar_{2\ell} \left(
        \avar \leq \avar_0 + \dots + \avar_{2\ell} \land \bigwedge_{j \in [0,2\ell]} \exists^{\geq \avar_j} \avarbis (\aformulafour_j \land \aformulabis)\right)$
    \end{center}
    where we notice that all the formulae of the form~$\aformulabis^{i,r}_{\aformulafour_j}$ are substituted with~$\aformulabis$.
    Proving the equivalence~$\Gamma_{i,r} \land \Aformula_1 \fequiv \aformulater$ is rather straightforward.
    \ProofLeftarrow Let $\aeval$ be an assignment such that $\aeval \models \Aformula_1 \fequiv \aformulater$.
    Therefore, there are values $v_0,\dots,v_{2\ell}$ for the variables $\avar_0,\dots,\avar_{2\ell}$ such that 
    $\aeval[v_0/\avar_0,\dots,v_{2\ell}/\avar_{2\ell}] \models \avar \leq \avar_0 + \dots + \avar_{2\ell} \land \bigwedge_{j \in [0,2\ell]} \exists^{\geq \avar_j} \avarbis (\aformulafour_j \land \aformulabis)$. Since the variables $\avar_0,\dots,\avar_{2\ell}$ do not appear in $\aformulafour_j \land \aformulabis$, we have
        $\aeval \models \avar \leq v_0 + \dots + v_{2\ell} \land \bigwedge_{j \in [0,2\ell]} \exists^{\geq v_j} \avarbis (\aformulafour_j \land \aformulabis)$.
    Lastly, in view of the definition of the set~$\segments(\avarbis,\ordering_i)$, 
    given~$\aformulafour,\aformulafour' \in \segments(\avarbis,\ordering_i)$ there is no value~$v$ for $\avarbis$ such that $\aeval\substitute{\avarbis}{v} \models \aformulafour \land \aformulabis$ and $\aeval\substitute{\avarbis}{v} \models \aformulafour' \land \aformulabis$.
    We conclude that there are at least $\sum_{j = 0}^{2\ell} v_j$ distinct values~$v$ for $y$ such that 
    $\aeval\substitute{\avarbis}{v} \models \aformulabis$, and thus $\aeval \models \Gamma_{i,r} \land \Aformula_1$, directly from $\aeval \models \avar \leq v_0 + \dots + v_{2\ell}$.

    \ProofRightarrow 
    Suppose $\aeval \models \Gamma_{i,r} \land \Aformula_1$. So, there are at least $\aeval(\avar)$ distinct values~$v$ for $\avarbis$ such that $\aeval\substitute{\avarbis}{v} \models \aformulabis$.
    Given $j \in [0,2\ell]$, let $v_j$ be the number distinct values for $\avarbis$ such that 
    $\aeval\substitute{\avarbis}{v} \models \aformulafour_j \land \aformulabis$.
    Again from the fact that, given~$\aformulafour,\aformulafour' \in \segments(\avarbis,\ordering_i)$ there is no value~$v$ for~$\avarbis$ such that $\aeval\substitute{\avarbis}{v} \models \aformulafour \land \aformulabis$ and $\aeval\substitute{\avarbis}{v} \models \aformulafour' \land \aformulabis$, 
    we conclude that $\aeval(\avar) \leq \sum_{j = 0}^{2\ell} v_j$.
    Thus, $\aeval[v_0/\avar_0,\dots,v_{2\ell}/\avar_{2\ell}] \models \avar \leq \avar_0 + \dots + \avar_{2\ell} \land \bigwedge_{j \in [0,2\ell]} \exists^{\geq \avar_j} \avarbis (\aformulafour_j \land \aformulabis)$, and so $\aeval \models \aformulater$.
\end{claimproof}

\lemmaSharpSat*
\begin{proof}
We notice that each of
the numbers~$p_j$ and $r_j'$ are defined by counting the number of solutions of~$\aformulabis_{\aformulafour}^{i,r}$ in a finite interval. This formula is quantifier-free, and~$\vars{\aformulabis_{\aformulafour}^{i,r}} = \{\avarbis\}$.
If the formula and the interval bounds are given as input, then
counting these numbers is a~\sharpP problem.
One can enumerate all possible values of~$y$ and check each of them
against the formula.
Since $\overline{u}_j - \underline{u}_j \le m$, there are at most
$m$~values to check in each of the intervals.
\end{proof}

\claimPsiFourInfSol*
\begin{claimproof}
We focus without loss of generality on the case
$\aformulafour \egdef (\avarbis < t_1')$.
Suppose
the formula $\exists \avarbis\, (\aformulafour \land \aformulabis_{\aformulafour}^{i,r})$
is satisfiable.
It suffices to show that,
for \emph{every} assignment $\mu$ that satisfies the formula~$\Gamma_{i,r}$,
the formula $\Aformula^{i,r}_3$ is satisfied by $\mu$ too.

Recall that the formula $\Aformula^{i,r}_3$ was
defined previously as
\begin{equation*}
\exists \avar_0\dots\exists\avar_{2\ell} \left(
            \avar \leq \avar_0 + \dots + \avar_{2\ell} \land \bigwedge_{j \in [0,2\ell]} \exists^{\geq \avar_j} \avarbis (\aformulafour_j \land \aformulabis^{i,r}_{\aformulafour_j})\right).
\end{equation*}
Let us take an arbitrary assignment $\mu$ that satisfies the formula~$\Gamma_{i,r}$.
Our goal is to show that the formula above is satisfied by~$\mu$ too.
We have our
$\aformulafour \in \segments(y,\ordering_i)$,
and we assume without loss of generality that $\aformulafour = \aformulafour_0$.
Consider a new assignment~$\mu'$ obtained from~$\mu$ by updating
the values of all $x_1, \ldots, x_{2 \ell}$ to~$0$
and the value of $x_0$ to~$x$.
Clearly, $\mu'$ satisfies the inequality
$\avar \leq \avar_0 + \dots + \avar_{2\ell}$.
For each $j \in [1, 2\ell]$, the formula
$\exists^{\geq \avar_j} \avarbis (\aformulafour_j \land \aformulabis^{i,r}_{\aformulafour_j})$
is now satisfied too, because $\mu'(\avar_j) = 0$,
so
it remains to argue that $\mu'$ satisfies
$\exists^{\geq \avar} \avarbis (\aformulafour_0 \land \aformulabis^{i,r}_{\aformulafour_0})$
where $\aformulafour_0 = \aformulafour$.

Let us capitalise on the fact that $\aformulafour$ is
just a single inequality, $\avarbis < t_1'$.
First, note that setting the value of~$y$ to any integer strictly smaller
than $\mu'(t_1')$ satisfies $\aformulafour$.
Second, recall from Step~III (and~\Cref{claim:psi3-simpl-mod})
that the formula $\aformulabis_{\aformulafour}^{i,r}$
is 
a Boolean combination of simple modulo constraints
with $\vars{\aformulabis_{\aformulafour}^{i,r}} = \{\avarbis\}$
and that
$\fmod{\aformulabis_{\aformulafour}^{i,r}} \subseteq \fmod{\aformulabis}$.
So our previous choice of $m = \lcm(\fmod{\Aformula_1})$
is a multiple of $\lcm(\fmod{\aformulabis_{\aformulafour}^{i,r}})$,
and thus the set of all assignments (for $y$) that satisfy
the formula~$\aformulabis_{\aformulafour}^{i,r}$ is periodic
with period~$m$. Importantly, this set is non-empty
because
the formula $\exists \avarbis\, (\aformulafour \land \aformulabis_{\aformulafour}^{i,r})$
was assumed to be satisfiable (and thus
$\aformulabis_{\aformulafour}^{i,r} \not\fequiv \false$).
Therefore, this set contains all elements of some infinitely descending sequence
with difference~$m$.
Therefore, setting the value of~$y$ to \emph{any} such element
smaller than $\mu'(t_1')$ satisfies the formula
$(\aformulafour_0 \land \aformulabis^{i,r}_{\aformulafour_0})$.
This completes the proof, as there are indeed infinitely many such values,
whilst
the formula
$\exists^{\geq \avar} \avarbis (\aformulafour_0 \land \aformulabis^{i,r}_{\aformulafour_0})$
asserts the existence of just $\mu'(x)$ of them, for some $\mu'(x) \in \Zed$.
\end{claimproof}

\claimPsiFourRemoveCqEq*
\begin{claimproof}
For every assignment~$\aeval$ to variables other than~$y$,
there cannot be more than one value of~$y$ that
satisfies the formula
$\aformulafour \land \aformulabis_{\aformulafour}^{i,r}$.
Indeed, since $\aformulafour$ is $y = \aterm_j'$,
then the only possible choice for $y$ is $\aeval(\aterm_j')$.
If this choice satisfies $\aformulabis_{\aformulafour}^{i,r})$,
then $z$ can be chosen to be any integer less than or equal to~$1$.
Otherwise, there is no update to $\aeval$ by any value assigned to~$y$
that would make the formula
$\aformulafour \land \aformulabis_{\aformulafour}^{i,r}$
satisfied,
in which case
$z$ can be set to all non-positive integers (and only to them).
\end{claimproof}

\claimPsiFourRemoveCqIneq*
\begin{claimproof}[Proof of~\Cref{claim:psi4-remove-cq-ineq}]
For every assignment~$\aeval$ to variables other than~$y$,
take $L = \aeval(t_{j-1}') + 1$ and $U = \aeval(t_{j}')$
and consider the segment of integers $[L, U-1]$,
using the convention $[a, b] = \varnothing$ if $a > b$.
These are all the potential values of~$y$ that
satisfy the formula
$\aformulafour$,
which under the conditions of the Claim has the form
$\aterm_{j-1}' < \avarbis \land \avarbis < \aterm_{j}'$.
We need to determine how many of these values actually satisfy
the larger formula~%
$\aformulafour \land \aformulabis_{\aformulafour}^{i,r}$,
and this number will be the maximum possible value
attained by the variable~$z$.

As in the proof of~\Cref{claim:psi4-inf-sol},
recall from Step~III (and~\Cref{claim:psi3-simpl-mod})
that the formula $\aformulabis_{\aformulafour}^{i,r}$
is 
a Boolean combination of simple modulo constraints
with $\vars{\aformulabis_{\aformulafour}^{i,r}} = \{\avarbis\}$
and that
$\fmod{\aformulabis_{\aformulafour}^{i,r}} \subseteq \fmod{\aformulabis}$.
So our previous choice of $m = \lcm(\fmod{\Aformula_1})$
is a multiple of $\lcm(\fmod{\aformulabis_{\aformulafour}^{i,r}})$,
and thus the set of all assignments (for $y$) that satisfy
the formula~$\aformulabis_{\aformulafour}^{i,r}$ is periodic
with period~$m$. Formally, denote
$S = \{ n \in \Zed \mid (y \mapsto n) \models \aformulabis_{\aformulafour}^{i,r} \}$
and note that $n \in S$ iff $n + m \in S$.

The technical hurdle we need to overcome in this proof
is that the values of $L$ and $U$ depend on the assignment~$\aeval$.
Importantly, it suffices to consider assignments that satisfy~$\Gamma_{i,r}$,
i.e., the formula
$\ordering_i \land (\bigwedge_{z \in Z} z \equiv_m r(z))$.
We focus on the modulo constraints in this formula and, from now on,
we assume that~$\aeval$ satisfies all of them.
Our goal is to compute the cardinality of the set
$[L, U-1] \cap S$, which by the arguments above is exactly
the number of variable assignments to~$y$ that satisfy
$\aformulafour \land \aformulabis_{\aformulafour}^{i,r}$,
if all other variables have already been assigned values by~$\aeval$.

Let $L' = r(t_{j-1}') + 1$ and
$U' = \min \{ r(t_j') + m \cdot h \mid h \in \Zed \} \cap [L', +\infty)$.
These two numbers are almost the same as $\underline{u}_j$ and $\overline{u}_j$,
respectively, but we will use the capital letter notation to keep
symbols for different segment endpoints uniform.
Now $L' \le U'$, $\card{[L', U'-1]} = U' - L' \in [0, m-1]$ and,
because of our assumption about~$\aeval$,
$L \equiv L' \bmod m$ and $U \equiv U' \bmod m$.
(For the proof of these congruences, observe that, firstly,
 $r(z) \equiv \aeval(z) \bmod m$ because $\aeval$~satisfies $\Gamma_{i,r}$.
 This implies that $a r(z) \equiv a \aeval(z) \bmod m$ for all $a \in \Zed$.
 Summing up several congruences of this kind results in another congruence,
 of the form $r(t) \equiv \aeval(t) \bmod m$. Setting
 $t = t_{j-1}' + 1$ and $t = t_j'$ concludes the proof.)

We are now ready to compute the cardinality of $[L, U-1] \cap S$.
As $S$ is periodic with period~$m$, we will split $[L, U-1]$ into
two disjoint parts: $[L, U-1] = [L, L^* - 1] \cup [L^*, U-1]$,
where $L^*$ is the largest integer not exceeding $U$ and congruent to~$L$
modulo~$m$. We consider each part separately:
\begin{itemize}
\item
As $L^*$ is congruent to $L$ modulo~$m$, it is clear that the integer
segment $[L, L^*-1]$ consists of zero, one, two or more copies of
a full period of~$S$. Therefore,
\begin{equation*}
\card{([L, L^*-1] \cap S)}
=
\card{([0,m-1] \cap S)} \cdot \frac{L^* - L}{m}
=
p_j \cdot \frac{L^* - L}{m}
.
\end{equation*}
\item
For the second part, observe that all three numbers $L$, $L'$, and $L^*$
are congruent modulo~$m$; similarly, $U'$ and $U$ are congruent modulo~$m$.
By definition of $L^*$, we have $\card{[L^*, U-1]} = U - L^* \in [0; m-1]$.
Therefore, the following two constraints hold:
\begin{gather*}
\card{[L^*, U - 1]} = \card{[L', U' - 1]} \quad\text{and}\\
L^* \equiv L' \bmod m.
\end{gather*}
By periodicity of~$S$, for all $v \in [L^*, U-1]$ we have
\begin{equation*}
v \in S \quad\text{if and only if}\quad v + (L'-L^*) \in S
\end{equation*}
and therefore
\begin{equation*}
\card{([L^*, U-1] \cap S)} =
\card{([L', U'-1] \cap S)} =
r_j'.
\end{equation*}
\end{itemize}
Let us sum up the results above.
Due to the semantics of the quantifier $\exists^{\geq \avarter}\avarbis$,
the constraint on the variable~$z$ is equivalent to the following one:
\begin{equation*}
z \le 
p_j \cdot \frac{L^* - L}{m}
+
r_j',
\end{equation*}
which is the same as $m \cdot z \le p_j \cdot (L^* - L) + m \cdot r_j'$.
It remains to return to the original terms, ``undoing'' the variable
assignment~$\aeval$. Observe that
\begin{equation*}
L^* - L = (U - L) - (U - L^*) = (U - L) - (U' - L').
\end{equation*}
In the constraint, instead of $U-L$ we write $t_j' - (t_{j-1}' + 1)$,
and the value of $U' - L'$ can be computed from
$r(t_{j-1}')$ and $r(t_j')$ using simple arithmetic.
Putting everything together, we obtain
\begin{equation*}
m \cdot z \le p_j \cdot 
(t_j' - t_{j-1}' - 1 - (U' - L'))
+ m \cdot r_j'.
\end{equation*}
Since $U'-L' \in [0, m-1]$ and $r_j' \in [0, p_j]$,
the following bounds hold:
\begin{align*}
p_j \cdot ( -1 - (U' - L')) + m \cdot r_j' &\le m \cdot r_j' \le m^2, \\
p_j \cdot ( -1 - (U' - L')) + m \cdot r_j' &\ge - p_j \cdot m \ge - m^2.
\end{align*}
This completes the proof.
\end{claimproof}

\claimPsiFour*
\begin{claimproof}
    Follows directly from~\Cref{claim:psi4-inf-sol},~\Cref{claim:psi4-remove-cq-eq} and~\Cref{claim:psi4-remove-cq-ineq}, 
    together with simple formulae manipulations.
\end{claimproof}

\claimPsiFive*
\begin{claimproof}
    Let~$i \in [1,o]$, $r \colon Z \to [m]$. 
    Establishing~$\Gamma_{i,r} \land \Aformula_4^{i,r}$ $\fequiv$ $\Gamma_{i,r} \land \Aformula_{5}^{i,r}$ suffices. If~$\Aformula_4^{i,r} = \true$ then $\Aformula_5^{i,r}$ is defined as $\true$ and the equivalence holds. 
    Otherwise, we have 
    \begin{align*}
        \Aformula_4^{i,r} &\egdef
        \exists \avar_2\dots\exists\avar_{\ell} \left(
        \avar \leq \sum_{j=2}^{\ell}\avar_j + \sum_{j=1}^{\ell}c_j \land 
        \bigwedge_{j \in [2,\ell]} m\avar_{j} \leq p_j(t_{j}' - t_{j-1}') + r_j
        \right)\\
        \Aformula_5^{i,r} &\egdef
        m\avar \leq \sum_{j=2}^{\ell}( p_j(t_{j}' - t_{j-1}') + r_j ) + m \cdot \sum_{j=1}^{\ell} c_j.
    \end{align*}
    
    \ProofRightarrow
    It is easy to see that~$\Aformula_5^{i,r}$ is obtained from~$\Aformula_4^{i,r}$ by first multiplying both sides of the inequality~$\avar \leq \sum_{j=2}^{\ell}\avar_j + \sum_{j=1}^{\ell}c_j$ by $m$, and then substituting $m\avar_{j}$ with $p_j(t_{j}' - t_{j-1}') + r_j$.

    \ProofLeftarrow
    Let $\aeval$ be an assignment such that $\aeval \models \Gamma_{i,r} \land \Aformula_5^{i,r}$. We show that $\aeval \models \Aformula_4^{i,r}$.
    First of all, we consider $j \in [2,\ell]$, and aim at showing that~$\aeval(p_j(t_{j}' - t_{j-1}') + r_j))$ is a multiple of~$m$. We recall the definition of~$r_j$, $\underline{u}_j$ and $\overline{u}_j$, as introduced in Step IV:
    \begin{itemize}
        \item $r_j = - p_j \cdot (\overline{u}_j - \underline{u}_j) + m \cdot r_j'$,
        \item
        $\underline{u}_j = r(t_{j-1}')$ and $\overline{u}_j$ is the smallest integer congruent to $r(t_j')$ mod~$m$ and~$>\underline{u}_j$.
    \end{itemize}
    By definition of $r_j$, the term $p_j(t_{j}' - t_{j-1}') + r_j$ is equivalent to $p_j(t_{j}' - t_{j-1}' -\overline{u}_j + \underline{u}_j) + m \cdot r_j'$.
    Since $\aeval \models \Gamma_{i,r}$, we have $\underline{u}_j \equiv_m \aeval(t_{j-1}')$ and $\overline{u}_j \equiv_m \aeval(t_{j}')$. 
    From axioms of modular arithmetic, $\aeval(t_{j}') - \aeval(t_{j-1}') -\overline{u}_j + \underline{u}_j \equiv_m 0$, and thus $p_j(\aeval(t_{j}') - \aeval(t_{j-1}') -\overline{u}_j + \underline{u}_j) + m \cdot r_j' \equiv_m 0$, which allows us to conclude that $\aeval(p_j(t_{j}' - t_{j-1}') + r_j))$ is a multiple of $m$.
    Therefore, for every~$j \in [2,\ell]$, there is $v_j \in \Zed$ such that 
    $m \cdot v_j = \aeval(p_j(t_{j}' - t_{j-1}') + r_j))$.
    Let~$\avar_2,\dots,\avar_\ell$ be fresh variables.
    We consider the assignment~$\aeval[v_2/\avar_2,\dots,v_\ell/\avar_\ell]$ that updates~$\aeval$ by assigning~$v_j$ to the variable~$\avar_j$, for every $j \in [2,\ell]$. We have, 
    \begin{center}
    $\aeval[v_2/\avar_2,\dots,v_\ell/\avar_\ell] \models m \avar \leq \sum_{j=2}^{\ell}m\avar_j + m\cdot\sum_{j=1}^{\ell}c_j \land 
    \bigwedge_{j \in [2,\ell]} m\avar_{j} = p_j(t_{j}' - t_{j-1}') + r_j$.
    \end{center}
    Divide both side of the leftmost inequality by $m \in \Nat$, and weaken the equalities of the form $m\avar_{j} = p_j(t_{j}' - t_{j-1}') + r_j$
    to inequalities of the form~$m\avar_{j} \leq p_j(t_{j}' - t_{j-1}') + r_j$.
    We obtain 
    \begin{center}
        $\aeval[v_2/\avar_2,\dots,v_\ell/\avar_\ell] \models \avar \leq \sum_{j=2}^{\ell} \avar_j + \cdot\sum_{j=1}^{\ell}c_j \land 
    \bigwedge_{j \in [2,\ell]} m\avar_{j} \leq p_j(t_{j}' - t_{j-1}') + r_j$.
    \end{center}
    By definition of the existential quantifier, 
    \begin{center}
    $\aeval \models \exists \avar_2 \dots \exists \avar_\ell \left(\avar \leq \sum_{j=2}^{\ell} \avar_j + \cdot\sum_{j=1}^{\ell}c_j \land 
    \bigwedge_{j \in [2,\ell]} m\avar_{j} \leq p_j(t_{j}' - t_{j-1}') + r_j\right)$.
    \end{center}
    That is, $\aeval \models \Aformula_4^{i,r}$.
\end{claimproof}

\section{Missing proofs from~\Cref{s:summary}}
\label{section:complexity-analysis}
In this appendix, we provide the computational analysis on the parameters~$\linterms{.}$, $\homterms{.}$ and~$\fmod{.}$, of the formula obtained form the elimination of the quantifier~$\exists^{\geq \avar}\avarbis$ via the procedure of~\Cref{section:quantifier-elimination}.

Let~$\aformula$ be a quantifier-free formula, and let $d = \card{\freevars{\aformula}}$. Consider the formula~$\Aformula_5$ obtained by performing the quantifier-elimination procedure of~\Cref{section:quantifier-elimination} on the formula~$\exists^{\geq \avar}\avarbis\, \aformula$. 

The following lemma restates~\Cref{lemma:bound-quantifier-elimination}
by expressing the bounds on~$\Aformula_5$ explicitly.
\begin{lemma}
    \label{theorem:bound-quantifier-elimination}
    The following bounds are established for~$\Aformula_5$:
    \begin{itemize}
        \setlength{\itemsep}{4pt}
        \item $\card{\fmod{\Aformula_5}} = \{m\}$ with $m = k \cdot \lcm{(\fmod{\aformula})}$ and $k \leq \BigO{\norminf{\homterms{\aformula}}^{\card{\homterms{\aformula}}}}$.
        \item $\card{\linterms{\Aformula_5}}$ and $\card{\homterms{\Aformula_5}}$ are bounded by $(m \cdot \card{\linterms{\aformula}})^{\BigO{d}},$
        \item $\norminf{\linterms{\Aformula_5}} \leq \BigO{ m^2 \cdot \card{\linterms{\aformula}} \cdot \norminf{\linterms{\aformula}}}$,
        \item $\norminf{\homterms{\Aformula_5}} \leq \BigO{ m^2 \cdot \card{\linterms{\aformula}} \cdot \norminf{\homterms{\aformula}}}$.
    \end{itemize}
\end{lemma}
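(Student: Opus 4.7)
The plan is to trace the five steps of the quantifier elimination procedure of~\Cref{section:quantifier-elimination} and bookkeep how the parameters $\fmod{.}$, $\card{\linterms{.}}$, $\card{\homterms{.}}$, $\norminf{\linterms{.}}$ and $\norminf{\homterms{.}}$ evolve at each step. First I would focus on Step~I, since this is where the modulus $k$ is born and every coefficient of a $y$-free term can grow. By construction, $k$ is the $\lcm$ of the nonzero coefficients of $\avarbis$ appearing in $\homterms{\aformula}$; each such coefficient has absolute value at most $\norminf{\homterms{\aformula}}$ and there are at most $\card{\homterms{\aformula}}$ of them, giving $k \leq \norminf{\homterms{\aformula}}^{\card{\homterms{\aformula}}}$. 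The rewrite $a\avarbis + \aterm < 0 \mapsto k\avarbis + (k/a)\aterm < 0$ multiplies every other coefficient of $\aterm$ by a factor of magnitude at most $k$, and the modulo rewrite sends $q$ to $kq$. Hence $\norminf{\linterms{\Aformula_1}} \leq k \cdot \norminf{\linterms{\aformula}}$, and $m \egdef \lcm(\fmod{\Aformula_1}) = k \cdot \lcm(\fmod{\aformula})$ is the modulus that governs Steps~II--V.

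Next I would verify that Step~II matches the asserted cardinality bound. The set of $\avarbis$-free terms $T \cup \{0\}$ has size $n \leq \BigO{\card{\linterms{\aformula}}}$, so by~\Cref{lemma:bound-number-of-orderings} the number of orderings is $o \leq n^{\BigO{d}}$; combined with at most $m^d$ residue assignments $r \colon Z \to [m]$, this yields at most $(m \cdot \card{\linterms{\aformula}})^{\BigO{d}}$ disjuncts. Each $\Gamma_{i,r}$ inherits its terms from $\linterms{\Aformula_1}$ and its moduli from $\{m\}$. Steps~III--V neither introduce new moduli nor new $\avarbis$-free atomic terms outside $T \cup \{0\}$: the simple modulo constraints on $\avarbis$ produced via~\Cref{claim:psi3-simpl-mod} are evaluated away when computing the integers $p_j$, $r_j'$, $c_j$ of Step~IV. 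It follows that $\fmod{\Aformula_5} = \{m\}$ and the total count $\card{\linterms{\Aformula_5}}, \card{\homterms{\Aformula_5}}$ stays within $(m \cdot \card{\linterms{\aformula}})^{\BigO{d}}$.

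The heart of the analysis is the final norm bound on $\Aformula_5^{i,r}$, built in Steps~IV--V from combinations of the form $p_j(\aterm_j' - \aterm_{j-1}') + r_j$ with $p_j \in [0, m]$ and $r_j \in [-m^2, m^2]$. Since $\aterm_j', \aterm_{j-1}' \in T \cup \{0\} \subseteq \linterms{\Aformula_1} \cup \{0\}$, every variable coefficient of such a combination is bounded by $2 p_j \cdot \norminf{\linterms{\Aformula_1}} \leq 2 m \cdot k \cdot \norminf{\linterms{\aformula}} \leq 2 m^2 \cdot \norminf{\linterms{\aformula}}$. Summing over the $\ell \leq \BigO{\card{\linterms{\aformula}}}$ segments and adding the $m \sum c_j$ contribution to the constant, the single inequality $m\avar \leq \sum_{j=2}^{\ell} (p_j(\aterm_j' - \aterm_{j-1}') + r_j) + m \sum_{j=1}^{\ell} c_j$ has norm $\BigO{m^2 \cdot \card{\linterms{\aformula}} \cdot \norminf{\linterms{\aformula}}}$. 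For $\norminf{\homterms{\Aformula_5}}$, note that $r_j$ and $m c_j$ are pure constants and the leading term $m\avar$ is homogeneous of trivial norm; the homogeneous part of $p_j(\aterm_j' - \aterm_{j-1}')$ is scaled from $\homterms{\Aformula_1}$ only, so the analogous bound $\BigO{m^2 \cdot \card{\linterms{\aformula}} \cdot \norminf{\homterms{\aformula}}}$ holds.

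The main obstacle is doing the accounting carefully across Step~I and Step~IV: the $k$-blow-up of coefficients from normalisation must be \emph{composed} with the $p_j \leq m$ scaling of Step~IV, producing the characteristic $m^2$ factor; this requires the inequality $k \leq m$ together with the observation that $T$ lives in $\linterms{\Aformula_1}$ rather than $\linterms{\aformula}$. A secondary subtlety is separating the bound on $\norminf{\homterms{\Aformula_5}}$ from that on $\norminf{\linterms{\Aformula_5}}$: one must check that the integer constants $r_j$, $c_j$ and the coefficient $m$ attached to $\avar$ do not leak the $\norminf{\linterms{\aformula}}$ factor into the homogeneous norm, which is precisely what happens because these quantities are either pure constants or attached to the fresh outer variable.
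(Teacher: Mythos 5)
Your proposal is correct and follows essentially the same route as the paper's proof: bound $k$ and the norm blow-up in Step~I, use Lemma~\ref{lemma:bound-number-of-orderings} together with the $m^d$ residue guesses for the cardinalities, read off the norm of the terms $p_j(\aterm_j'-\aterm_{j-1}')+r_j$ from the explicit form of $\Aformula_5^{i,r}$ using $p_j \le m$, $|r_j| \le m^2$, $k \le m$, and separate the homogeneous from the affine norms. Two small inaccuracies worth noting: $T$ is not in general a subset of $\linterms{\Aformula_1}$ (only $\norminf{T} \le \norminf{\linterms{\Aformula_1}}$ holds, which suffices), and Steps~III--V do introduce new atomic terms (namely the single inequality in each $\Aformula_5^{i,r}$), so the cardinality bound really comes from the number of disjuncts rather than from ``no new terms'' — but your arithmetic already accounts for this, so the conclusion stands.
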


\begin{proof}
    First of all, from~\Cref{lemma:make-modulo-simple}, we notice that translating every modulo constraint appearing in~$\aformula$ into simple modulo constraints does not change the sets~$\linterms{\aformula}$, $\homterms{\aformula}$ and~$\fmod{\aformula}$. Therefore, assume~$\aformula$ to be a Boolean combination of linear inequalities and simple modulo constraints.
    Let~$k$ be the \emph{lcm} of the absolute values of all coefficients in~$y$ appearing in $\homterms{\aformula}$.
    We have $k \leq \norminf{\homterms{\aformula}}^{\card{\homterms{\aformula}}}$.
    The first step essentially multiplies every term in~$\aformula$ by $k$, producing the formula~$\Aformula_1$ with bounds 
    \begin{itemize}
        \setlength{\itemsep}{3pt}
        \item $\card{\linterms{\Aformula_1}} = \card{\linterms{\aformula}}$ and $\norminf{\linterms{\Aformula_1}} \leq k\norminf{\linterms{\aformula}}$,
        \item $\card{\homterms{\Aformula_1}} = \card{\homterms{\aformula}}$ and $\norminf{\homterms{\Aformula_1}} \leq k\norminf{\homterms{\aformula}}$,
        \item $\fmod{\Aformula_1} = \{kq \mid q \in \fmod{\aformula}\}$.
    \end{itemize}
    Let~$T$ be the set of all $y$-free terms~$t$ such that $t$, $y - t$ or $-y + t$ belong to~$\linterms{\Aformula_1}$.
    So, $\card{T} \leq \card{\linterms{\Aformula_1}}$, 
    $\norminf{T} \leq \norminf{\linterms{\Aformula_1}}$
    and all coefficients of variables in terms of~$T$ are bounded by~$\norminf{\homterms{\Aformula_1}}$.
    Let~$m = \lcm(\fmod{\Aformula_1}) = k \cdot \lcm{(\fmod{\aformula})}$.
    In the second step of the procedure, the orderings introduce terms $\aterm \lhd \aterm'$, where $\lhd \in \{<,=\}$ and $\aterm,\aterm' \in T \cup \{0\}$. So, at most $(\card{T} \cup \{0\})^2$ new terms are introduces, increasing $\linterms{.}$ and $\homterms{.}$ quadratically in cardinality. The magnitude of coefficients and constants doubles. 
    Simple modulo constraints of the form $x \equiv_m r$ are also introduced. 
    Because of this, the formula~$\Aformula_2$ produced in the second step of the procedure has the following bounds:
    \begin{itemize}
        \setlength{\itemsep}{3pt}
        \item $\card{\linterms{\Aformula_2}} \leq (\card{\linterms{\Aformula_1}} + 1)^2 + \card{\linterms{\Aformula_1}}$ and $\norminf{\linterms{\Aformula_2}} \leq 2\norminf{\linterms{\Aformula_1}}$,
        \item $\card{\homterms{\Aformula_2}} = (\card{\homterms{\Aformula_1}}+1)^2 + \card{\homterms{\Aformula_1}}$ and $\norminf{\homterms{\Aformula_2}} \leq 2\norminf{\homterms{\Aformula_1}}$,
        \item $\fmod{\Aformula_2} = \{m\} \cup \fmod{\Aformula_1}$.
    \end{itemize}
    To study the bounds on the formula~$\Aformula_5$, analysing the bounds obtained from the third and fourth steps of the procedure is unnecessary. Indeed, we recall that~$\Aformula_5$ is defined as 
    \begin{center}
        $\Aformula_5 = \bigvee_{i \in [1,o]}\bigvee_{r\colon Z\to [m]} (\Gamma_{i,r} \land \Aformula_5^{i,r})$,
    \end{center}
    where every $\Gamma_{i,r}$ is a conjunction of simple modulo constraints of the form $z \equiv_m r$ and linear inequalities from~$\Aformula_2$, and every $\Aformula_5^{i,r}$ is either $\true$ or a formula of the form 
    \begin{center}
        $m\avar \leq \sum_{j=2}^{\ell}( p_j(t_{j}' - t_{j-1}') + r_j ) + m \cdot \sum_{j=1}^{\ell} c_j$
    \end{center}
    where
    $\ell \leq \card{T}+1 \leq \card{\linterms{\Aformula_1}}+1$, for every~$j \in [1,\ell$] $c_j \in \{0,1\}$, and for every ${j \in [2,\ell]}$, ${p_j \in [0,m]}$ and $\abs{r_j} \in[-m^2,m^2]$ (see~\Cref{claim:psi4-remove-cq-ineq}) and the terms~$\aterm_j'$ 
    and~$\aterm_{j-1}'$ belongs to $T$.
    This implies that variable coefficients in~$\Aformula_5^{i,r}$ are bounded (in absolute value) by ${m \cdot (2 \cdot \ell \cdot \norminf{\homterms{\Aformula_1}}+1)}$, whereas the constant term is bounded by 
    $2 \cdot \ell \cdot m \cdot \norminf{\linterms{\Aformula_1}} + \ell \cdot m^2 + m$, again in absolute values.
    By recalling that the number of disjunctions of~$\Aformula_5$ is $m^do \leq m^d(2(\card{(T \cup \{0\})}^2 + 1)^d$
    (see \Cref{lemma:bound-number-of-orderings} for the bound on~$o$), we derive 
    \begin{itemize}
        \setlength{\itemsep}{3pt}
        \item $\fmod{\Aformula_5} = \{m\}$, and so $\norminf{\fmod{\Aformula_5}} \leq \lcm(\fmod{\aformula}) \cdot \norminf{\homterms{\aformula}}^{\card{\homterms{\aformula}}}$,
        \item $
            \begin{aligned}[t]
            \card{\linterms{\Aformula_5}} &\leq (\card{\linterms{\Aformula_1}} + 1)^2 + (2(\card{\linterms{\Aformula_1}+1)}^2 + 1)^d m^d\\
            &\leq \BigO{m^d \cdot \card{\linterms{\aformula}}^{2d}}
            \end{aligned}
            $ 
        \item $
        \begin{aligned}[t]
        \norminf{\linterms{\Aformula_5}} &\leq 2 \cdot \ell \cdot m \cdot \norminf{\linterms{\Aformula_1}} + \ell \cdot m^2 + m\\  
        & \leq \BigO{m^2 \cdot \card{\linterms{\aformula}} \cdot \norminf{\linterms{\aformula}}}
        \end{aligned}$
        \item $
        \begin{aligned}[t]
        \card{\homterms{\Aformula_5}} &\leq (\card{\homterms{\Aformula_1}} + 1)^2 + (2(\card{\linterms{\Aformula_1}+1)}^2 + 1)^d m^d\\
        &\leq \BigO{m^d \cdot \card{\linterms{\aformula}}^{2d}}
        \end{aligned}
        $ 
        \item $
        \begin{aligned}[t]
        \norminf{\homterms{\Aformula_5}} &\leq m \cdot (2\cdot\ell\cdot\norminf{\homterms{\Aformula_1}}+1)\\
        &\leq \BigO{ m^2 \cdot \card{\linterms{\aformula}} \cdot \norminf{\homterms{\aformula}}}.
        \end{aligned}$
        \vspace{-0.6cm}
    \end{itemize}

\end{proof}

\section{Missing proofs from \Cref{section:elimination-treshold-quantifiers}}

In the lemma below, we recall that $d = \vars{\Aformula_5^i,r}$
and that $\Aformula_5^{i,r}$ has the following form (see~\Cref{equation:5ir-form} in the body of the paper):
\[
    \textstyle
    m\avar \leq \sum_{j=2}^{\ell}( p_j(t_{j}' - t_{j-1}') + r_j ) + m \cdot \sum_{j=1}^{\ell} c_j
\]

\LemmaSimplifyingThreshold*

\begin{proof}
    Consider the set of $e+\ell$ terms $T' = \{\aterm_j' - \aterm_{j-1}' \mid j \in [2,\ell]\} \cup [0,e]$.
    Note that $e \leq 2 \cdot c \cdot \norminf{\Aformula_5^{i,r}}$, and so $\norminf{T'} \leq \BigO{c \cdot \norminf{\Aformula_5^{i,r}}}$. 
    Applying~\Cref{lemma:bound-number-of-orderings}, 
    we compute a set 
    $\{\ordering_1',\dots,\ordering_o'\}$ of orderings for~$T'$ such that $\psi \egdef \bigvee_{k \in [1,o]} \ordering_k'$ is a tautology and $o' = \mathcal O((e+\ell)^{2 d})$.
    Since $\ordering_k'$ is an ordering for $T'$, for all $k \in [1,o']$ and $t',t'' \in T'$, 
    exactly one of the entailments 
    $\ordering_k' \models t' < t''$,
    $\ordering_k' \models t' = t''$ or 
    $\ordering_k' \models t' > t''$ holds.
    
    We iterate over all $k \in [1,o']$, 
    at each step generating a formula $\psi_k$ that satisfies \begin{equation}
        \label{equation:equiv5ir-psi}
        \Gamma_{i,r} \land \Aformula_5^{i,r}\substitute{\avarter}{c} \fequiv (\Gamma_{i,r} \land \psi_k).
    \end{equation}
    At the end of the process, the formula $\psi_{o'}$ is the formula $\aformulater_{i,r}$ required by the lemma.
    Let $\psi_0 = \bigvee_{k \in [1,o]} (\Aformula_5^{i,r}\substitute{\avarter}{c} \land \ordering_k')$.
    Since $\psi$ is a tautology and $\psi_0 \fequiv \Aformula_5^{i,r}\substitute{\avarter}{c} \land \psi$, the formula $\psi_0$ satisfies the equivalence in~\eqref{equation:equiv5ir-psi}. 
    Let $n = \card{T'} = \ell + e$. For all $k \in [1,o']$, suppose 
    \[
        \ordering_k' = b_1 \lhd_1 b_2 \land \dots \land b_{n-1} \lhd_{n-1} b_n    
    \]
    where $\{b_1,\dots,b_n\} = T'$ and $\{\lhd_1,\dots,\lhd_{n-1}\} \subseteq \{<,=\}$. 
    We inductively assume that $\aformulabis_{k-1}$ satisfies the equivalence~\eqref{equation:equiv5ir-psi},
    and we compute $\aformulabis_k$ following the cases below. Notice that checking which of the cases is satisfied by $\ordering_k'$ can be done in linear time with respect to $\abs{\ordering_k'}$, by simply scanning the ordering.
    \begin{description}
        \item[case: $\ordering_k'$ does not respect the order $0 < 1 < \dots < e$.] 
        Then, $\ordering_k'$ is unsatisfiable and $\psi_k$ is obtained from $\psi_{k-1}$ by removing the disjunct $\Aformula_5^{i,r}\substitute{\avarter}{c} \land \ordering_k'$. 
        Since $\psi_{k-1}$ satisfies the equivalence~\eqref{equation:equiv5ir-psi}, so does~$\psi_k$.
        \item[case: $\ordering_k' \models \aterm_j' - \aterm_{j-1}' < 0$, for some $j \in {[2,\ell]}$].
        Since $\Aformula_5^{i,r} \models \aterm_{j-1}' < \aterm_j'$, the formula $\Aformula_5^{i,r}\substitute{\avarter}{c} \land \ordering_k'$ is unsatisfiable.
        Again, $\psi_k$ is obtained from $\psi_{k-1}$ by removing the disjunct $\Aformula_5^{i,r}\substitute{\avarter}{c} \land \ordering_k'$, and~$\psi_k$ satisfies the equivalence~\eqref{equation:equiv5ir-psi}.
        \item[otherwise,] for every $j \in [2,\ell]$ there is $i_j \in [0,e]$ such that either $\ordering_k' \models i_j = \aterm_j' - \aterm_{j-1}'$ or $i_j = e$ and $\ordering_k \models i_j < \aterm_j' - \aterm_{j-1}'$.
        By simply parsing of the ordering, can find all 
        the~$i_j$ in time~$\BigO{\abs{\ordering_k'}}$.
        Now, if $e \leq \sum_{j = 2}^\ell p_j \cdot i_j$ does not hold, 
        then the formula $\Aformula_5^{i,r}\substitute{\avarter}{c} \land \ordering_k'$ is unsatisfiable and, 
        as in the previous cases, we define $\psi_k$ from $\psi_{k-1}$ by removing this disjunct. 
        We obtain a formula that satisfies the equivalence~\eqref{equation:equiv5ir-psi}.
        Otherwise, let $\aformulater = \bigwedge_{j \in [2,\ell]} \aterm_j' - \aterm_{j-1} \geq i_j$.
        By definition, $\ordering_k' \models \aformulater$ 
        and $\aformulater \models \Aformula_5^{i,r}\substitute{\avarter}{c}$.
        Let $\aformulabis_k$ be the formula obtained from $\aformulabis_{k-1}$ by replacing the disjunct $\Aformula_5^{i,r}\substitute{\avarter}{c} \land \ordering_k'$ by the formula~$\aformulater$. 
        Notice that $\aformulabis_{k-1} \models \aformulabis_k$,
        directly from $\ordering_k' \models \aformulater$.
        We show that $\aformulabis_k$ satisfies the equivalence~\eqref{equation:equiv5ir-psi}. 
        
        \ProofRightarrow 
        Let $\aeval$ be an assignment such that 
        $\aeval \models \Gamma_{i,r} \land \Aformula_5^{i,r}$.
        Since $\aformulabis_{k-1}$ satisfies the equivalence~\eqref{equation:equiv5ir-psi}, 
        we have $\aeval \models \Gamma_{i,r} \land \aformulabis_{k-1}$. By $\aformulabis_{k-1} \models \aformulabis_k$, we derive 
        $\aeval \models \Gamma_{i,r} \land \aformula_{k}$.

        \ProofLeftarrow Let $\aeval$ be an assignment such that $\aeval \models \Gamma_{i,r} \land \aformulabis_k$. 
        If $\aeval$ satisfies a disjunct of $\aformulabis_k$ that is different from $\aformulater$, then $\aeval \models \aformulabis_{k-1}$ and, since $\aformulabis_{k-1}$ satisfies the equivalence~\eqref{equation:equiv5ir-psi}, $\aeval \models \Gamma_{i,r} \land \Aformula_5^{i,r}$.
        Otherwise, $\aeval \models \aformulater$ and, by $\aformulater \models \Aformula_5^{i,r}\substitute{\avarter}{c}$ we deduce that $\aeval \models \Gamma_{i,r} \land \Aformula_5^{i,r}\substitute{\avarter}{c}$.
    \end{description}
    As already said, the formula $\aformulater_{i,r} \egdef \aformulabis_{o'}$. The formula $\aformulater_{i,r}$ satisfies all the expected properties.
    In particular, $I \subseteq [0,e]^\ell$ holds by definition of the various $i_j$ in the third case of the procedure, and $\card{I} \leq \BigO{(e+\ell)^{2d}}$ holds from the bound on the number $o'$ of disjuncts of $\aformulabis$.
    By~\Cref{lemma:bound-number-of-orderings}, 
    computing the initial formula $\aformulabis$ can be done in time~$(e+\ell)^{\mathcal O (d)} \log \norminf{T'}^{\mathcal O (1)}$.
    Similarly, the case analysis on the disjuncts of $\aformulabis$ has a overall running time that is linear in $\abs{\aformulabis} \leq (e+\ell)^{\mathcal O (d)} \log \norminf{T'}^{\mathcal O (1)}$.
\end{proof}

\ClaimThresholdQEProcedure*
\begin{claimproof}
  By definition of $\Aformula_6^c$ 
  and~\Cref{lemma:simplifying-threshold}, 
  $\Aformula_5 \fequiv \Aformula_6^c$. 
  Then, the claim follows from the chain of claims ``$\Aformula_{i} \fequiv \Aformula_{i+1}$'' starting from~\Cref{claim:psi1} and ending with~\Cref{claim:psi5}.
\end{claimproof}

\begin{restatable}{lemma}{LemmaBoundQuantifierThreshold}
    \label{lemma:bound-quantifier-elimination-threshold}
    The following bounds are established for~$\Aformula_6^c$:
    \begin{itemize}
        \setlength{\itemsep}{3pt}
        \item $\fmod{\Aformula_6^c} = \{m\}$ with $m = k \cdot \lcm{(\fmod{\aformula})}$ and $k \leq \norminf{\homterms{\aformula}}^{\card{\homterms{\aformula}}}$,
        \item $\card{\homterms{\Aformula_6^c}} \leq \BigO{\card{\homterms{\aformula}}^2}$ 
        and $\norminf{\homterms{\Aformula_6^c}} \leq \BigO{k \cdot \norminf{\homterms{\aformula}}}$,
        \item $\card{\linterms{\Aformula_6^c}} \leq \BigO{c \cdot m^2 \cdot \card{\linterms{\aformula}}^2}$ and $\norminf{\linterms{\Aformula_6^c}} \leq \BigO{k \cdot c \cdot m^2 \cdot \card{\linterms{\aformula}} \cdot \norminf{\linterms{\aformula}}}$.
    \end{itemize}
\end{restatable}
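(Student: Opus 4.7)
The plan is to compare the parameters of~$\Aformula_6^c$ with those of~$\Aformula_5$ as bounded by \Cref{theorem:bound-quantifier-elimination} (the explicit restatement of~\Cref{lemma:bound-quantifier-elimination} given in~\Cref{section:complexity-analysis}). Recall that $\Aformula_6^c = \bigvee_{i,r}(\Gamma_{i,r} \land \Aformula_6^{i,r})$, where each~$\Aformula_6^{i,r}$ is either~$\true$ or the formula~$\aformulater_{i,r}$ returned by~\Cref{lemma:simplifying-threshold}. Since the guards~$\Gamma_{i,r}$ are exactly the ones present in~$\Aformula_5$, their contribution to~$\linterms{\cdot}$, $\homterms{\cdot}$ and $\fmod{\cdot}$ can be read off from the proof of~\Cref{theorem:bound-quantifier-elimination}: they use the orderings of the terms in $T \cup \{0\}$ (with $T$ the $\avarbis$-free part of~$\linterms{\Aformula_1}$) together with simple modulo constraints modulo~$m$. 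The only residual task is thus to account for the atoms introduced by each~$\aformulater_{i,r}$.

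The first step is to observe that, by~\Cref{lemma:simplifying-threshold}, each $\aformulater_{i,r}$ is a disjunction of conjunctions of inequalities of the form~$\aterm_j' - \aterm_{j-1}' \geq i_j$, where $\aterm_{j-1}',\aterm_j' \in T \cup \{0\}$ and $i_j \in [0,e]$. I would use this to derive three facts in turn. First, the \textbf{homogeneous parts} of these inequalities are simple \emph{differences} of homogeneous parts of elements of $T \cup \{0\}$; since the homogeneous parts of $T$ coincide with those of~$\linterms{\Aformula_1}$, and the latter are at most~$\card{\homterms{\aformula}}$ in number and of norm at most~$k \cdot \norminf{\homterms{\aformula}}$, one obtains $\card{\homterms{\Aformula_6^c}} \le O(\card{\homterms{\aformula}}^2)$ and $\norminf{\homterms{\Aformula_6^c}} \le O(k \cdot \norminf{\homterms{\aformula}})$. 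Second, the \textbf{constants}~$i_j$ are bounded by~$e$, which from the statement of~\Cref{lemma:simplifying-threshold} is at most $O(c \cdot \norminf{\Aformula_5^{i,r}}) = O(c \cdot m^2 \cdot \norminf{\linterms{\aformula}})$, using the bound $\norminf{\Aformula_5^{i,r}} \leq O(m^2 \cdot \card{\linterms{\aformula}} \cdot \norminf{\linterms{\aformula}})$ from~\Cref{theorem:bound-quantifier-elimination}. Third, the \textbf{number} of distinct inequalities~$\aterm_j'-\aterm_{j-1}' \geq i_j$ across all $\aformulater_{i,r}$'s is controlled by $O(\card{T}^2)$ choices of pair times $e+1$ choices of right-hand side, giving~$\card{\linterms{\Aformula_6^c}} \leq O(c \cdot m^2 \cdot \card{\linterms{\aformula}}^2)$ after adding the contribution of the guards. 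The modulus set $\fmod{\Aformula_6^c}$ is inherited unchanged from $\fmod{\Aformula_5}$, so $\fmod{\Aformula_6^c} = \{m\}$ with the stated value of~$m$.

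The main obstacle (and the whole point of passing via~\Cref{lemma:simplifying-threshold} rather than substituting $c$ into~$\Aformula_5$ directly) is the \emph{quadratic} bound on~$\card{\homterms{\Aformula_6^c}}$. In~$\Aformula_5$, the inequalities in $\Aformula_5^{i,r}$ have the shape~$m\avarter \leq \sum_j p_j(\aterm_j' - \aterm_{j-1}') + \cdots$, so substituting a constant for~$\avarter$ would still leave a single linear combination of all the~$\aterm_j' - \aterm_{j-1}'$ as one homogeneous term, and different choices of the coefficients $p_j$ across the many $(i,r)$ disjuncts would produce exponentially many distinct such sums. By contrast,~\Cref{lemma:simplifying-threshold} replaces that single ``aggregated'' inequality by a Boolean combination of inequalities that each mention only \emph{one} pair~$(\aterm_j',\aterm_{j-1}')$, so that the set of homogeneous terms is simply a subset of $\{\aterm - \aterm' : \aterm,\aterm' \in T \cup \{0\}\}$. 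This structural observation is what drives the quadratic bound and hence the eventual~\threeexptime upper bound in~\Cref{theorem:threshold}. The remaining checks are routine arithmetic bookkeeping on top of the analysis already carried out for~\Cref{theorem:bound-quantifier-elimination}.
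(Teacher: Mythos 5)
Your proposal follows essentially the same route as the paper's proof: read off the contribution of the guards $\Gamma_{i,r}$ (orderings on $T \cup \{0\}$ plus modulo-$m$ constraints) and observe that each $\aformulater_{i,r}$ from \Cref{lemma:simplifying-threshold} only contributes inequalities $\aterm_j' - \aterm_{j-1}' \geq i_j$ with $\aterm_{j-1}',\aterm_j' \in T\cup\{0\}$ and $i_j \in [0,e]$, so the homogeneous terms are differences of pairs from $T \cup \{0\}$ (quadratically many) and the constants are bounded by $e$. The one quantitative slip is in your bound on $e$: you invoke $e = O(c \cdot \norminf{\Aformula_5^{i,r}})$ and then substitute $\norminf{\Aformula_5^{i,r}} = O(m^2 \card{\linterms{\aformula}} \norminf{\linterms{\aformula}})$, dropping the $\card{\linterms{\aformula}}$ factor; worse, this path leaves a spurious $\norminf{\linterms{\aformula}}$ in the bound on $\card{\linterms{\Aformula_6^c}}$, so it should instead be bounded directly from its definition $e = m(c - \sum_j c_j) - \sum_j r_j$, giving $e \leq O(mc + m^2 \ell) = O(mc + m^2 \card{\linterms{\aformula}})$, as the paper does.
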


\begin{proof}
    Without loss of generality, we assume $\card{\homterms{\aformula}}$, $\card{\linterms{\aformula}}$, $\norminf{\homterms{\aformula}}$ and $\norminf{\linterms{\aformula}}$ to be at least $1$.
    We also assume $c$ to be at least $1$, as otherwise the formula $\exists^{\geq c} \avarbis\, \aformula$ is trivially true. 
    These assumptions hide constant factors in the exponent. 
    We recall the bounds (as in~\Cref{theorem:bound-quantifier-elimination}) on the formula~$\Aformula_1$ obtained after performing the normalisation of the coefficients of~$\avarbis$, 
    as described in Step I of~\Cref{section:quantifier-elimination}. We have
    \begin{itemize}
        \setlength{\itemsep}{3pt}
        \item $\card{\linterms{\Aformula_1}} = \card{\linterms{\aformula}}$ and $\norminf{\linterms{\Aformula_1}} \leq k\norminf{\linterms{\aformula}}$,
        \item $\card{\homterms{\Aformula_1}} = \card{\homterms{\aformula}}$ and $\norminf{\homterms{\Aformula_1}} \leq k\norminf{\homterms{\aformula}}$.
    \end{itemize}
    where $k \leq \norminf{\homterms{\aformula}}^{\card{\homterms{\aformula}}}$
    is the \emph{lcm} of all coefficients of~$\avarbis$ appearing in linear inequalities.

    We recall that the formula~$\Aformula_6^c$ is defined as~$\bigvee_{i \in [1,o]}\bigvee_{r\colon Z\to [m]} (\Gamma_{i,r} \land \Aformula_6^{i,r})$, where ${Z = \freevars{\aformula}}$, $m = k \cdot \lcm(\fmod{\aformula})$,
    $\Gamma_{i,r} = \ordering_i \land (\bigwedge_{\avarfour \in Z} \avarfour \equiv_{m} r(\avarfour))$, 
    where~$\ordering_i$ is an ordering on the set of terms~$T \cup \{0\}$, 
    and~$\Aformula_6^{i,r}$ is either~$\true$ or 
    of the form (see~\Cref{lemma:simplifying-threshold})
    \begin{center}
    $\bigvee_{(i_2,\dots,i_\ell) \in I} \bigwedge_{j \in [2,\ell]} \aterm_j' - \aterm_{j-1}' \geq i_j$. 
    \end{center}
    Here, $\ell \leq \card{T} + 1$ and, for all $j \in [2,\ell]$, $\aterm_j',\aterm_{j-1}' \in T$ 
    and $i_j \in [0,e]$ where $e \leq m \cdot (c + (m+1) \cdot \ell)$.
    Moreover, $\card{T} \leq \card{\linterms{\Aformula_1}}$, $\norminf{T} \leq \norminf{\linterms{\Aformula_1}}$
    and all coefficients of variables in terms of~$T$ are bounded by~$\norminf{\homterms{\Aformula_1}}$.
    So, when accounting for all orderings $(\ordering_i)_{i \in [1,o]}$ and all inequalities of the form $\aterm_j' - \aterm_{j-1}' \geq i_j$, the formula~$\Aformula_6^c$ contains $(e+1)\cdot (\card{T}+1)^2$ inequalities. However, the set~$\homterms{\Aformula_6^c}$ is only quadratic on the size of the set of homogeneous terms built from pairs of terms in~$T \cup \{0\}$, as we do not account for the natural numbers~$i_j$. Since the terms in $T$ are constructed by removing $\avarbis$ from terms in $\linterms{\Aformula_1}$, $\card{\homterms{\Aformula_6^c}}$ is quadratic on $\card{\homterms{\Aformula_1}}$. 
    The magnitude of the coefficients of the variables in linear inequalities of~$\Aformula_6^c$ doubles with respect to~$\norminf{T}$, whereas the magnitude of the constants is bounded by~$2\norminf{T} + e$.
    Lastly, every modulo constraint in~$\Aformula_6^{c}$ is of the form~$\avarfour \equiv_{m} r(\avarfour)$, and
    thus~$\fmod{\Aformula_6^c} = \{m\}$. Overall, the following bounds are derived:
    \begin{itemize}
        \setlength{\itemsep}{3pt}
        \item $\card{\fmod{\Aformula_6^c}} = \{m\}$ with $m = k \cdot \lcm{(\fmod{\aformula})}$ and $k \leq \norminf{\homterms{\aformula}}^{\card{\homterms{\aformula}}}$,
        \item $\card{\homterms{\Aformula_6^{c}}} \leq (\card{\homterms{\aformula}}+1)^2 \leq 4 \cdot \card{\homterms{\aformula}}^2$,
        \item $\norminf{\homterms{\Aformula_6^{c}}} \leq 2\cdot k \cdot \norminf{\homterms{\aformula}}$,
        \item $\card{\linterms{\Aformula_6^{c}}} \leq (e+1) \cdot (\card{\linterms{\aformula}}+1)^2 \leq 20 \cdot c \cdot m^2 \cdot \card{\linterms{\aformula}}^3$,
        \item $\norminf{\linterms{\Aformula_6^{c}}} \leq 2 \cdot k \cdot \norminf{\linterms{\aformula}} + e \leq  
        6 \cdot k \cdot c \cdot m^2 \cdot \card{\linterms{\aformula}} \cdot \norminf{\linterms{\aformula}}$,
    \end{itemize}
    where we recall that we are assuming $\card{\homterms{\aformula}}$, $\card{\linterms{\aformula}}$, $\norminf{\homterms{\aformula}}$, $\norminf{\linterms{\aformula}}$,
    $c$ $\geq$ $1$.
\end{proof}

\LemmaBoundQuantifierThresholdSimplified*

\begin{proof}
This is a simple consequence of~\Cref{lemma:bound-quantifier-elimination-threshold}.
\end{proof}

\LemmaBoundQuantTreshMultiple*

\begin{proof} 
    Recall that the standard first-order quantifier~$\exists \avarbis$ is equivalent to~$\exists^{\geq 1} \avarbis$. Therefore, without loss of generality, we can assume~$\aformula$ to only contain threshold counting quantifiers. For simplicity, we also assume $\card{\homterms{\aformula}}$, $\card{\linterms{\aformula}}$, $\norminf{\homterms{\aformula}}$ and $\norminf{\linterms{\aformula}}$ to be at least $1$. This hides constant factors in the exponent of the bounds that we derive.
    Le us introduce some shortcuts.
    \begin{itemize}
        \item Let $d$ be the the quantifier-depth of~$\aformula$, 
        \item let $B$ be $2$ plus~$\card{\fmod{\aformula}}$, plus the number of Boolean connectives in~$\aformula$, and 
        \item let $\bar{c}$ be the maximal integer such that~$\exists^{\geq \bar{c}}\avarbis$ occurs in~$\aformula$. 
    \end{itemize}
    We show the following bounds for~$\Aformula$, sharpening the ones in the statement of the lemma.
    \begin{itemize}
        \setlength{\itemsep}{3pt}
        \item $\card{\homterms{\Aformula}} \leq A_d \egdef
            (4 \cdot B)^{2^d-1} \cdot \card{\homterms{\aformula}}^{2^{d}},$
        \item $\card{\fmod{\Aformula}} \leq B$,
        \item $\norminf{\homterms{\Aformula}} \leq C_d \egdef 2^{(2A_d)^{d}-1}\norminf{\homterms{\aformula}}^{(2A_d)^d}$,
        \item $\norminf{\fmod{\Aformula}} \leq D_d \egdef (C_d)^{(B^d-1)} \cdot \lcm(\fmod{\aformula})^{B^d}$,
        \item $\card{\linterms{\Aformula}} \leq 
            E_d \egdef (20 \cdot \bar{c} \cdot B \cdot {D_d}^2)^{3^d-1} \cdot \card{\linterms{\aformula}}^{3^d}$,
        \item $\norminf{\linterms{\Aformula}} \leq F_d \egdef (6 \cdot \bar{c} \cdot {D_d}^3 \cdot E_d)^{d} \norminf{\linterms{\aformula}}$.
    \end{itemize}
    Notice that~$A_d$, $C_d$, $D_d$, $E_d$ and $F_d$ are monotonous in $d$.
    Moreover, notice that $B \leq \BigO{\abs{\aformula}}$.
    The proof is by induction on the quantifier-depth of~$\aformula$. 
    The base case fore $d = 0$, i.e.~$\aformula$ quantifier-free, is trivial.
    For the induction step, let $S = \{\exists^{\geq c_1}\avarbis_1\,\aformulabis_1,\dots,\exists^{\geq_k}\avarbis_n\,\aformulabis_n\}$ be a minimal family of formulae such that~$\aformula$ is a Boolean combination of formulae from $S$. Notice that $n \leq B$.
    Let $j \in [1,n]$.
    The quantifier-depth of $\aformulabis_j$ is at most $d-1$.
    We apply the quantifier elimination procedure on~$\aformulabis_j$, obtaining the formula~$\Aformula_j$. By induction hypothesis,
    \begin{itemize}
        \setlength{\itemsep}{3pt}
        \item $\card{\homterms{\Aformula_j}} \leq 
        A_{d-1} =(4 \cdot B)^{2^{d-1}-1} \cdot \card{\homterms{\aformula}}^{2^{d-1}},$
        \item $\card{\fmod{\Aformula_j}} \leq B$,
        \item $\norminf{\homterms{\Aformula_j}} \leq C_{d-1} = 2^{(2A_{d-1})^{d-1}-1}\norminf{\homterms{\aformula}}^{(2A_{d-1})^{d-1}}$,
        \item $\norminf{\fmod{\Aformula_j}} \leq D_{d-1} = (C_{d-1})^{(B^{d-1}-1)} \cdot \lcm(\fmod{\aformula})^{B^{d-1}}$,
        \item $\card{\linterms{\Aformula_j}} \leq E_{d-1} =
        (20 \cdot \bar{c} \cdot B \cdot {D_{d-1}}^2)^{(3^{d-1}-1)} \cdot \card{\linterms{\aformula}}^{3^{d-1}}$,
        \item $\norminf{\linterms{\Aformula_j}} \leq F_{d-1} = (6 \cdot \bar{c} \cdot {D_{d-1}}^3 \cdot E_{d-1})^{d-1} \norminf{\linterms{\aformula}}$.
    \end{itemize}
    For~$j \in [1,k]$, we consider every formula $\exists^{\geq c}\avarbis_j \,\Aformula_j$ and perform 
    the quantifier elimination procedure for threshold counting quantifiers, obtaining a formula $\widetilde{\Aformula}_j$. 
    From~\Cref{lemma:bound-quantifier-elimination-threshold} (see the proof of this lemma for the exact bounds) 
    we have 
    \begin{itemize}
        \setlength{\itemsep}{3pt}
        \item $\card{\fmod{\widetilde{\Aformula}_j}} = \{m\}$ with $m = k \cdot \lcm{(\fmod{\Aformula_j})}$ and $k \leq \norminf{\homterms{\Aformula_j}}^{\card{\homterms{\Aformula_j}}}$,
        \item $\card{\homterms{\widetilde{\Aformula}_j}} \leq 4 \cdot \card{\homterms{\Aformula_j}}^2$,
        \item $\norminf{\homterms{\widetilde{\Aformula}_j}} \leq  2 \cdot k \cdot \norminf{\homterms{\Aformula_j}} \leq 2 \cdot \norminf{\homterms{\Aformula_j}}^{2\card{\homterms{\Aformula_j}}}$,
        \item $\card{\linterms{\widetilde{\Aformula}_j}} \leq 20 \cdot \bar{c} \cdot m^2 \cdot \card{\linterms{\Aformula_j}}^3$,
        \item $\norminf{\linterms{\widetilde{\Aformula}_j}} \leq  6 \cdot k \cdot \bar{c} \cdot m^2 \cdot \card{\linterms{\Aformula_j}} \cdot \norminf{\linterms{\Aformula_j}} \leq 6 \cdot \bar{c} \cdot m^3 \cdot \card{\linterms{\Aformula_j}} \cdot \norminf{\linterms{\Aformula_j}}$,
    \end{itemize}
    We derive:
    \begin{itemize}
        \item $\card{\homterms{\widetilde{\Aformula}_j}} \leq 4 \cdot ((4 \cdot B)^{2^{d-1}-1} \cdot \card{\homterms{\aformula}})^{2^{d-1}})^2 \leq 
        B^{2^d-2}4^{2^d-1}\card{\homterms{\aformula}}^{2^d} = \frac{A_d}{B}$.
        \item $
        \begin{aligned}[t]
        \norminf{\homterms{\widetilde{\Aformula}_j}} &\leq  2 (2^{(2A_{d-1})^{d-1}-1}\norminf{\homterms{\aformula}}^{(2A_{d-1})^{d-1}})^{2A_{d-1}}\\
        &\leq 2^{(2A_{d-1})^{d}-1}\norminf{\homterms{\aformula}}^{(2A_{d-1})^d} \leq 2^{(2A_{d})^{d}-1}\norminf{\homterms{\aformula}}^{(2A_{d})^d} = C_d.
        \end{aligned}$
    \end{itemize}
    Notice that $k \leq \norminf{\homterms{\Psi_j}}^{\card{\homterms{\Psi_j}}} \leq C_d$ and that $\lcm(\fmod{\Aformula_j}) \leq \norminf{\fmod{\Aformula_j}}^B$.
    \begin{itemize}
        \item $
        \begin{aligned}[t]
        m &\leq  \lcm{(\fmod{\Aformula_j})} \cdot C_d \leq C_d \cdot \norminf{\fmod{\Aformula_j}}^B\\
        &\leq C_d \cdot ((C_{d-1})^{(B^{d-1}-1)} \cdot \lcm(\fmod{\aformula})^{B^{d-1}})^B\\
        &\leq (C_d)^{(B^d-B+1)} \cdot \lcm(\fmod{\aformula})^{B^d}
        \leq (C_d)^{(B^d-1)} \cdot \lcm(\fmod{\aformula})^{B^d} = D_d,
        \end{aligned}$
        \item[] 
        where we recall that we assume $B \geq 2$.
        Hence, $\card{\fmod{\widetilde{\Aformula}_j}} = 1$ and $\norminf{\fmod{\widetilde{\Aformula}}_j} \leq D_d$.
        \item $
        \begin{aligned}[t]
            \card{\linterms{\widetilde{\Aformula}_j}} &\leq 20 \cdot \bar{c} \cdot m^2 \cdot \card{\linterms{\Aformula_j}}^3 \leq (20 \cdot \bar{c} \cdot m) \cdot ((20 \cdot \bar{c} \cdot B \cdot {D_d}^2)^{(3^{d-1}-1)} \cdot \card{\linterms{\aformula}}^{3^{d-1}})^3\\
            &\leq B^{3^{d}-3}(20 \cdot \bar{c} \cdot {D_d}^2)^{3^d-2}\cdot \card{\linterms{\aformula}}^{3^{d}} \leq \textstyle\frac{E_d}{B}.
        \end{aligned}$
        \item $\norminf{\linterms{\widetilde{\Aformula}_j}} \leq  6 \cdot \bar{c} \cdot m^3 \cdot \card{\linterms{\Aformula_j}} \cdot \norminf{\linterms{\Aformula_j}} \leq (6 \cdot \bar{c} \cdot {D_d}^3 \cdot E_d) \cdot (6 \cdot \bar{c} \cdot {D_d}^3 \cdot E_d)^{d-1} \norminf{\linterms{\aformula}} = F_d$.
    \end{itemize}
    For every $j \in [1,n]$, we replace in $\aformula$ 
    each occurrence of the formula~$\exists^{\geq c_j}\avar_j\,\aformulabis_j$ 
    not in the scope of a quantification with the formula~$\widetilde{\Aformula}_j$. We obtain the formula~$\Aformula$ that is a Boolean combination of 
    at most $B$ formulae from $\{\widetilde{\Aformula}_1,\dots,\widetilde{\Aformula}_n\}$. So, $\Aformula$ is quantifier-free.
    We have:
    \begin{itemize}
        \item $\card{\homterms{\Aformula}} = \sum_{j = 1}^n \card{\homterms{\widetilde{\Aformula}_j}} \leq A_d$,
        \item $\card{\fmod{\Aformula}} = \sum_{j = 1}^n \card{\fmod{\widetilde{\Aformula}_j}} \leq B$, 
        \item 
        $\norminf{\homterms{\Aformula}} \leq \max\{\norminf{\homterms{\widetilde{\Aformula}_j}}\mid j \in [1,n]\} \leq C_d$,
        \item 
        ${\norminf{\fmod{\Aformula}} \leq \max\{\norminf{\fmod{\widetilde{\Aformula}_j}}\mid j \in [1,n]\} \leq D_d}$,
        \item
        $\card{\linterms{\Aformula}} = \sum_{j = 1}^n \card{\linterms{\widetilde{\Aformula}_j}} \leq E_d$,
        \item 
        $\norminf{\linterms{\Aformula}} \leq \max\{\norminf{\linterms{\widetilde{\Aformula}_j}}\mid j \in [1,n]\} \leq F_d$.
        \qedhere
    \end{itemize}
\end{proof}

\section{Eliminating the counting quantifier~$\exists^{=\avar}\avarbis$}
\label{subsection:exists-equal-elim}

We adapt the quantifier-elimination procedure of~\Cref{section:quantifier-elimination} in order to directly deal with the counting quantifier~$\exists^{=\avar}\avarbis$. To shorten the presentation, we only provide the (minor) changes to the procedure of~\Cref{section:quantifier-elimination}. Consider~$\Aformula_0 = \exists^{=\avar}\avarbis\,\aformula$, where 
$\aformula$ is quantifier-free.

\subparagraph*{Steps I--III.}
The first two steps of the procedure follow in the same way as described in~\Cref{section:quantifier-elimination}, the only difference being that the counting quantifier~$\exists^{\geq \avar}\avarbis$ is substituted with~$\exists^{=\avar}\avarbis$. The third step is also analogous, but
instead of defining $\Aformula_3^{i,r}$ as 
\begin{center}
$\exists \avar_0\dots\exists\avar_{2\ell} \left(
            \avar \leq \avar_0 + \dots + \avar_{2\ell} \land \bigwedge_{j \in [0,2\ell]} \exists^{\geq \avar_j} \avarbis (\aformulafour_j \land \aformulabis^{i,r}_{\aformulafour_j})\right)$,
\end{center}
we define it as 
\begin{center}
    $\exists \avar_0\dots\exists\avar_{2\ell} \left(
                \avar = \avar_0 + \dots + \avar_{2\ell} \land \bigwedge_{j \in [0,2\ell]} \exists^{= \avar_j} \avarbis (\aformulafour_j \land \aformulabis^{i,r}_{\aformulafour_j})\right)$.
\end{center}
This change of the procedure is to be expected, in view of the difference between the two forms of quantification.

\subparagraph*{Step IV.}
The fourth step of the procedure is updated to deal with the different semantics that~$\exists^{\geq \avar}\avarbis\,\aformulabis$ and $\exists^{= \avar}\avarbis\,\aformulabis$ assume when infinitely many values of $\avarbis$ satisfy $\aformulabis$. 
In the first case, the formula~$\exists^{\geq \avar}\avarbis\,\aformulabis$
is equivalent to~$\true$, this follows from~\Cref{claim:psi4-inf-sol}. As already stated, the formula~$\exists^{= \avar}\avarbis\,\aformulabis$
instead evaluates to~$\false$, which lead us to update~\Cref{claim:psi4-inf-sol} as follows.%
\begin{claim}
    \label{claim:psi4-inf-sol-neg}
    Let $\aformulafour \in \{\avarbis < t_1',\ t_\ell' < \avarbis\}$.
    If $\exists \avarbis\, (\aformulafour \land \aformulabis_{\aformulafour}^{i,r})$ is satisfiable, then $\Gamma_{i,r} \land \Aformula^{i,r}_3$ $\fequiv$ $\false$.
\end{claim}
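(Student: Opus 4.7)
The plan is to mirror the proof of~\Cref{claim:psi4-inf-sol}, but exploit the fact that the semantics of $\exists^{=\avar}\avarbis$ forbids infinitely many witnesses rather than trivially accepting them. Fix $\aformulafour \in \{\avarbis < t_1',\ t_\ell' < \avarbis\}$ and assume the satisfiability hypothesis, namely that there is at least one integer~$n$ with $\aeval\substitute{\avarbis}{n} \models \aformulafour \land \aformulabis_{\aformulafour}^{i,r}$ for some~$\aeval$. I aim to show that no assignment $\aeval \models \Gamma_{i,r}$ can satisfy the modified $\Aformula_3^{i,r}$.

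First I would reuse the periodicity argument from the proof of~\Cref{claim:psi4-inf-sol} to strengthen the satisfiability hypothesis to infinitude. Concretely, since Step~III is unchanged in the adapted procedure, $\aformulabis_{\aformulafour}^{i,r}$ is still a Boolean combination of simple modulo constraints in the single variable $\avarbis$ with $\fmod{\aformulabis_{\aformulafour}^{i,r}} \subseteq \fmod{\Aformula_1}$. Hence the set of $\avarbis$-values satisfying $\aformulabis_{\aformulafour}^{i,r}$ is periodic with period~$m = \lcm(\fmod{\Aformula_1})$, non-empty by hypothesis, and independent of all other variables. For every $\aeval \models \Gamma_{i,r}$, the set $S(\aeval) \egdef \{n \in \Zed \mid \aeval\substitute{\avarbis}{n} \models \aformulafour \land \aformulabis_{\aformulafour}^{i,r}\}$ therefore contains an infinite arithmetic progression of common difference~$m$, carved out of the half-line specified by~$\aformulafour$.

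Next, I would unpack what $\aeval \models \Aformula_3^{i,r}$ means in the adapted procedure. It requires integers $v_0, \dots, v_{2\ell}$ with $\aeval(\avar) = v_0 + \dots + v_{2\ell}$ and, for each $j \in [0, 2\ell]$, the satisfaction of $\exists^{=\avar_j}\avarbis\,(\aformulafour_j \land \aformulabis_{\aformulafour_j}^{i,r})$ under $\aeval\substitute{\avar_j}{v_j}$. Taking $j^{*}$ to be the index with $\aformulafour_{j^{*}} = \aformulafour$, the semantics of $\exists^{=\avar_{j^{*}}}\avarbis$ forces $\card{S(\aeval)} = v_{j^{*}} \in \Zed$, contradicting the infinitude of $S(\aeval)$ shown in the previous paragraph. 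Thus $\Gamma_{i,r} \land \Aformula_3^{i,r}$ has no model and is equivalent to~$\false$.

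The only genuine subtlety I anticipate is ensuring that the invariants established by Step~III (the shape of $\aformulabis_{\aformulafour}^{i,r}$ as a univariate Boolean combination of simple modulo constraints with moduli dividing~$m$) are preserved verbatim in the adapted procedure; once that is checked, the argument reduces to the semantic mismatch ``infinite set vs.\ finite count'', which is immediate. No new arithmetic or combinatorial ingredients beyond those already in Section~\ref{section:quantifier-elimination} should be required.
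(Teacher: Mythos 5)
Your proof is correct and supplies exactly the argument the paper leaves implicit: the paper states \Cref{claim:psi4-inf-sol-neg} without proof, presenting it as a direct update of \Cref{claim:psi4-inf-sol} once $\exists^{\geq}$ is replaced by $\exists^{=}$. Reusing the periodicity of the univariate $\aformulabis_{\aformulafour}^{i,r}$ to upgrade the satisfiability hypothesis to infinitude of $S(\aeval)$ on the unbounded segment, and then observing that the semantics of $\exists^{=\avar_{j^*}}\avarbis$ cannot be met by an infinite witness set, is precisely the intended route.
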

The procedure then follows as described in~\Cref{section:quantifier-elimination}, using~\Cref{claim:psi4-inf-sol-neg} instead of~\Cref{claim:psi4-inf-sol} to set $\Aformula_4^{i,r}$ to $\false$ when necessary. 
Whenever $\Aformula_4^{i,r}$ is different form $\false$, instead of having the form  
\begin{center}
    $\exists \avar_2\dots\exists\avar_{\ell} \left(
        \avar \leq \avar_2 + \dots + \avar_{\ell} + c_1 + \dots + c_\ell \land 
        \bigwedge_{j \in [2,\ell]} m\avar_{j} \leq p_j(t_{j}' - t_{j-1}') + r_j
        \right)$
\end{center}
(as defined in~\Cref{section:quantifier-elimination}), it is of the form 
\begin{center}
    $\exists \avar_2\dots\exists\avar_{\ell} \left(
        \avar = \avar_2 + \dots + \avar_{\ell} + c_1 + \dots + c_\ell \land 
        \bigwedge_{j \in [2,\ell]} m\avar_{j} = p_j(t_{j}' - t_{j-1}') + r_j
        \right)$.
\end{center}
Again, this update to the procedure only reflects the differences between~$\exists^{\geq \avar}\avarbis\,\aformulabis$ and $\exists^{= \avar}\avarbis\,\aformulabis$.

\subparagraph*{Step V.}
The last step of the procedure is updated following the changes done to~$\Aformula_4^{i,r}$. In particular, for every $i \in [1,o]$ and $r \colon Z \to [m]$, 
if $\Aformula_4^{i,r} = \false$ then $\Aformula_5^{i,r} = \false$, 
otherwise 
\begin{center}
    $\Aformula_5^{i,r}$ \ $\egdef$ \ $m\avar = p_2(t_{2}' - t_{1}') + r_2 + \dots + p_\ell(t_{\ell}' - t_{\ell-1}') + r_\ell + m(c_1 + \dots + c_\ell).$
\end{center}
Let $\Aformula_5^{=} \egdef \bigvee_{i \in [1,o]}\bigvee_{r\colon Z\to [m]} (\Gamma_{i,r} \land \Aformula_5^{i,r})$ be the formula obtained by performing the QE procedure described in this section, on input~$\Aformula_0$. Recall that the formula~$\Gamma_{i,r}$ defined in the second step of the procedure is a conjunction of inequalities with variables from~$\vars{\aformula}$ together with simple modulo constraints.

One can show the following claim with minor adaptation to the proof of correctness of the QE procedure of~\Cref{section:quantifier-elimination}.

\begin{claim}
    $\Aformula_0$ $\fequiv$ $\Aformula_5^=$. The formula $\Aformula_5^=$ is a Boolean combination of linear inequalities and simple modulo constraints.
\end{claim}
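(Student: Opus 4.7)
The plan is to mirror the chain of equivalences $\Aformula_0 \fequiv \Aformula_1 \fequiv \Aformula_2 \fequiv \Aformula_3 \fequiv \Aformula_4 \fequiv \Aformula_5^=$ from the proof of~\Cref{theorem:counting-quantifier-elimination}, and to trace exactly where the modifications for the exact-count semantics alter the argument. Steps~I and~II go through unchanged: the rewrite rules in Step~I are biconditional axioms applicable under any counting quantifier, and the tautologies $\aformulabis_{\text{ord}}$ and $\aformulabis_{\text{rem}}$ used in Step~II do not depend on the quantifier either, so the analogues of~\Cref{claim:psi1,claim:psi2} transfer verbatim.

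For Step~III, the key observation is that $\segments(\avarbis,\ordering_i)$ induces a partition of the set of candidate values of $\avarbis$ under $\Gamma_{i,r}$. The cardinality of the solution set of $\aformulabis$ is therefore the \emph{sum} of the segment-wise cardinalities, which justifies replacing $\avar \leq \avar_0 + \dots + \avar_{2\ell}$ by $\avar = \avar_0 + \dots + \avar_{2\ell}$ and $\exists^{\geq\avar_j}$ by $\exists^{=\avar_j}$. The proof of~\Cref{claim:psi3} adapts line by line, with sums of inequalities replaced by sums of equalities. For Step~IV, the infinite-solutions case is handled by~\Cref{claim:psi4-inf-sol-neg}: if some unbounded segment admits a solution, then by the periodicity argument already used in~\Cref{claim:psi4-inf-sol} it admits infinitely many, so no finite exact count~$\avar_j$ can be realised and the whole conjunction must evaluate to $\false$. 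The finite-segment analogues of~\Cref{claim:psi4-remove-cq-eq,claim:psi4-remove-cq-ineq} produce equalities $\avar_j = c_j$ and $m\avar_j = p_j(t_j'-t_{j-1}')+r_j$ by the same counting arguments, using the exact count instead of an upper bound.

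The main obstacle is Step~V, where the existential quantifiers $\exists \avar_2 \cdots \exists \avar_\ell$ must be eliminated from a conjunction of \emph{equalities}. The left-to-right direction is immediate by substitution. For the right-to-left direction, given an assignment $\aeval$ that satisfies $\Gamma_{i,r} \land \Aformula_5^{i,r}$ (where $\Aformula_5^{i,r}$ is now a single equality $m\avar = \sum_{j=2}^{\ell}(p_j(t_j'-t_{j-1}')+r_j) + m\sum_{j=1}^{\ell}c_j$), I must produce integer witnesses $v_j$ for each~$\avar_j$. The crucial point is that each individual term $p_j(t_j' - t_{j-1}')+r_j$, not merely their sum, evaluates under $\aeval$ to a multiple of $m$: this divisibility is precisely what is established in the right-to-left direction of~\Cref{claim:psi5}, using $\aeval(t_{j-1}') \equiv \underline u_j \pmod m$, $\aeval(t_j') \equiv \overline u_j \pmod m$, and $r_j = -p_j(\overline u_j - \underline u_j) + m r_j'$. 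Setting $v_j \egdef \aeval(p_j(t_j'-t_{j-1}')+r_j)/m$ yields the required witness, and summing over $j$ matches $\aeval(\avar)$ by $\Aformula_5^{i,r}$.

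Finally, $\Aformula_5^=$ is quantifier-free in the required syntactic form because each $\Gamma_{i,r}$ is a conjunction of linear inequalities (from the ordering $\ordering_i$) and simple modulo constraints, while each $\Aformula_5^{i,r}$ is either $\false$ or a single linear equality, which is expressible as a conjunction of two linear inequalities. Hence $\Aformula_5^=$ is a Boolean combination of linear inequalities and simple modulo constraints, as claimed.
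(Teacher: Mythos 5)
Your proof is correct and follows precisely the adaptation the paper intends: the paper leaves this claim with only the remark that it follows ``with minor adaptation to the proof of correctness of the QE procedure of~\Cref{section:quantifier-elimination}'', and your write-up supplies exactly those details, tracing each of the five steps and invoking~\Cref{claim:psi4-inf-sol-neg} in place of~\Cref{claim:psi4-inf-sol} for the infinite-solutions case. In particular, you correctly single out the right-to-left direction of Step~V as the one place requiring a new argument, and you reuse the divisibility argument from the proof of~\Cref{claim:psi5} (each summand $p_j(t_j'-t_{j-1}')+r_j$ is individually a multiple of~$m$ under any assignment satisfying~$\Gamma_{i,r}$) to construct the integer witnesses~$v_j$, which is exactly what is needed when the inequalities become equalities.
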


\section{Missing proofs from \Cref{subsection:elimination-modulo-quantifiers}}

\ClaimModAFiASix*

\begin{proof}
    Let~$\aformula_{res}$ be the formula~$\bigvee_{s \colon (Z\cup\{\avar\}) \to [mq]} \bigwedge_{\avarfour \in Z \cup \{\avar\}} \avarfour \equiv_{mq} s(\avarfour)$. We have  
    \begin{align}
        \Aformula_5^= &\ \fequiv\ 
                        \aformula_{res} \land \Aformula_5^= \label{afiasix:e0}\\
                    &\ \fequiv\ \aformulater \label{afiasix:e1}\\
                    &\ \fequiv\ 
                    \bigvee_{i \in [1,o]}\bigvee_{s\colon (Z\cup \{\avar\}) \to [mq]} (\Gamma_{i,s} \land \Aformula_5^{i,s})
                    \label{afiasix:e2}
    \end{align}
    Indeed, the equivalence~\eqref{afiasix:e0} holds since~$\aformula_{res}$ is a tautology. The equivalence~\eqref{afiasix:e1}
    holds as $\aformulater$ is obtained from $\aformula_{res} \land \Aformula_5^=$ by distributing the atomic formulae~$\avarfour \equiv_{mq} s(\avarfour)$ of~$\aformula_{res}$ over the disjunctions given by~$\bigvee_{i \in [1,o]}$ and~$\bigvee_{r \colon Z \to [m]}$.
    The nature of~\eqref{afiasix:e2} is already explained during the procedure: since every function $r \colon Z \to [m]$ can be seen as a partial function from~$Z \cup \{\avar\}$ to~$[mq]$, after the two steps above, for every~$s \colon (Z\cup\{\avar\}) \to [mq]$ and~$i \in [1,o]$, all but one disjunct of the subformula $\bigvee_{r \colon Z \to [m]} ((\bigwedge_{\avarfour \in Z \cup \{\avar\}} \avarfour \equiv_{mq} s(\avarfour)) \land \Gamma_{i,r} \land \Aformula_5^{i,r})$ of~$\aformulater$ evaluate $\false$.
    Thanks to~\eqref{afiasix:e0}--\eqref{afiasix:e2}, we conclude: 
    \begin{align}
        \exists \avarter (\avarter \equiv_q \avar \land \Aformula_5^=) &\ \fequiv\ 
        \exists \avarter \Big(\avarter \equiv_q \avar \land \big(\bigvee_{i \in [1,o]}\bigvee_{s\colon (Z\cup \{\avar\}) \to [mq]} (\Gamma_{i,s} \land \Aformula_5^{i,s})\big)\Big)
                    \label{afiasix:e3}\\
        &\ \fequiv\ 
        \Aformula_6^{(\avar,q)} \label{afiasix:e4}
    \end{align}
    The equivalence~\eqref{afiasix:e3} follows from \eqref{afiasix:e0}--\eqref{afiasix:e2}, 
    whereas for the equivalence~\eqref{afiasix:e4} it is sufficient to distribute the existential quantifier $\exists \avarter$ and the formula~$\avarter \equiv_q \avar$ over all the disjunctions given by~$\bigvee_{i \in [1,o]}$ and~$\bigvee_{s \colon (Z\cup\{\avar\}) \to [mq]}$.
\end{proof}

\ClaimModASixASep*

\begin{proof}
    We show that, given $i \in [1,o]$ and $s \colon (Z \cup \{\avar\}) \to [mq]$, 
    ${\exists \avarter (\avarter \equiv_q \avar \land \Gamma_{i,s} \land \Aformula_5^{i,s}) \fequiv \aformulater_{i,s}}$.
    If~$\Aformula_5^{i,s} = \false$ then $\aformulater_{i,s}$ is defined as~$\false$ and the equivalence holds. Otherwise we have 
    $\Aformula_5^{i,s} \egdef m\avarter = p_2(t_{2}' - t_{1}') + r_2 + \dots + p_\ell(t_{\ell}' - t_{\ell-1}') + r_\ell + m(c_1 + \dots + c_\ell)$. 
    In this case, $\aformulater_{i,s} = \Gamma_{i,s}$ if $s(m\avar)$ is congruent to 
    $S \egdef s(p_2(t_{2}' - t_{1}') + r_2 + \dots + p_\ell(t_{\ell}' - t_{\ell-1}') + r_\ell + m(c_1 + \dots + c_\ell))$ modulo~$mq$, and otherwise $\aformulater_{i,s} = \false$. We do a case split following these to cases:
    \begin{itemize}
        \item Suppose that~$s(m\avar)$ and $S$ are not congruent modulo~$mq$. 
        In order to prove that the equivalence ${\exists \avarter (\avarter \equiv_q \avar \land \Gamma_{i,s} \land \Aformula_5^{i,s}) \fequiv \aformulater_{i,s}}$ holds, it is sufficient to show that the formula $\exists \avarter (\avarter \equiv_q \avar \land \Gamma_{i,s} \land \Aformula_5^{i,s})$ is unsatisfiable (since $\aformulater_{i,s}$ is defined as~$\false$). From 
        \begin{center}
            $m\avarter = p_2(t_{2}' - t_{1}') + r_2 + \dots + p_\ell(t_{\ell}' - t_{\ell-1}') + r_\ell + m(c_1 + \dots + c_\ell).$
        \end{center}
        and $\avarter \equiv_q \avar \fequiv m\avarter \equiv_{mq} m\avar$ 
        we derive 
        \begin{center}
            $m\avar \equiv_{mq} p_2(t_{2}' - t_{1}') + r_2 + \dots + p_\ell(t_{\ell}' - t_{\ell-1}') + r_\ell + m(c_1 + \dots + c_\ell).$
        \end{center}
        \emph{Ad absurdum}, suppose that $\aeval \models \exists \avarter (\avarter \equiv_q \avar \land \Gamma_{i,s} \land \Aformula_5^{i,s})$, for some assignment~$\aeval$.
        As~$\aeval \models \Gamma_{i,s}$, for every variable $\avarfour \in Z \cup \{\avar\}$ we have $\aeval(\avarfour) \equiv_{mq} s(\avarfour)$.
        However, this is contradictory, as
        \begin{center}
            $\aeval \models m\avar \equiv_{mq} p_2(t_{2}' - t_{1}') + r_2 + \dots + p_\ell(t_{\ell}' - t_{\ell-1}') + r_\ell + m(c_1 + \dots + c_\ell)$
        \end{center}
        implies $s(m\avar) \equiv_{mq} S$.
        \item Assume that~$s(m\avar)$ and~$S$ are congruent modulo~$mq$. By definition,~$\aformulater_{i,s} = \Gamma_{i,s}$. 
        The left-to-right direction of the equivalence~$\exists \avarter (\avarter \equiv_q \avar \land \Gamma_{i,s} \land \Aformula_5^{i,s}) \fequiv \aformulater_{i,s}$ is thus trivial: if an assignment~$\aeval$ satisfies the left hand side of this equivalence, then $\aeval \models \Gamma_{i,s}$.
        For the right-to-left direction, assume $\aeval \models \Gamma_{i,s}$. 
        Hence, for every variable $\avarfour \in Z \cup \{\avar\}$, 
        $\aeval(\avarfour)$ is equivalent to $s(\avarfour)$ modulo $mq$.
        This implies that $\aeval(m\avar)$ and $V = \aeval(p_2(t_{2}' - t_{1}') + r_2 + \dots + p_\ell(t_{\ell}' - t_{\ell-1}') + r_\ell + m(c_1 + \dots + c_\ell))$ are congruent modulo~$mq$.
        Since moreover $\aeval(m\avar) = m\aeval(\avar)$ is a multiple of $m$, we conclude that there is $v \in \Zed$ such that $m\cdot v = V$.
        Consider the assignment $\aeval\substitute{\avarter}{v}$, with $\avarter$ fresh.
        Since $\avarter \not \in Z \cup \{\avar\}$, we have $\aeval\substitute{\avarter}{v} \models \Gamma_{i,s}$.
        Moreover, from $\aeval(m\avar) \equiv_{mq} mv$, we conclude 
        that $\aeval\substitute{\avarter}{v} \models m\avar \equiv_{mq} m\avarter$. Equivalently, $\aeval\substitute{\avarter}{v} \models \avar \equiv_{q} \avarter$. 
        Lastly, by definition of~$v$,
        \begin{center}
            $\aeval\substitute{\avarter}{v} \models m\avarter = p_2(t_{2}' - t_{1}') + r_2 + \dots + p_\ell(t_{\ell}' - t_{\ell-1}') + r_\ell + m(c_1 + \dots + c_\ell).$
        \end{center}
        Therefore, $\aeval\substitute{\avarter}{v} \models \avarter \equiv_q \avar \land \Gamma_{i,s} \land \Aformula_5^{i,s}$, 
        and thus $\aeval \models \exists \avarter\,(\avarter \equiv_q \avar \land \Gamma_{i,s} \land \Aformula_5^{i,s})$.
        \qedhere
    \end{itemize}
\end{proof}

\begin{restatable}{lemma}{LemmaBoundQuantifierEliminationModulo}
    \label{lemma:bound-quantifier-elimination-modulo}
    The following bounds are established for~$\Aformula_7^{(\avar,q)}$:
    \begin{itemize}
        \setlength{\itemsep}{3pt}
        \item $\fmod{\Aformula_7^{(\avar,q)}} = \{m \cdot q\}$ with $m = k \cdot \lcm{(\fmod{\aformula})}$ and $k \leq \norminf{\homterms{\aformula}}^{\card{\homterms{\aformula}}}$,
        \item $\card{\homterms{\Aformula_7^{(\avar,q)}}} \leq \BigO{\card{\homterms{\aformula}}^2}$ and $\norminf{\homterms{\Aformula_7^{(\avar,q)}}} \leq \BigO{k \cdot \norminf{\homterms{\aformula}}}$,
        \item $\card{\linterms{\Aformula_7^{(\avar,q)}}} \leq \BigO{\card{\linterms{\aformula}}^2}$ and $\norminf{\linterms{\Aformula_7^{(\avar,q)}}} \leq \BigO{k \cdot \norminf{\linterms{\aformula}}}$.
    \end{itemize}
\end{restatable}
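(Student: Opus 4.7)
The plan is to mirror the analysis from the proof of~\Cref{lemma:bound-quantifier-elimination-threshold}, exploiting the particularly clean shape that the output formula takes in the modulo case. The argument traces how each parameter evolves from $\aformula$ through $\Aformula_1$ and $\Aformula_5^=$ down to $\Aformula_7^{(\avar,q)}$.

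First I invoke the Step~I analysis (recorded in~\Cref{theorem:bound-quantifier-elimination}) to obtain $\card{\linterms{\Aformula_1}} = \card{\linterms{\aformula}}$, $\card{\homterms{\Aformula_1}} = \card{\homterms{\aformula}}$, $\norminf{\linterms{\Aformula_1}} \leq k \cdot \norminf{\linterms{\aformula}}$, and $\norminf{\homterms{\Aformula_1}} \leq k \cdot \norminf{\homterms{\aformula}}$, with $k \leq \norminf{\homterms{\aformula}}^{\card{\homterms{\aformula}}}$. Next, by~\Cref{claim-mod-A6-A7}, the formula $\Aformula_7^{(\avar,q)}$ is a disjunction of conjunctions $\Gamma_{i,s} = \ordering_i \land \bigwedge_{\avarfour \in Z \cup \{\avar\}} \avarfour \equiv_{mq} s(\avarfour)$ (or $\false$), whose only (in)equalities have the form $\aterm \lhd \aterm'$ with $\aterm, \aterm' \in T \cup \{0\}$ and $\lhd \in \{<,=\}$, and whose only modulo constraints use modulus $mq$. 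From this structural fact, $\fmod{\Aformula_7^{(\avar,q)}} = \{mq\}$ is immediate.

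Then I will bound the linear and homogeneous terms. Since $T$ is built from $\linterms{\Aformula_1}$ by dropping $\avarbis$, I have $\card{T} \leq \card{\linterms{\aformula}}$, $\norminf{T} \leq k \cdot \norminf{\linterms{\aformula}}$, and all variable coefficients inside terms of $T$ are bounded by $k \cdot \norminf{\homterms{\aformula}}$. The number of distinct (in)equalities $\aterm \lhd \aterm'$ over $T \cup \{0\}$ is at most $\BigO{(\card{T}+1)^2} = \BigO{\card{\linterms{\aformula}}^2}$, which bounds $\card{\linterms{\Aformula_7^{(\avar,q)}}}$; restricting the same count to the at most $\card{\homterms{\aformula}}$-many homogeneous parts of terms in $T$ yields $\card{\homterms{\Aformula_7^{(\avar,q)}}} \leq \BigO{\card{\homterms{\aformula}}^2}$. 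Each difference $\aterm - \aterm'$ at most doubles the absolute values of coefficients and constants, so $\norminf{\linterms{\Aformula_7^{(\avar,q)}}} \leq \BigO{k \cdot \norminf{\linterms{\aformula}}}$ and $\norminf{\homterms{\Aformula_7^{(\avar,q)}}} \leq \BigO{k \cdot \norminf{\homterms{\aformula}}}$.

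No step is particularly delicate; the main point to verify is that Steps~VI and~VII introduce neither new linear terms beyond pairs in $T \cup \{0\}$ nor new moduli beyond $mq$, which is precisely the content of~\Cref{claim-mod-A6-A7}. This is what makes the modulo case strictly cleaner than the threshold case of~\Cref{lemma:bound-quantifier-elimination-threshold}, where the inequalities $\aterm_j' - \aterm_{j-1}' \geq i_j$ generated by~\Cref{lemma:simplifying-threshold} bring in extra factors depending on $c$ and~$e$.
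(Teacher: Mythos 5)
Your proposal is correct and follows essentially the same route as the paper's proof: both bound $\Aformula_1$ via the Step~I analysis, observe that $\Aformula_7^{(\avar,q)}$ is a disjunction of $\false$ and conjunctions $\Gamma_{i,s}$ whose (in)equalities arise from pairwise differences of $T \cup \{0\}$ and whose modulo constraints all have modulus~$mq$, and then read off the quadratic cardinality bounds and the constant-factor doubling of norms. The minor informality in how you argue that the homogeneous parts of $T \cup \{0\}$ number at most $\card{\homterms{\aformula}}+1$ matches the paper's intent exactly, so there is nothing to fix.
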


\begin{proof}
    The proof follows similarly to the one of~\Cref{lemma:bound-quantifier-elimination-threshold}.
    Without loss of generality, we assume $\card{\homterms{\aformula}}$, $\card{\linterms{\aformula}}$, $\norminf{\homterms{\aformula}}$ and $\norminf{\linterms{\aformula}}$ to be at least $1$. This hides constant factors in the exponent. 
    The quantifier-elimination procedure for~$\exists^{(\avar,q)}\avarbis\,\aformula$ starts by performing the normalisation of the coefficients of~$\avarbis$, 
    as described in Step I of~\Cref{section:quantifier-elimination}. 
    Let $\Aformula_1$ be the resulting formula. 
    As already discussed in the proof of~\Cref{theorem:bound-quantifier-elimination}, 
    we have 
    \begin{itemize}
        \setlength{\itemsep}{3pt}
        \item $\card{\linterms{\Aformula_1}} = \card{\linterms{\aformula}}$ and $\norminf{\linterms{\Aformula_1}} \leq k \cdot \norminf{\linterms{\aformula}}$,
        \item $\card{\homterms{\Aformula_1}} = \card{\homterms{\aformula}}$ and $\norminf{\homterms{\Aformula_1}} \leq k \cdot \norminf{\homterms{\aformula}}$.
    \end{itemize}
    where $k \leq \norminf{\homterms{\aformula}}^{\card{\homterms{\aformula}}}$
    is the \emph{lcm} of all coefficients of~$\avarbis$ appearing in linear inequalities.
    We recall that the formula~$\Aformula_7^{(\avar,q)}$ is defined as $\bigvee_{i \in [1,o]}\bigvee_{s\colon (Z\cup \{\avar\}) \to [mq]} \aformulater_{i,s}$, where ${Z = \freevars{\aformula}}$, $m = \lcm(\fmod{\aformula}) \cdot \norminf{\homterms{\aformula}}^{\card{\homterms{\aformula}}}$, 
    $\aformulater_{i,s}$ is~$\false$ or~$\ordering_i \land (\bigwedge_{\avarfour \in Z\cup\{\avar\}} \avarfour \equiv_{mq} s(\avarfour))$.
    Here, the sub-formula~$\ordering_i$ is an ordering on the set of terms~$T \cup \{0\}$, as defined in Step II of~\Cref{section:quantifier-elimination}. In particular, $\card{T} \leq \card{\linterms{\Aformula_1}}$, $\norminf{T} \leq \norminf{\linterms{\Aformula_1}}$
    and all coefficients of variables in terms of~$T$ are bounded by~$\norminf{\homterms{\Aformula_1}}$.
    So, even when accounting for all orderings~$(\ordering_i)_{i \in [1,o]}$,
    the formula~$\Aformula_7^{(\avar,q)}$ only contains
    at most $(\card{T} \cup \{0\})^2$ inequalities, whose magnitude of coefficients and constants doubles with respect to the one of the terms in~$\Aformula_1$.
    Lastly, every modulo constraint in~$\Aformula_7^{(\avar,q)}$ is of the form~$\avarfour \equiv_{mq} s(\avarfour)$. Therefore, we have the following bounds:
    \begin{itemize}
        \setlength{\itemsep}{3pt}
        \item $\card{\linterms{\Aformula_7^{(\avar,q)}}} \leq (\card{\linterms{\aformula}}+1)^2 \leq 4 \cdot \card{\linterms{\aformula}}^2$,
        \item $\norminf{\linterms{\Aformula_7^{(\avar,q)}}} \leq 2\cdot k \cdot \norminf{\linterms{\aformula}}$,
        \item $\card{\homterms{\Aformula_7^{(\avar,q)}}} \leq (\card{\homterms{\aformula}}+1)^2 \leq 4 \cdot \card{\homterms{\aformula}}^2$,
        \item $\norminf{\homterms{\Aformula_7^{(\avar,q)}}} \leq 2\cdot k \cdot \norminf{\homterms{\aformula}}$,
        \item $\fmod{\Aformula_7^{(\avar,q)}} = \{mq\}$, where $m = k \cdot \lcm(\fmod{\aformula})$,
    \end{itemize}
    where we recall that we are assuming~$\card{\homterms{\aformula}}$, $\card{\linterms{\aformula}}$, $\norminf{\homterms{\aformula}}$ and $\norminf{\linterms{\aformula}}$ to be at least $1$.
\end{proof}

\LemmaBoundQuantifierEliminationModuloSimplified*

\begin{proof}
This is a simple consequence of~\Cref{lemma:bound-quantifier-elimination-modulo}.
\end{proof}

\LemmaBoundQuantifierElimDQUANT*

\begin{proof} 
    Without loss of generality, we assume $\card{\homterms{\aformula}}$, $\card{\linterms{\aformula}}$, $\norminf{\homterms{\aformula}}$ and $\norminf{\linterms{\aformula}}$ to be at least $1$. This hides constant factors in the exponent.
    The proof follows very closely the one of~\Cref{lemma:bound-quantifier-elimination-threshold-d-quant}, with no surprises.
    \begin{itemize}
        \item Let $d$ be the the quantifier-depth of~$\aformula$, 
        \item let $B$ be $2$ plus~$\card{\fmod{\aformula}}$, plus the number of Boolean connectives in~$\aformula$, and 
        \item let $\bar{q}$ be the maximal integer such that~$\exists^{(\avar,\bar{q})}\avarbis$ occurs in~$\aformula$. 
    \end{itemize}
    We show the following bounds for~$\Aformula$, sharpening the ones in the statement of the lemma.

    \begin{itemize}
        \setlength{\itemsep}{3pt}
        \item $\card{\homterms{\Aformula}} \leq A_d \egdef
            (4\cdot B)^{2^d-1} \cdot \card{\homterms{\aformula}}^{2^{d}},$
        \item $\card{\fmod{\Aformula}} \leq B$,
        \item $\norminf{\homterms{\Aformula}} \leq C_d \egdef 2^{(2A_d)^{d}-1}\norminf{\homterms{\aformula}}^{(2A_d)^d}$,
        \item $\norminf{\fmod{\Aformula}} \leq D_d \egdef (\bar{q} \cdot C_d)^{(B^d-1)} \cdot \lcm(\fmod{\aformula})^{B^d}$,
        \item $\card{\linterms{\Aformula}} \leq 
            E_d \egdef (20 \cdot B \cdot {D_d}^2)^{3^d-1} \cdot \card{\linterms{\aformula}}^{3^d}$,
        \item $\norminf{\linterms{\Aformula}} \leq F_d \egdef (6 \cdot {D_d}^3 \cdot E_d)^{d} \cdot \norminf{\linterms{\aformula}}$.
    \end{itemize}
    Recall that the logic features both modulo counting quantifiers and 
    standard first-order quantifiers. To this end,
    notice that~$A_d$, $B$, $C_d$, $E_d$ and $F_d$ overapproximate the homonymous bounds given in~\Cref{lemma:bound-quantifier-elimination-threshold-d-quant} for the threshold quantifiers, for the case where all thresholds~$c$ in~$\exists^{\geq c}\avarbis$ equal~$1$ (since $\exists^{\geq 1} \aformulabis \fequiv \exists \aformulabis$). Therefore, we deal with standard first-order quantifiers exactly as in~\Cref{lemma:bound-quantifier-elimination-threshold-d-quant}. 
    Below, let us focus uniquely on~modulo counting quantifiers.

    Notice that~$A_d$, $C_d$, $D_d$, $E_d$ and $F_d$ are monotonous in $d$.
    Moreover, notice that~$B \leq \BigO{\abs{\aformula}}$.
    The proof is by induction on the quantifier-depth of~$\aformula$ ($d$ takes into account both modulo counting quantifiers and first-order quantifiers). 
    The base case fore $d = 0$, i.e.~$\aformula$ quantifier-free, is trivial.
    For the induction step, let $S = \{\exists^{(\avar_1,q_n)}\avarbis_1\,\aformulabis_1,\dots,\exists^{(\avar_n,q_n)}\avarbis_n\,\aformulabis_n\}$ be a minimal family of formulae such that~$\aformula$ is a Boolean combination of formulae from $S$. Notice that $n \leq B$.
    Let $j \in [1,n]$.
    The quantifier-depth of $\aformulabis_j$ is at most $d-1$.
    We apply the quantifier elimination procedure on~$\aformulabis_j$, obtaining the formula~$\Aformula_j$. By induction hypothesis,
    \begin{itemize}
        \setlength{\itemsep}{3pt}
        \item $\card{\homterms{\Aformula_j}} \leq 
        A_{d-1} = (4\cdot B)^{2^{d-1}-1} \cdot \card{\homterms{\aformula}}^{2^{d-1}},$
        \item $\card{\fmod{\Aformula_j}} \leq B$,
        \item $\norminf{\homterms{\Aformula_j}} \leq C_{d-1} = 2^{(2A_{d-1})^{d-1}-1}\norminf{\homterms{\aformula}}^{(2A_{d-1})^{d-1}}$,
        \item $\norminf{\fmod{\Aformula_j}} \leq D_{d-1} = (\bar{q} \cdot C_{d-1})^{(B^{d-1}-1)} \cdot \lcm(\fmod{\aformula})^{B^{d-1}}$,
        \item $\card{\linterms{\Aformula_j}} \leq E_{d-1} =
        (20 \cdot B \cdot {D_{d-1}}^2)^{(3^{d-1}-1)} \cdot \card{\linterms{\aformula}}^{3^{d-1}}$,
        \item $\norminf{\linterms{\Aformula_j}} \leq F_{d-1} = (6 \cdot {D_{d-1}}^3 \cdot E_{d-1})^{d-1} \cdot \norminf{\linterms{\aformula}}$.
    \end{itemize}
    For~$j \in [1,k]$, we consider every formula $\exists^{(\avar_j,q_j)}\avarbis_j \,\Aformula_j$ and perform 
    the quantifier elimination procedure for threshold counting quantifiers, obtaining a formula $\widetilde{\Aformula}_j$. 
    \Cref{lemma:bound-quantifier-elimination-modulo} (see the proof of this lemma for the exact bounds) 
    we have 
    \begin{itemize}
        \setlength{\itemsep}{3pt}
        \item $\card{\fmod{\widetilde{\Aformula}_j}} = \{q_j \cdot m\}$ with $m = k \cdot \lcm{(\fmod{\Aformula_j})}$ and $k \leq \norminf{\homterms{\Aformula_j}}^{\card{\homterms{\Aformula_j}}}$.
        \item $\card{\linterms{\widetilde{\Aformula}_j}} \leq 4 \cdot \card{\linterms{\Aformula_j}}^2$,
        \item $\norminf{\linterms{\widetilde{\Aformula}_j}} \leq  2 \cdot k \cdot \norminf{\linterms{\Aformula_j}}$,
        \item $\card{\homterms{\widetilde{\Aformula}_j}} \leq 4 \cdot \card{\homterms{\Aformula_j}}^2$
        \item $\norminf{\homterms{\widetilde{\Aformula}_j}} \leq  2 \cdot k \cdot \norminf{\homterms{\Aformula_j}} \leq 2 \cdot \norminf{\homterms{\Aformula_j}}^{2\card{\homterms{\Aformula_j}}}$,
    \end{itemize}
    We derive:
    \begin{itemize}
        \item $\card{\homterms{\widetilde{\Aformula}_j}} \leq 4 \cdot ((4\cdot B)^{2^{d-1}-1} \cdot \card{\homterms{\aformula}})^{2^{d-1}})^2 \leq 
        B^{2^d-2}4^{2^d-1}\card{\homterms{\aformula}}^{2^d} = \frac{A_d}{B}$.
        \item $
        \begin{aligned}[t]
        \norminf{\homterms{\widetilde{\Aformula}_j}} &\leq  2 (2^{(2A_{d-1})^{d-1}-1}\norminf{\homterms{\aformula}}^{(2A_{d-1})^{d-1}})^{2A_{d-1}}\\
        &\leq 2^{(2A_{d-1})^{d}-1}\norminf{\homterms{\aformula}}^{(2A_{d-1})^d} \leq 2^{(2A_{d})^{d}-1}\norminf{\homterms{\aformula}}^{(2A_{d})^d} = C_d.
        \end{aligned}$
    \end{itemize}
    Notice that $k \leq \norminf{\homterms{\Psi_j}}^{\card{\homterms{\Psi_j}}} \leq C_d$ and that $\lcm(\fmod{\Aformula_j}) \leq \norminf{\fmod{\Aformula_j}}^B$.
    \begin{itemize}
        \item $
        \begin{aligned}[t]
        m &\leq  q_j \cdot C_d \cdot \lcm{(\fmod{\Aformula_j})} \leq \bar{q} \cdot C_d \cdot \norminf{\fmod{\Aformula_j}}^B\\
        &\leq \bar{q} \cdot C_d \cdot ((\bar{q} \cdot C_{d-1})^{(B^{d-1}-1)} \cdot \lcm(\fmod{\aformula})^{B^{d-1}})^B\\
        &\leq (\bar{q} \cdot C_d)^{(B^d-B+1)} \cdot \lcm(\fmod{\aformula})^{B^d}
        \leq (\bar{q} \cdot C_d)^{(B^d-1)} \cdot \lcm(\fmod{\aformula})^{B^d} = D_d,
        \end{aligned}$
        \item[] 
        where we recall that we assume $B \geq 2$.
        Hence, $\card{\fmod{\widetilde{\Aformula}_j}} = 1$ and $\norminf{\fmod{\widetilde{\Aformula}}_j} \leq D_d$.
        \item $
            \card{\linterms{\widetilde{\Aformula}_j}} \leq 4 \cdot ((20 \cdot B \cdot {D_{d-1}}^2)^{3^{d-1}-1} \card{\linterms{\aformula}})^{3^{d-1}})^2 \leq 
            B^{3^d-2} (20 \cdot {D_d}^2)^{3^d-1}\card{\linterms{\aformula}}^{3^d} \leq \frac{E_d}{B}.$
        \item $\norminf{\linterms{\widetilde{\Aformula}_j}} \leq  2 \cdot C_d \cdot \norminf{\linterms{\Psi_j}} \leq (2 \cdot C_d) \cdot (6 \cdot {D_{d-1}}^3 \cdot E_{d-1})^{d-1} \cdot \norminf{\linterms{\aformula}} \leq F_d$, 
        
        where we notice that~$C_d \leq D_d$.
    \end{itemize}
    For every $j \in [1,n]$, we replace in $\aformula$ 
    each occurrence of the formula~$\exists^{(\avar_j,q_j)}\avar_j\,\aformulabis_j$ 
    not in the scope of a quantification with the formula~$\widetilde{\Aformula}_j$. We obtain the formula~$\Aformula$ that is a Boolean combination of 
    at most $B$ formulae from $\{\widetilde{\Aformula}_1,\dots,\widetilde{\Aformula}_n\}$. So, $\Aformula$ is quantifier-free.
    We have:
    \begin{itemize}
        \item $\card{\homterms{\Aformula}} = \sum_{j = 1}^n \card{\homterms{\widetilde{\Aformula}_j}} \leq A_d$,
        \item $\card{\fmod{\Aformula}} = \sum_{j = 1}^n \card{\fmod{\widetilde{\Aformula}_j}} \leq B$, 
        \item 
        $\norminf{\homterms{\Aformula}} \leq \max\{\norminf{\homterms{\widetilde{\Aformula}_j}}\mid j \in [1,n]\} \leq C_d$,
        \item 
        ${\norminf{\fmod{\Aformula}} \leq \max\{\norminf{\fmod{\widetilde{\Aformula}_j}}\mid j \in [1,n]\} \leq D_d}$,
        \item
        $\card{\linterms{\Aformula}} = \sum_{j = 1}^n \card{\linterms{\widetilde{\Aformula}_j}} \leq E_d$,
        \item 
        $\norminf{\linterms{\Aformula}} \leq \max\{\norminf{\linterms{\widetilde{\Aformula}_j}}\mid j \in [1,n]\} \leq F_d$.
        \qedhere
    \end{itemize}
\end{proof}

\end{document}